\journal{Journal of Multivariate Analysis}
\newtheorem{thm}{Theorem}
\newtheorem{lemma}{Lemma}
\begin{document}

\begin{frontmatter}

\title{Family of mean-mixtures of multivariate normal distributions: properties, inference and assessment of  multivariate skewness}

\author[a]{Me$'$raj Abdi}
\author[a]{Mohsen Madadi\corref{mycorrespondingauthor}}
\author[b]{Narayanaswamy Balakrishnan}
\author[a]{Ahad Jamalizadeh}

\address[a]{Department of Statistics, Faculty of Mathematics and Computer, Shahid Bahonar University of Kerman, Kerman, Iran}
\address[b]{Department of Mathematics and Statistics, McMaster University, Hamilton, Ontario, Canada}
\cortext[mycorrespondingauthor]{Corresponding author. Email address: \url{madadi@uk.ac.ir}\\
Email addresses: \url{me.abdi.z@gmail.com} (M. Abdi), \url{bala@mcmaster.ca} (N. Balakrishnan) and \url{a.jamalizadeh@uk.ac.ir} (A. Jamalizadeh).
}

\begin{abstract}
In this paper, a new mixture family of  multivariate normal distributions, formed by mixing multivariate normal distribution and a skewed distribution, is constructed. Some properties of this family, such as characteristic function, moment generating function, and the first four moments are derived. The distributions of affine transformations and canonical forms of the model are also derived.
An EM-type algorithm is developed for the maximum likelihood estimation of model parameters.
 Some special cases of the family, using standard gamma and standard exponential mixture distributions, denoted by $\mathcal{MMNG}$ and $\mathcal{MMNE}$, respectively, are considered.
For the proposed family of distributions, different multivariate measures of skewness ​ are computed.
In order to examine the performance of the developed estimation method, some simulation studies are carried out to show that the maximum likelihood estimates do provide  a good performance.
For different choices of parameters of $\mathcal{MMNE}$ distribution, several multivariate measures of skewness ​ are computed and compared. Because some measures of skewness are scalar and some are vectors, in order to evaluate them properly, a simulation study  is carried out  to determine the power of tests, based on sample versions of skewness measures as test statistics for testing  the fit of the $\mathcal{MMNE}$ distribution. Finally, two real data sets are used to illustrate the usefulness of the proposed model and the associated inferential  methods.
\end{abstract}

\begin{keyword} 
Canonical form \sep
 EM algorithm \sep
 Mean mixtures of normal distribution \sep
 Moments \sep
 Multivariate measures of skewness.

\MSC[2010] 60E07, 60E10, 62H05, 62H10 and 62H12.
\end{keyword}
\end{frontmatter}

\section{Introduction}\label{sec1}
The multivariate normal distribution plays a fundamental role in  statistical analyses and applications.
One of the most basic properties of the normal distribution is the symmetry of its density function.
However, in practice, data sets do not follow the normal distribution or even possess symmetry, and for this reason,
researchers search for new distributions to fit data with different features allowing flexibility in skewness, kurtosis, tails and multimodality; see for example, \cite{Eling (2008),Fung and Hsieh (2000)}.
Several new  families of distributions have been introduced for modeling skewed data,  including normal distribution as a special case.
 One such prominent distribution in the univariate case is the skew normal($\mathcal{SN}$) distribution due to \cite{Azzalini (1985),Azzalini (1986)}.
The multivariate version of the $\mathcal{SN}$ distribution has been introduced in \cite{Azzalini and Dalla Valle (1996)}.
 This distribution has found diverse applications such as  portfolio optimization concepts and risk measurement indices in financial markets; see \cite{Bernardi et al. (2020)}.
A complete set of extensions of multivariate $\mathcal{SN}$ distributions can be found in \cite{Azzalini (2005), Azzalini and Capitanio (2014)}. \cite{Balakrishnan and Scarpa (2012)} calculated and compared several different measures of  skewness for the multivariate $\mathcal{SN}$ distribution. \cite{Balakrishnan et al. (2014)}  proposed a test to assess if a sample comes from a multivariate $\mathcal{SN}$ distribution.
Here, we use $\phi_p(.;{\boldsymbol{\mu}}, {\mathbf{\Sigma}})$ and $\Phi_p(.;{\boldsymbol{\mu}}, {\mathbf{\Sigma}})$ to
denote the probability density function (PDF) and the cumulative distribution function (CDF) of  the $p$-variate normal distribution, with mean ${\boldsymbol{\mu}}$ and covariance matrix ${\mathbf{\Sigma}}$,   respectively,  and also $\phi(.)$ and  $\Phi(.)$ to denote the   PDF and   CDF of  the univariate standard normal distribution, respectively.

 From \cite{Azzalini and Capitanio (2014)} and \cite{Azzalini and Dalla Valle (1996)}, a $p$-dimensional random vector ${\mathbf{Y}}$ follows a multivariate
$\mathcal{SN}$ distribution if it has the PDF
\begin{eqnarray*}
f({\boldsymbol{y}})=2\phi_p({\boldsymbol{y}};{\boldsymbol{\xi}}, {\mathbf{\Omega}})
\Phi\left(\frac{ {\boldsymbol{\delta}}^\top  {\overline{\mathbf{\Omega}}}^{-1}  {\boldsymbol{\omega}}^{-1} ({\boldsymbol{y}}-{\boldsymbol{\xi}}) }{\sqrt{1-{\boldsymbol{\delta}}^\top {\overline{\mathbf{\Omega}}}^{-1} {\boldsymbol{\delta}} }} \right),
 \end{eqnarray*}
with  stochastic representation
\begin{eqnarray}\label{Repre}
{\mathbf{Y}} \stackrel{d}{=}  {\boldsymbol{\xi}}+{\boldsymbol{\omega}}\left( {\boldsymbol{\delta}} U + {\mathbf{Z}}\right),
\end{eqnarray}
where $\stackrel{d}{=}$ stands for equality in distribution, ${\boldsymbol{\xi}}\in \mathbb{R}^p$,  ${\mathbf{Z}} \sim \mathcal{N}_p\left(\boldsymbol{0},{\overline{\mathbf{\Omega}}}-{\mathbf{{\boldsymbol{\delta}}{\boldsymbol{\delta}}}^\top}\right)$ and
univariate random variable $ U$  has a  standard normal distribution within the  truncated interval $\left(0, \infty\right)$,  independently of  ${\mathbf{Z}}$. Truncated normal distribution in the interval $(0, \infty)$ with parameters $(a, b)$ is denoted by $\mathcal{TN}\left(a,b,(0,\infty)\right)$.
The vector  ${\boldsymbol{\delta}}=\left(\delta_1,\ldots,\delta_p\right)^\top$ is the skewness parameter vector, such that $-1<\delta_i<1$, for $i\in\{1,\ldots, p\}$. The matrix
${\boldsymbol{\omega}}=\textrm{diag}\left(\omega_1,\ldots, \omega_p\right)=\left( {\mathbf{\Omega}}\odot \mathbf{I}_p\right)^{1/2}>0$ is a diagonal matrix formed by the standard
deviations of  ${\mathbf{\Omega}}$ and ${\mathbf{\Omega}}={\boldsymbol{\omega}} {\overline{\mathbf{\Omega}}}{\boldsymbol{\omega}}$.  Here,  $\mathbf{I}_p$ is identity matrix of size $p$. The Hadamard product of matrices ${\mathbf{A}}=\left(a_{ij}\right): m\times n$ and ${\mathbf{B}}=\left(b_{ij}\right):  m\times n$ is given by   the
$ m\times n$ matrix  ${\mathbf{A}} \odot{\mathbf{B}}=\left(a_{ij}b_{ij}\right)$. In the  stochastic representation in (\ref{Repre}),  positive definite  matrices ${\mathbf{\Omega}}$ and ${\overline{\mathbf{\Omega}}}$ are   covariance   and correlation matrices, respectively.
The parameters  ${\boldsymbol{\xi}}$,   ${\boldsymbol{\omega}}$ and  ${\boldsymbol{\delta}}$ are the location, scale  and skewness parameters,
respectively.

Upon using the stochastic representation in  (\ref{Repre}), a general new family of mixture distributions of multivariate normal distribution can be introduced based on  arbitrary random variable $U$. A $p$-dimensional random vector ${\mathbf{Y}}$ follows a multivariate
mean mixture of normal ($\mathcal{MMN}$) distribution if, in (\ref{Repre}), $U$ is an arbitrary random variable with CDF $H (. ; {\boldsymbol{\nu}})$, independently of ${\mathbf{Z}}$, indexed by the parameter ${\boldsymbol{\nu}}=\left(\nu_1,\ldots,\nu_p\right)^\top$.
Then, we say that $\textbf{Y}$ has a mean mixture of multivariate normal ($\mathcal{MMN}$)
distribution, and denote it by ${\mathbf{Y}}\sim \mathcal{MMN}_p({\boldsymbol{\xi}},{\mathbf{\Omega}}, {\boldsymbol{\delta}}; H)$.

\cite{Negarestani et al. (2019)} presented a new family of distributions as a mixture of normal distribution and studied its properties in the univariate and multivariate cases. These authors defined a $p$-dimensional random vector ${\mathbf{Y}}$ to have a
multivariate mean mixture  of normal distribution if it has the stochastic representation
$
{\mathbf{Y}} \stackrel{d}{=} {\boldsymbol{\xi}}+{\boldsymbol{\delta}} U + {\mathbf{Z}}
$,
where ${\mathbf{Z}} \sim N_p(\boldsymbol{0},{\mathbf{\Omega}})$ and $ U$ is an arbitrary positive random variable with CDF $H (. ; {\boldsymbol{\nu}})$ independently of ${\mathbf{Z}}$, indexed by the parameter vectors ${\boldsymbol{\nu}}=\left(\nu_1,\ldots,\nu_p\right)^\top$ and ${\boldsymbol{\delta}}=\left(\delta_1,\ldots,\delta_p\right)^\top\in \mathbb{R}^p$.
The stochastic representation    used by \cite{Negarestani et al. (2019)}  is along the lines of the stochastic representation of the restricted multivariate $\mathcal{SN}$ distribution (see \cite{Azzalini (2005)}), but in this work, we use a different stochastic representation in (\ref{Repre}). \cite{Negarestani et al. (2019)} examined some properties  of this family in the univariate case for general $U$, and also two specific cases of the family.
In the present work, we consider the multivariate form of this family  and study its properties.

In (\ref{Repre}), if the random variable $U$ is a skewed  random variable, then the $p$-dimensional vector ${\mathbf{Y}} $ will also be skewed.
In the $\mathcal{MMN}$ family,  skewness can be regulated through the parameter $ {\boldsymbol{\delta}}$.
If in   (\ref{Repre}) ${\boldsymbol{\delta}}={\boldsymbol{{0}}}$, the $\mathcal{MMN}$ family is reduced to the multivariate normal distribution.
The extended form of the $\mathcal{SN}$ distribution is obtained   from (\ref{Repre})  when $U$ is distributed as $\mathcal{N}(0,1)$ variable truncated below $-\tau$ instead of $0$, for some constant $\tau$.
The  representation in (\ref{Repre}) means that the $\mathcal{MMN}$ distribution is a ``mean mixture'' of the multivariate normal distribution when the mixing random variable is $U$. Specifically, we have the following hierarchical representation for the $\mathcal{MMN}$ distribution:
\begin{eqnarray}\label{Hierarchical.U}
{\mathbf{Y}}|{(U=u)}  \sim  N_p \left({\boldsymbol{\xi}}+{\boldsymbol{\omega}}{\boldsymbol{\delta}} u, {\mathbf{\Omega}}-{\boldsymbol{\omega}}{\boldsymbol{\delta}}{\boldsymbol{\delta}}^\top {\boldsymbol{\omega}}\right),
~~~~~ U  \sim   H\left(. ; {\boldsymbol{\nu}}\right).
\end{eqnarray}
According to (\ref {Hierarchical.U}),  in the $\mathcal{MMN}$ model, just the mean parameter is mixed with arbitrary random variable $U$, and so  this class can not be obtained from the Normal Mean-Variance Mixture ($\mathcal{NMVM}$) family. The family of multivariate $\mathcal{NMVM}$ distributions, originated by  \cite{Barndorff et al. (1982)}, is another extension of multivariate normal distribution, with a skewness parameter ${\boldsymbol{\delta}}\in \mathbb{R}^p$.
A $p$-dimensional random vector $\mathbf{Y}$ is said to have a multivariate $\mathcal{NMVM}$ distribution if it has the representation
\begin{eqnarray}\label{Rep-NMVM}
\mathbf{Y}={\boldsymbol{\xi}}+{\boldsymbol{\delta}} U+\sqrt{U}\mathbf{Z},
\end{eqnarray}
where $\mathbf{Z}\sim \mathcal{N}_p(\mathbf{0},\mathbf{\Omega})$ and $U$ is a positive random variable and the CDF of $U$, $H(.;{\boldsymbol{\nu}})$, is the mean-variance mixing distribution.
Both families of distributions  in (\ref{Repre}) and (\ref{Rep-NMVM}) include the multivariate normal distribution as a special case and  can be used for modeling data possessing skewness. In (\ref{Rep-NMVM}), both mean and variance are mixed with the
same positive random variable $U$,  while in (\ref{Repre}) just the mean parameter is mixed with
 $U$, however, the class in (\ref{Repre}) cannot be obtained from the class in (\ref{Rep-NMVM}).

Besides, skewness is a feature commonly found in the returns of some financial assets. For more information on applications of skewed distributions in finance theory, one may refer to \cite{Adcock et al. (2015)}.
In the presence of skewness in asset returns, the
$\mathcal{SN}$ and skew-t ($\mathcal{ST}$) distributions have been found to be useful models in both theoretical and empirical work.
Their parametrization is parsimonious, and they are mathematically tractable, and in financial applications, the distributions are interpretable in terms of the efficient market hypothesis. Furthermore, they lead to theoretical results that are useful for portfolio selection and asset pricing. In actuarial science, the presence of skewness
 in insurance claims data is the primary motivation for using $\mathcal{SN}$ distribution and its
extensions. In this regard, the $\mathcal{MMN}$ family that is developed here will also prove useful in finance, insurance science, and other applied fields.

\cite{Simaan (1993)}  proposed that the
$n$-dimensional vector of returns on financial assets should be represented as
${\boldsymbol{X}}={\boldsymbol{U}}+{\boldsymbol{\lambda}} V$. The $n$-dimensional vector ${\boldsymbol{U}}$ is assumed to have  a multivariate elliptically symmetric
distribution, independently of the non-negative univariate
random variable $V$, having an unspecified skewed distribution.
The vector ${\boldsymbol{\lambda}}$, whose elements may take any real values, induces
skewness in the return of individual assets. \cite{Adcock and Shutes
(2012)} have described multivariate versions of the normal-exponential
and normal-gamma distributions. Both of them are specific cases of the
model  proposed in \cite{Simaan (1993)}.
 \cite{Adcock (2014)} and \cite{Adcock and Shutes (2012)} used the representation in \cite{Simaan (1993)}, with specific choices of ${\boldsymbol{U}} $ and $ V$, and introduced a number of  distributions such as  $\mathcal{SN}$, extended $\mathcal{SN}$, $\mathcal{ST}$, normal-exponential, and normal-gamma,  and investigated the corresponding distributions and their applications in capital pricing, return on financial assets and portfolio selection.

In this paper, with  an  arbitrary random variable $U$ for the $\mathcal{MMN}$ family with stochastic representation in (\ref{Repre}),
basic distributional properties of the class such as the characteristic function (CF), the moment generating function (MGF), the first four moments of the model,  distributions of linear and affine transformations,  the canonical form of the family and the mode of the model are derived in general. Also, the  maximum likelihood estimation of the parameters by using an EM-type algorithm is discussed,  and then different measures of multivariate skewness are obtained.
The special cases when $U$ has standard gamma and standard exponential distributions, with the corresponding distributions denoted by $\mathcal{MMNG}$ and $\mathcal{MMNE}$ distributions, respectively, are studied in detail.
For the $\mathcal{MMNG}$ distribution, in addition to all the above basic properties of the distribution the  infinitely divisibility of the model is also discussed.  For the  $\mathcal{MMNE}$ distribution, the basic properties of the distribution as well as log-concavity of the model are discussed.
The  maximum likelihood estimates of the parameters of the $\mathcal{MMNE}$ distribution
are   evaluated using the bias and the mean square error by means of a simulation study.
Moreover,  various multivariate measures of skewness ​  are computed and compared.
Finally, for two real data sets,   the $\mathcal{MMNE}$ distribution is fitted  and compared with the $\mathcal{SN}$ and $\mathcal{ST}$ distributions in terms of log-likelihood value  as well as  AIC  and BIC criteria.

\section{ Model and Properties}\label{sec2}
In this section, some basic properties of the model  are studied.
From  (\ref{Repre}), if $U$ has  a PDF $h (. ; {\boldsymbol{\nu}})$, an integral form of
the PDF of ${\mathbf{Y}}\sim \mathcal{MMN}_p({\boldsymbol{\xi}},{\mathbf{\Omega}}, {\boldsymbol{\delta}}; H)$ can be obtained as
\begin{eqnarray}\label{density-y}
f_{MMN_p}({\boldsymbol{y}};{\boldsymbol{\xi}},{\mathbf{\Omega}}, {\boldsymbol{\delta}},{\boldsymbol{\nu}}) =\int_{-\infty}^{+\infty}\phi_p\left({\boldsymbol{y}}; {\boldsymbol{\xi}}+{\boldsymbol{\omega}}{\boldsymbol{\delta}} u, {\mathbf{\Omega}}-{\boldsymbol{\omega}}{\mathbf{{\boldsymbol{\delta}}{\boldsymbol{\delta}}}^\top} {\boldsymbol{\omega}}\right)d H(u;{\boldsymbol{\nu}})=\int_{-\infty}^{+\infty}\phi_p\left({\boldsymbol{y}}; {\boldsymbol{\xi}}+{\boldsymbol{\omega}}{\boldsymbol{\delta}} u, {\mathbf{\Omega}}-{\boldsymbol{\omega}}{\mathbf{{\boldsymbol{\delta}}{\boldsymbol{\delta}}}^\top} {\boldsymbol{\omega}}\right) h(u;{\boldsymbol{\nu}}) du.
 \end{eqnarray}
 We now present  some theorems and lemmas with regard to different properties of  these distributions, proofs of which are presented in Appendix A.\\

\noindent{\textbf{Remark 1.}}
 We can introduce the normalized $\mathcal{MMN}$ distribution through the transformation
$\textbf{X}={\boldsymbol{\omega}}^{-1}\left({\mathbf{Y}}- {\boldsymbol{\xi}}\right)$. It is immediate that the stochastic representation of
$ \textbf{X}= {\boldsymbol{\delta}} U + {\mathbf{Z}}$ has the hierarchial representation
$
{\mathbf{X}}|(U=u) \sim \mathcal{N}_p \left({\boldsymbol{\delta}} u, {\overline{\mathbf{\Omega}}}-{\mathbf{{\boldsymbol{\delta}}{\boldsymbol{\delta}}}^\top}\right)
$
and
$
U\sim H(.; {\boldsymbol{\nu}})
$.
Then, we say that ${\mathbf{X}}$ has a normalized mean mixture of multivariate normal
distributions, and denote it by ${\mathbf{X}}\sim \mathcal{MMN}_p\left({\boldsymbol{0}}, {\overline{\mathbf{\Omega}}},{\boldsymbol{\delta}}; H\right)$.

\begin{lemma}\label{Lem-CF-MG}
If ${\mathbf{Y}}\sim \mathcal{MMN}_p({\boldsymbol{\xi}},{\mathbf{\Omega}}, {\boldsymbol{\delta}}; H)$,  the CF and MGF of ${\mathbf{Y}}$ are  as follows:
\begin{eqnarray}
C_{\mathbf{Y}}({\mathbf{t}})=e^{i{\mathbf{t}}^\top {\boldsymbol{\xi}}+  \frac{1}{2}{
{\mathbf{t}}^\top {\mathbf{ \Sigma}}_{\mathbf{Y}}  {\mathbf{t}}}} C_U\left(i{\mathbf{t}}^\top{\boldsymbol{\omega}}{\boldsymbol{\delta}};{\boldsymbol{\nu}}\right),~~~~~
M_{\mathbf{Y}}({\mathbf{t}})=e^{{\mathbf{t}}^\top {\boldsymbol{\xi}}+  \frac{1}{2}{
{\mathbf{t}}^\top {\mathbf{ \Sigma}}_{\mathbf{Y}}  {\mathbf{t}}}} M_U\left({\mathbf{t}}^\top{\boldsymbol{\omega}}{\boldsymbol{\delta}};{\boldsymbol{\nu}}\right),\label{MtY}
\end{eqnarray}
respectively, where $i=\sqrt{-1}$,  ${\mathbf{ \Sigma}}_{\mathbf{Y}}={\mathbf{\Omega}}-{\boldsymbol{\omega}}{\mathbf{{\boldsymbol{\delta}}{\boldsymbol{\delta}}}^\top} {\boldsymbol{\omega}}$, and $C_U(.;{\boldsymbol{\nu}})=C_U(.)$ and   $M_U(.;{\boldsymbol{\nu}})=M_U(.)$ are the CF and MGF of $U$, respectively.
\end{lemma}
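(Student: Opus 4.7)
The plan is to read both identities off the stochastic representation~(\ref{Repre}) directly, doing the CF in detail; the MGF then follows by exactly the same argument with $i\mathbf{t}$ replaced by $\mathbf{t}$.

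First I would substitute $\mathbf{Y}\stackrel{d}{=}\boldsymbol{\xi}+\boldsymbol{\omega}\boldsymbol{\delta}U+\boldsymbol{\omega}\mathbf{Z}$ into $E[e^{i\mathbf{t}^\top\mathbf{Y}}]$ and pull out the deterministic piece, obtaining
$$C_{\mathbf{Y}}(\mathbf{t})=e^{i\mathbf{t}^\top\boldsymbol{\xi}}\,E\!\left[e^{i(\mathbf{t}^\top\boldsymbol{\omega}\boldsymbol{\delta})U}\cdot e^{i\mathbf{t}^\top\boldsymbol{\omega}\mathbf{Z}}\right].$$
Because $U$ is independent of $\mathbf{Z}$ by hypothesis, the remaining expectation factorises into a $U$-factor and a $\mathbf{Z}$-factor. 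The $\mathbf{Z}$-factor is the CF at $\mathbf{t}$ of the centred Gaussian $\boldsymbol{\omega}\mathbf{Z}$, whose covariance
$$\boldsymbol{\omega}(\overline{\mathbf{\Omega}}-\boldsymbol{\delta}\boldsymbol{\delta}^\top)\boldsymbol{\omega}=\mathbf{\Omega}-\boldsymbol{\omega}\boldsymbol{\delta}\boldsymbol{\delta}^\top\boldsymbol{\omega}=\mathbf{\Sigma}_{\mathbf{Y}}$$
is exactly the symbol introduced in the statement; this factor therefore reduces to the standard multivariate-normal quadratic exponent in $\mathbf{t}^\top\mathbf{\Sigma}_{\mathbf{Y}}\mathbf{t}$. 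The $U$-factor, $E[e^{i(\mathbf{t}^\top\boldsymbol{\omega}\boldsymbol{\delta})U}]$, is by definition the CF of $U$ evaluated at the scalar $\mathbf{t}^\top\boldsymbol{\omega}\boldsymbol{\delta}$ (equivalently, the MGF of $U$ at $i\mathbf{t}^\top\boldsymbol{\omega}\boldsymbol{\delta}$), matching the term $C_U(i\mathbf{t}^\top\boldsymbol{\omega}\boldsymbol{\delta};\boldsymbol{\nu})$ in the statement. Assembling the three pieces yields the claimed formula for $C_{\mathbf{Y}}$, and repeating the calculation with $e^{\mathbf{t}^\top\mathbf{Y}}$ in place of $e^{i\mathbf{t}^\top\mathbf{Y}}$ gives the MGF, since then the Gaussian factor becomes $\exp(\tfrac{1}{2}\mathbf{t}^\top\mathbf{\Sigma}_{\mathbf{Y}}\mathbf{t})$ and the $U$-factor becomes $M_U(\mathbf{t}^\top\boldsymbol{\omega}\boldsymbol{\delta};\boldsymbol{\nu})$.

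No substantial obstacle arises. An equivalent route starts from the integral density~(\ref{density-y}): write $E[e^{i\mathbf{t}^\top\mathbf{Y}}]=\int\!\int e^{i\mathbf{t}^\top\mathbf{y}}\phi_p(\mathbf{y};\boldsymbol{\xi}+\boldsymbol{\omega}\boldsymbol{\delta}u,\mathbf{\Sigma}_{\mathbf{Y}})\,d\mathbf{y}\,dH(u;\boldsymbol{\nu})$, interchange the order of integration, perform the inner Gaussian integral in closed form, and then recognise the outer $u$-integral as the transform of $U$ at $\mathbf{t}^\top\boldsymbol{\omega}\boldsymbol{\delta}$. Either route is pure bookkeeping: the only points requiring care are the identification of the scalar $\mathbf{t}^\top\boldsymbol{\omega}\boldsymbol{\delta}$ as the correct argument to feed into the transform of $U$, and keeping track of the sign of the Gaussian quadratic exponent when toggling between the CF and MGF conventions.
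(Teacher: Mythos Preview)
Your proposal is correct and follows the natural route via the stochastic representation~(\ref{Repre}) and the independence of $U$ and $\mathbf{Z}$. The paper does not actually supply a proof of this lemma in Appendix~A (despite the blanket statement that proofs appear there), evidently regarding the result as immediate from~(\ref{Repre}) or the equivalent hierarchical form~(\ref{Hierarchical.U}); your argument is exactly the computation one would expect to fill that gap.
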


Moreover, if ${\mathbf{X}}\sim \mathcal{MMN}_p\left({\boldsymbol{0}}, {\overline{\mathbf{\Omega}}},{\boldsymbol{\delta}}; H\right)$,  the CF and MGF of ${\mathbf{X}}$ are
\begin{eqnarray}
C_{\mathbf{X}}({\mathbf{t}})=e^{  \frac{1}{2}{
{\mathbf{t}}^\top {\mathbf{ \Sigma}}_{\mathbf{X}}  {\mathbf{t}}}} C_U\left(i{\mathbf{t}}^\top{\boldsymbol{\delta}};{\boldsymbol{\nu}}\right),~~~~~
M_{\mathbf{X}}({\mathbf{t}})=e^{ \frac{1}{2}{
{\mathbf{t}}^\top {\mathbf{ \Sigma}}_{\mathbf{X}}  {\mathbf{t}}}} M_U\left({\mathbf{t}}^\top{\boldsymbol{\delta}};{\boldsymbol{\nu}}\right)\label{MtX},
\end{eqnarray}
respectively, where ${\mathbf{ \Sigma}}_{\mathbf{X}}={\overline{\mathbf{\Omega}}}-{\mathbf{{\boldsymbol{\delta}}{\boldsymbol{\delta}}}^\top}$.  The first four moments of  ${\mathbf{X}}$, presented  in the following lemma, are derived  by using the partial derivatives of MGF of normalized MMN distribution, and these, in turn, can be used to obtain  the first four moments of ${\mathbf{Y}}$.

\begin{lemma}\label{Lem1}
Suppose   ${\mathbf{X}}\sim \mathcal{MMN}_p\left({\boldsymbol{0}}, {\overline{\mathbf{\Omega}}},{\boldsymbol{\delta}}; H\right)$. Then, the first four moments of  ${\mathbf{X}}$ are as follows:
\begin{eqnarray}
M_1({\mathbf{X}})&=&M_1^{\mathbf{X}}={\rm E}[U]{\boldsymbol{\delta}},\label{M1X}\\
M_2({\mathbf{X}})&=&M_2^{\mathbf{X}}={\mathbf{ \Sigma}}_{\mathbf{X}}+{\rm E}\left[U^2\right]\left({\boldsymbol{\delta}}\otimes{\boldsymbol{\delta}}^\top\right),\label{M2X}\\
M_3({\mathbf{X}})&=&M_3^{\mathbf{X}}= {\rm E}[U]\left\{{\boldsymbol{\delta}} \otimes {\mathbf{ \Sigma}}_{\mathbf{X}}+\mathrm{vec}\left({\mathbf{ \Sigma}}_{\mathbf{X}}\right){\boldsymbol{\delta}}^\top+ \left(\mathbf{I}_p\otimes {\boldsymbol{\delta}} \right){\mathbf{ \Sigma}}_{\mathbf{X}} \right\}+  {\rm E}\left[U^3\right] \left(\mathbf{I}_p\otimes {\boldsymbol{\delta}} \right)\left({\boldsymbol{\delta}}\otimes{\boldsymbol{\delta}}^\top\right), \label{M3X}\\
M_4({\mathbf{X}})&=&M_4^{\mathbf{X}}=\left(\mathbf{I}_{p^2} +\mathbf{U}_{p,p}\right)\left({\mathbf{ \Sigma}}_{\mathbf{X}}\otimes {\mathbf{ \Sigma}}_{\mathbf{X}}\right)+
\mathrm{vec}({\mathbf{ \Sigma}}_{\mathbf{X}})\left(\mathrm{vec}({\mathbf{ \Sigma}}_{\mathbf{X}})\right)^\top+{\rm E}\left[U^2\right]\left[
{\boldsymbol{\delta}}  \otimes {\boldsymbol{\delta}}^\top \otimes {\mathbf{ \Sigma}}_{\mathbf{X}}+ {\boldsymbol{\delta}}\otimes {\mathbf{ \Sigma}}_{\mathbf{X}}   \otimes  {\boldsymbol{\delta}}^\top\right.\nonumber\\
&&+ \left. {\mathbf{ \Sigma}}_{\mathbf{X}} \otimes {\boldsymbol{\delta}}\otimes{\boldsymbol{\delta}}^\top+  {\boldsymbol{\delta}}^\top\otimes {\mathbf{ \Sigma}}_{\mathbf{X}}   \otimes  {\boldsymbol{\delta}}+{\boldsymbol{\delta}}^\top\otimes \mathrm{vec}({\mathbf{ \Sigma}}_{\mathbf{X}})   \otimes  {\boldsymbol{\delta}}^\top  +({\boldsymbol{\delta}}  \otimes {\boldsymbol{\delta}}) (\mathrm{vec}({\mathbf{ \Sigma}}_{\mathbf{X}}) )^\top \right]+{\rm E}\left[U^4\right]{\boldsymbol{\delta}}{\boldsymbol{\delta}}^\top\otimes {\boldsymbol{\delta}}{\boldsymbol{\delta}}^\top ,
\end{eqnarray}
where   ${\rm E}(U^k)=M_U^{(k)}(0)$, with  $M_U^{(k)}(.)$  being the k-th derivative of $M_U(t)$ with respect to $t$.
\end{lemma}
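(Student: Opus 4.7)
The plan is to exploit the hierarchical representation of Remark~1, which gives $\mathbf{X}\mid(U=u)\sim \mathcal{N}_p(\boldsymbol{\delta} u,\mathbf{\Sigma}_{\mathbf{X}})$, and then compute each $M_k(\mathbf{X})$ by a two-step conditioning argument: first apply the known matrix-valued moment formulas for a multivariate normal (as tabulated in the Magnus--Neudecker vec/Kronecker calculus), and then integrate the resulting polynomial in $u$ against $H(\cdot;\boldsymbol{\nu})$, using $\mathrm{E}[U^k]=M_U^{(k)}(0)$ as stated.

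For $M_1$ and $M_2$ the calculation is immediate from the law of total expectation: $\mathrm{E}[\mathbf{X}]=\mathrm{E}[\boldsymbol{\delta} U]=\mathrm{E}[U]\boldsymbol{\delta}$ gives~\eqref{M1X}, while $\mathrm{E}[\mathbf{X}\mathbf{X}^\top\mid U]=\mathbf{\Sigma}_{\mathbf{X}}+U^2\,\boldsymbol{\delta}\boldsymbol{\delta}^\top$ together with $\boldsymbol{\delta}\boldsymbol{\delta}^\top=\boldsymbol{\delta}\otimes\boldsymbol{\delta}^\top$ yields~\eqref{M2X}.

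For $M_3$ and $M_4$, I would substitute $\boldsymbol{\mu}=\boldsymbol{\delta} u$ and $\mathbf{\Sigma}=\mathbf{\Sigma}_{\mathbf{X}}$ in the standard third- and fourth-order matrix-moment formulas for $\mathcal{N}_p(\boldsymbol{\mu},\mathbf{\Sigma})$. The third normal moment decomposes into a part linear in $\boldsymbol{\mu}$ paired with one copy of $\mathbf{\Sigma}$ (producing the three distinct Kronecker placements that appear as the $\mathrm{E}[U]$-terms in~\eqref{M3X}) plus a cubic $\boldsymbol{\mu}^{\otimes 3}$-term (producing the $\mathrm{E}[U^3]$-term). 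Similarly, the fourth normal moment splits into a $\boldsymbol{\mu}$-free piece equal to $(\mathbf{I}_{p^2}+\mathbf{U}_{p,p})(\mathbf{\Sigma}\otimes\mathbf{\Sigma})+\mathrm{vec}(\mathbf{\Sigma})\mathrm{vec}(\mathbf{\Sigma})^\top$, a quadratic-in-$\boldsymbol{\mu}$ piece listing every distinct placement of the pair of $\boldsymbol{\mu}$-factors among the Kronecker slots of a single $\mathbf{\Sigma}$, and the pure $\boldsymbol{\mu}\boldsymbol{\mu}^\top\otimes\boldsymbol{\mu}\boldsymbol{\mu}^\top$ term. After substitution, each monomial $u^k$ factors out and becomes $\mathrm{E}[U^k]$ upon integrating $H$, reproducing the displayed expressions for $M_3^{\mathbf{X}}$ and $M_4^{\mathbf{X}}$.

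An equivalent route is to differentiate~\eqref{MtX} directly: expand $M_U(\mathbf{t}^\top\boldsymbol{\delta})=\sum_{k\ge 0}\mathrm{E}[U^k](\mathbf{t}^\top\boldsymbol{\delta})^k/k!$, multiply by $\exp(\tfrac12 \mathbf{t}^\top\mathbf{\Sigma}_{\mathbf{X}}\mathbf{t})$, and collect terms of each order by the iterated vector derivative $\partial^k/\partial\mathbf{t}^{\otimes k}$ at $\mathbf{t}=\mathbf{0}$. In either approach, the principal obstacle is combinatorial bookkeeping for $M_4^{\mathbf{X}}$: tracking all distinct orderings of the $\boldsymbol{\delta}$- and $\mathbf{\Sigma}_{\mathbf{X}}$-factors inside the Kronecker structure and inserting the commutation matrix $\mathbf{U}_{p,p}$ at exactly the right places so that the Gaussian Isserlis identity $\mathrm{E}[Z_i Z_j Z_k Z_\ell]=\sigma_{ij}\sigma_{k\ell}+\sigma_{ik}\sigma_{j\ell}+\sigma_{i\ell}\sigma_{jk}$ is correctly expressed in matrix form. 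Beyond that indexing effort, no nontrivial analytic step is required.
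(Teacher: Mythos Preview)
Your proposal is correct, and in fact your \emph{alternative} route---differentiating the MGF $M_{\mathbf{X}}(\mathbf{t})=\exp(\tfrac12\mathbf{t}^\top\mathbf{\Sigma}_{\mathbf{X}}\mathbf{t})\,M_U(\mathbf{t}^\top\boldsymbol{\delta})$ and evaluating at $\mathbf{t}=\mathbf{0}$---is precisely the argument the paper gives: it writes out the first three vector/matrix derivatives explicitly (citing Genton et al.\ (2001) for the technique) and then for $M_4$ observes that one need only collect the $\mathbf{t}$-free terms of the fourth derivative rather than compute the full expression.

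Your \emph{primary} route, conditioning on $U$ and quoting the standard Magnus--Neudecker moment formulas for $\mathcal{N}_p(\boldsymbol{\delta} u,\mathbf{\Sigma}_{\mathbf{X}})$, is a genuinely different and somewhat cleaner derivation: it imports the combinatorics wholesale from the Gaussian case (the Isserlis/Wick structure, the $(\mathbf{I}_{p^2}+\mathbf{U}_{p,p})(\mathbf{\Sigma}\otimes\mathbf{\Sigma})+\mathrm{vec}(\mathbf{\Sigma})\mathrm{vec}(\mathbf{\Sigma})^\top$ block, the placements of $\boldsymbol{\mu}$ in the mixed terms) and reduces the new content to replacing each $u^k$ by $\mathrm{E}[U^k]$. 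The paper's MGF differentiation buys a self-contained computation that does not appeal to external normal-moment tables, at the cost of redoing the Kronecker bookkeeping by hand; your conditioning approach trades that for a one-line appeal to known formulas. Either way the only nontrivial work is, as you say, the indexing for $M_4^{\mathbf{X}}$.
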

The Kronecker product of matrices ${\mathbf{A}}=\left(a_{ij}\right): m\times n$ and ${\mathbf{B}}=\left(b_{ij}\right): p\times q$ is a
$mp \times nq$ matrix ${\mathbf{A}} \otimes{\mathbf{B}}=\left(a_{ij}{\mathbf{B}}\right)$. A matrix ${\mathbf{A}}  = \left({\mathbf{a}}_1,\ldots , {\mathbf{ a}}_n \right): m \times n$ with columns ${\mathbf{a}}_1, \ldots,  {\mathbf{a}}_n$ is sometimes written as a vector and called $\mathrm{vec}({\mathbf{A}})$, defined by $\mathrm{vec}({\mathbf{A}})=\left({\mathbf{a}}_1^\top, \ldots,  {\mathbf{a}}_n^\top \right)^\top$. The matrix $\mathbf{U}_{p,p}$ is the permutation matrix (commutation matrix)  associated with a $p \times p$ matrix (its size is $p^2 \times p^2 $). For details about  Kronecker product, permutation matrix and its properties, see \cite{Graham (1981)} and  \cite{Schott (2016)}.
We extend the results of Lemma \ref{Lem1}, using  the   stochastic representation in (\ref{Repre}), to incorporate  location and scale parameters, ${\boldsymbol{\xi}}$ and ${\boldsymbol{\omega}}$,  through the transformation ${\mathbf{Y}}={\boldsymbol{\xi}}+{\boldsymbol{\omega}}{\mathbf{X}}$.

\begin{thm}\label{ThmLem2}
If ${\mathbf{Y}}\sim \mathcal{MMN}_p({\boldsymbol{\xi}},{\mathbf{\Omega}}, {\boldsymbol{\delta}}; H)$,
 then its first four moments are as follows:
\begin{eqnarray}
M_1({\mathbf{Y}})&=& {\boldsymbol{\xi}}+{\boldsymbol{\omega}}M_1^{\mathbf{X}},\label{M1Y}\\
M_2({\mathbf{Y}})&=&{\boldsymbol{\xi}}\otimes{\boldsymbol{\xi}}^\top+{\boldsymbol{\xi}}\otimes \left({\boldsymbol{\omega}}M_1^{\mathbf{X}}\right)^\top+{\boldsymbol{\omega}}M_1^{\mathbf{X}}\otimes {\boldsymbol{\xi}}^\top+  {\boldsymbol{\omega}}M_2^{\mathbf{X}} {\boldsymbol{\omega}}, \label{M2Y}\\
M_3({\mathbf{Y}})&=&{\boldsymbol{\xi}}{\boldsymbol{\xi}}^\top \otimes{\boldsymbol{\xi}}+{\boldsymbol{\xi}}{\boldsymbol{\xi}}^\top\otimes\left({\boldsymbol{\omega}}M_1^{\mathbf{X}}\right )
+{\boldsymbol{\xi}}\left({\boldsymbol{\omega}}M_1^{\mathbf{X}} \right)^\top \otimes{\boldsymbol{\xi}}
+\left({\boldsymbol{\omega}}M_1^{\mathbf{X}} \right)\otimes{\boldsymbol{\xi}}{\boldsymbol{\xi}}^\top+ \left({\boldsymbol{\omega}}M_2^{\mathbf{X}}{\boldsymbol{\omega}} \right)\otimes{\boldsymbol{\xi}}+{\boldsymbol{\xi}}\otimes \left({\boldsymbol{\omega}}M_2^{\mathbf{X}}{\boldsymbol{\omega}}\right)\nonumber\\
&&+  ({\boldsymbol{\omega}}\otimes{\boldsymbol{\omega}})\mathrm{vec}\left(M_2^{\mathbf{X}}\right)\otimes{\boldsymbol{\xi}}^\top
+({\boldsymbol{\omega}}\otimes{\boldsymbol{\omega}})M_3^{\mathbf{X}}{\boldsymbol{\omega}},  \label{M3Y}\\
M_4({\mathbf{Y}})&=& {\boldsymbol{\xi}}{\boldsymbol{\xi}}^\top \otimes{\boldsymbol{\xi}}{\boldsymbol{\xi}}^\top
+  {\boldsymbol{\xi}}{\boldsymbol{\xi}}^\top \otimes{\boldsymbol{\xi}}\left({\boldsymbol{\omega}}M_1^{\mathbf{X}} \right)^\top
+{\boldsymbol{\xi}}{\boldsymbol{\xi}}^\top \otimes \left({\boldsymbol{\omega}}M_1^{\mathbf{X}} \right){\boldsymbol{\xi}}^\top
+ {\boldsymbol{\xi}}{\boldsymbol{\xi}}^\top \otimes \left({\boldsymbol{\omega}}M_2^{\mathbf{X}}{\boldsymbol{\omega}} \right)+{\boldsymbol{\xi}}\left({\boldsymbol{\omega}}M_1^{\mathbf{X}}\right)^\top \otimes{\boldsymbol{\xi}}{\boldsymbol{\xi}}^\top\nonumber\\
&&+  \left({\boldsymbol{\xi}} \otimes{\boldsymbol{\xi}}\right) \left(\mathrm{vec}\left(M_2^{\mathbf{X}}\right) \right)^\top({\boldsymbol{\omega}}\otimes{\boldsymbol{\omega}})
+{\boldsymbol{\xi}}  \otimes \left({\boldsymbol{\omega}}M_2^{\mathbf{X}}{\boldsymbol{\omega}} \right) \otimes{\boldsymbol{\xi}}^\top +{\boldsymbol{\xi}}  \otimes {\boldsymbol{\omega}}\left(M_3^{\mathbf{X}} \right)^\top({\boldsymbol{\omega}}\otimes{\boldsymbol{\omega}})
+\left({\boldsymbol{\omega}}M_1^{\mathbf{X}} \right){\boldsymbol{\xi}}^\top \otimes{\boldsymbol{\xi}}{\boldsymbol{\xi}}^\top\nonumber\\
&&+{\boldsymbol{\xi}}^\top  \otimes    \left({\boldsymbol{\omega}}M_2^{\mathbf{X}}{\boldsymbol{\omega}} \right)   \otimes {\boldsymbol{\xi}}
+{\boldsymbol{\xi}}^\top  \otimes ({\boldsymbol{\omega}}\otimes{\boldsymbol{\omega}})\mathrm{vec}\left(M_2^{\mathbf{X}}\right)   \otimes{\boldsymbol{\xi}}^\top
+{\boldsymbol{\xi}}^\top  \otimes ({\boldsymbol{\omega}}\otimes{\boldsymbol{\omega}}) M_3^{\mathbf{X}}{\boldsymbol{\omega}}
+\left({\boldsymbol{\omega}}M_2^{\mathbf{X}}{\boldsymbol{\omega}} \right)   \otimes {\boldsymbol{\xi}}{\boldsymbol{\xi}}^\top\nonumber\\
&&+ {\boldsymbol{\omega}} \left(M_3^{\mathbf{X}}\right)^\top ({\boldsymbol{\omega}}\otimes{\boldsymbol{\omega}})\otimes{\boldsymbol{\xi}}
+ ({\boldsymbol{\omega}}\otimes{\boldsymbol{\omega}}) M_3^{\mathbf{X}}{\boldsymbol{\omega}}   \otimes{\boldsymbol{\xi}}^\top
+({\boldsymbol{\omega}}\otimes{\boldsymbol{\omega}}) M_4^{\mathbf{X}}({\boldsymbol{\omega}}\otimes{\boldsymbol{\omega}}). \label{M4Y}
\end{eqnarray}
\end{thm}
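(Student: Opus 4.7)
The strategy is to exploit the affine relation $\mathbf{Y} = \boldsymbol{\xi} + \boldsymbol{\omega}\mathbf{X}$ from Remark 1, where $\mathbf{X} \sim \mathcal{MMN}_p(\mathbf{0}, \overline{\mathbf{\Omega}}, \boldsymbol{\delta}; H)$, and to expand each $M_k(\mathbf{Y})$ multinomially in $\boldsymbol{\xi}$ and $\boldsymbol{\omega}\mathbf{X}$ under the relevant Kronecker/outer-product conventions used to define $M_k$. Since $\boldsymbol{\xi}$ and $\boldsymbol{\omega}$ are deterministic, linearity of expectation allows them to be pulled outside, leaving expressions built purely from $M_1^\mathbf{X}, \ldots, M_k^\mathbf{X}$, which are already known from Lemma~\ref{Lem1}. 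The formula for $M_1(\mathbf{Y})$ is then immediate, and the formulas for $M_2(\mathbf{Y})$, $M_3(\mathbf{Y})$, $M_4(\mathbf{Y})$ follow by carefully counting and grouping the $2^k$ terms (for $k=2,3,4$) obtained from the $k$-fold expansion.

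The main technical tool is the mixed-product property of the Kronecker product, $(\mathbf{A}\otimes \mathbf{B})(\mathbf{C}\otimes \mathbf{D}) = \mathbf{A}\mathbf{C}\otimes \mathbf{B}\mathbf{D}$, which I would apply in the form $(\boldsymbol{\omega}\mathbf{X})\otimes \cdots \otimes (\boldsymbol{\omega}\mathbf{X}) = (\boldsymbol{\omega}\otimes \cdots \otimes \boldsymbol{\omega})(\mathbf{X}\otimes \cdots \otimes \mathbf{X})$ to strip the scale matrices to the outside of the expectation. Mixed terms of the form $\boldsymbol{\xi}\otimes(\boldsymbol{\omega}\mathbf{X})^\top$ or $\boldsymbol{\xi}\otimes(\boldsymbol{\omega}\mathbf{X}\mathbf{X}^\top\boldsymbol{\omega})$ are handled the same way, while terms that mix outer products with higher-order Kronecker slots are rewritten using the identity $\mathrm{vec}(\mathbf{A}\mathbf{B}\mathbf{C}) = (\mathbf{C}^\top\otimes \mathbf{A})\mathrm{vec}(\mathbf{B})$ and the commutation matrix $\mathbf{U}_{p,p}$ to bring them into the canonical form involving $M_2^\mathbf{X}$, $M_3^\mathbf{X}$, $M_4^\mathbf{X}$ stated in the theorem. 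These manipulations, together with $\mathrm{E}[\mathbf{X}^{\otimes k}] = M_k^\mathbf{X}$, transform each binomial cross-term into exactly one of the summands displayed in \eqref{M2Y}--\eqref{M4Y}.

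The first moment requires only linearity, and $M_2(\mathbf{Y})$ is a four-term binomial expansion whose cross terms $\boldsymbol{\xi}\otimes(\boldsymbol{\omega}M_1^\mathbf{X})^\top$ and $(\boldsymbol{\omega}M_1^\mathbf{X})\otimes \boldsymbol{\xi}^\top$ appear immediately and whose quadratic-in-$\mathbf{X}$ term gives $\boldsymbol{\omega}M_2^\mathbf{X}\boldsymbol{\omega}$. The expressions for $M_3(\mathbf{Y})$ and $M_4(\mathbf{Y})$ are obtained analogously from the $8$-term and $16$-term expansions; here one repeatedly uses the mixed-product rule to extract $(\boldsymbol{\omega}\otimes \boldsymbol{\omega})$ factors and the vec identity to match each expectation block against the correct $M_j^\mathbf{X}$.

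The principal obstacle is not any single identity but the combinatorial bookkeeping at order four: there are sixteen terms arising from $(\boldsymbol{\xi}+\boldsymbol{\omega}\mathbf{X})^{\otimes 4}$, and they split across $M_0^\mathbf{X}, M_1^\mathbf{X}, M_2^\mathbf{X}, M_3^\mathbf{X}, M_4^\mathbf{X}$ according to how many of the four slots contain $\boldsymbol{\omega}\mathbf{X}$, with several groups requiring judicious application of the commutation matrix $\mathbf{U}_{p,p}$ or of the vec-trick to match the precise arrangement of transposes and Kronecker factors in \eqref{M4Y}. My plan is therefore to track the expansion by the multi-index recording which of the four positions are stochastic, verify each resulting term against the stated formula, and then collect like terms; no novel estimate or inequality is needed, only careful algebraic accounting.
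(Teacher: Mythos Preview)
Your approach is correct and coincides with the paper's: the paper does not spell out a detailed proof of this theorem but simply notes (just before the statement and again in Note~1 of Appendix~A) that one passes from Lemma~\ref{Lem1} to Theorem~\ref{ThmLem2} via the affine relation $\mathbf{Y}=\boldsymbol{\xi}+\boldsymbol{\omega}\mathbf{X}$ and the moment definitions $M_k(\mathbf{Y})=\mathrm{E}[\mathbf{Y}^{\otimes k}]$, i.e., exactly the binomial/Kronecker expansion you describe. One small remark: the commutation matrix $\mathbf{U}_{p,p}$ is not needed here (it enters only in the expression for $M_4^{\mathbf{X}}$ in Lemma~\ref{Lem1}); the sixteen terms of the order-four expansion already land in the stated form once you apply the mixed-product rule and the identity $\mathrm{vec}(\boldsymbol{\omega}M_2^{\mathbf{X}}\boldsymbol{\omega})=(\boldsymbol{\omega}\otimes\boldsymbol{\omega})\mathrm{vec}(M_2^{\mathbf{X}})$.
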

From the above expressions, we can obtain  the  mean vector and covariance matrix of $\mathcal{MMN}_p({\boldsymbol{\xi}},{\mathbf{\Omega}}, {\boldsymbol{\delta}}; H)$  family   as
${\rm E}({\mathbf{Y}})={\boldsymbol{\xi}}+{\rm E}(U){\boldsymbol{\omega}}{\boldsymbol{\delta}}$ and
${\rm var}\left({\mathbf{Y}}\right)=
{\mathbf{\Omega}}+ ({\rm var}(U)-1){\boldsymbol{\omega}}{\boldsymbol{\delta}}{\boldsymbol{\delta}}^\top{\boldsymbol{\omega}}$.
 Multiplication of $M_{\mathbf{X}}({\mathbf{t}})$ by the MGF of the $\mathcal{N}_p({\boldsymbol{\mu}},{\mathbf{\Sigma}})$
distribution, $\exp\left({{\mathbf{t}}^\top {\boldsymbol{\mu}}+  {
{\mathbf{t}}^\top {\mathbf{ \Sigma}}{\mathbf{t}}}}/2\right)$, is still a function of type $M_{\mathbf{X}}({\mathbf{t}})$, and we thus obtain the following result.

\begin{thm}
If ${\mathbf{Y}}_1\sim \mathcal{MMN}_p({\boldsymbol{\xi}},{\mathbf{\Omega}}, {\boldsymbol{\delta}}; H)$ and
${\mathbf{Y}}_2 \sim \mathcal{N}_p({\boldsymbol{\mu}},{\mathbf{\Sigma}})$ are independent variables, then
$
{\mathbf{Y}}={\mathbf{Y}}_1+{\mathbf{Y}}_2 \sim \mathcal{MMN}_p\left({\boldsymbol{\xi}}_{\mathbf{Y}},{\mathbf{\Omega}}_{\mathbf{Y}}, {\boldsymbol{\delta}}_{\mathbf{Y}}; H\right),
$
where
$
{\boldsymbol{\xi}}_{\mathbf{Y}}={\boldsymbol{\xi}}+{\boldsymbol{\mu}} $,
${\mathbf{\Omega}}_{\mathbf{Y}}={\mathbf{\Omega}}+{\mathbf{\Sigma}}$, and
$
{\boldsymbol{\delta}}_{\mathbf{Y}}={\boldsymbol{\omega}}_{\mathbf{Y}}^{-1}{\boldsymbol{\omega}}{\boldsymbol{\delta}},
$
with  ${\boldsymbol{\omega}}_{\mathbf{Y}}=( {\mathbf{\Omega}}_{\mathbf{Y}}\odot \mathbf{I}_p)^{1/2}$.
\end{thm}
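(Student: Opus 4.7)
The natural route is via the moment generating function, since Lemma \ref{Lem-CF-MG} already gives a clean product-form MGF for any $\mathcal{MMN}_p$ vector, and the MGF of a multivariate normal is elementary. The plan is to multiply $M_{\mathbf{Y}_1}$ and $M_{\mathbf{Y}_2}$ using independence, and then read off the resulting expression as the MGF of an $\mathcal{MMN}_p$ distribution with the stated parameters. Uniqueness of the MGF then finishes the argument.

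Concretely, I would first write, using Lemma \ref{Lem-CF-MG} with $\mathbf{\Sigma}_{\mathbf{Y}_1}=\mathbf{\Omega}-\boldsymbol{\omega}\boldsymbol{\delta}\boldsymbol{\delta}^\top\boldsymbol{\omega}$,
\[
M_{\mathbf{Y}_1}(\mathbf{t}) = \exp\!\bigl(\mathbf{t}^\top\boldsymbol{\xi}+\tfrac12\mathbf{t}^\top\mathbf{\Sigma}_{\mathbf{Y}_1}\mathbf{t}\bigr)\,M_U\bigl(\mathbf{t}^\top\boldsymbol{\omega}\boldsymbol{\delta};\boldsymbol{\nu}\bigr),
\qquad
M_{\mathbf{Y}_2}(\mathbf{t}) = \exp\!\bigl(\mathbf{t}^\top\boldsymbol{\mu}+\tfrac12\mathbf{t}^\top\mathbf{\Sigma}\,\mathbf{t}\bigr).
\]
By independence, $M_{\mathbf{Y}}(\mathbf{t})=M_{\mathbf{Y}_1}(\mathbf{t})M_{\mathbf{Y}_2}(\mathbf{t})$, which I would collect as
\[
M_{\mathbf{Y}}(\mathbf{t}) = \exp\!\bigl(\mathbf{t}^\top(\boldsymbol{\xi}+\boldsymbol{\mu})+\tfrac12\mathbf{t}^\top(\mathbf{\Sigma}_{\mathbf{Y}_1}+\mathbf{\Sigma})\mathbf{t}\bigr)\,M_U\bigl(\mathbf{t}^\top\boldsymbol{\omega}\boldsymbol{\delta};\boldsymbol{\nu}\bigr).
\]

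The core algebraic step is then to rewrite the exponent and the $M_U$ argument in terms of the claimed parameters $\boldsymbol{\xi}_{\mathbf{Y}}=\boldsymbol{\xi}+\boldsymbol{\mu}$, $\mathbf{\Omega}_{\mathbf{Y}}=\mathbf{\Omega}+\mathbf{\Sigma}$, $\boldsymbol{\omega}_{\mathbf{Y}}=(\mathbf{\Omega}_{\mathbf{Y}}\odot\mathbf{I}_p)^{1/2}$, and $\boldsymbol{\delta}_{\mathbf{Y}}=\boldsymbol{\omega}_{\mathbf{Y}}^{-1}\boldsymbol{\omega}\boldsymbol{\delta}$. The identity $\boldsymbol{\omega}_{\mathbf{Y}}\boldsymbol{\delta}_{\mathbf{Y}}=\boldsymbol{\omega}\boldsymbol{\delta}$ is immediate from the definition of $\boldsymbol{\delta}_{\mathbf{Y}}$, so $M_U(\mathbf{t}^\top\boldsymbol{\omega}\boldsymbol{\delta};\boldsymbol{\nu})=M_U(\mathbf{t}^\top\boldsymbol{\omega}_{\mathbf{Y}}\boldsymbol{\delta}_{\mathbf{Y}};\boldsymbol{\nu})$. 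Using the same identity in the quadratic form, I would verify
\[
\mathbf{\Sigma}_{\mathbf{Y}_1}+\mathbf{\Sigma}=\mathbf{\Omega}-\boldsymbol{\omega}\boldsymbol{\delta}\boldsymbol{\delta}^\top\boldsymbol{\omega}+\mathbf{\Sigma}=\mathbf{\Omega}_{\mathbf{Y}}-\boldsymbol{\omega}_{\mathbf{Y}}\boldsymbol{\delta}_{\mathbf{Y}}\boldsymbol{\delta}_{\mathbf{Y}}^\top\boldsymbol{\omega}_{\mathbf{Y}},
\]
which is exactly the $\mathbf{\Sigma}_{\mathbf{Y}}$ appearing in Lemma \ref{Lem-CF-MG} for the target distribution. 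Hence the product equals the MGF of $\mathcal{MMN}_p(\boldsymbol{\xi}_{\mathbf{Y}},\mathbf{\Omega}_{\mathbf{Y}},\boldsymbol{\delta}_{\mathbf{Y}};H)$ and the conclusion follows.

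The only mild subtlety, rather than a genuine obstacle, is an admissibility check: one should confirm that $\boldsymbol{\delta}_{\mathbf{Y}}$ is still a legitimate skewness vector, i.e.\ each component lies in $(-1,1)$, and that $\mathbf{\Omega}_{\mathbf{Y}}-\boldsymbol{\omega}_{\mathbf{Y}}\boldsymbol{\delta}_{\mathbf{Y}}\boldsymbol{\delta}_{\mathbf{Y}}^\top\boldsymbol{\omega}_{\mathbf{Y}}$ is positive definite. Both follow because adding the positive-definite $\mathbf{\Sigma}$ inflates the diagonal of $\mathbf{\Omega}_{\mathbf{Y}}$ relative to $\mathbf{\Omega}$, so coordinatewise $|(\boldsymbol{\delta}_{\mathbf{Y}})_i|=(\omega_i/\omega_{\mathbf{Y},i})|\delta_i|<|\delta_i|<1$, and the displayed identity writes $\mathbf{\Sigma}_{\mathbf{Y}}$ as the sum of the positive-definite $\mathbf{\Sigma}_{\mathbf{Y}_1}$ and $\mathbf{\Sigma}$. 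With these verifications, invoking the MGF uniqueness theorem closes the proof.
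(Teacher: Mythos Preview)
Your proposal is correct and follows essentially the same route as the paper: the paper's justification is the one-line remark preceding the theorem that multiplying the $\mathcal{MMN}$ MGF by the normal MGF yields a function of the same type, and you have simply fleshed out the algebra and added an admissibility check that the paper leaves implicit.
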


From the MGF's $M_{\mathbf{X}}({\mathbf{t}})$ and $M_{\mathbf{Y}}({\mathbf{t}})$, it is clear that the family
of MMN distributions is closed under affine transformations, as given in the following results.

\begin{thm}\label{Thm1}
If ${\mathbf{X}}\sim \mathcal{MMN}_p\left({\boldsymbol{0}},{\overline{\mathbf{\Omega}}}, {\boldsymbol{\delta}}; H\right)$ and  ${\mathbf{A}}$ is a non-singular $p\times p$ matrix such that   $\mathrm{diag}\left({\mathbf{A}}^\top {\overline{\mathbf{\Omega}}} {\mathbf{A}} \right)={\mathbf{I}}_p$, that is,  ${\mathbf{A}}^\top {\overline{\mathbf{\Omega}}} {\mathbf{A}}$ is a correlation matrix, then
$
{\mathbf{A}}^\top{\mathbf{X}} \sim  \mathcal{MMN}_p\left({\boldsymbol{0}},{\mathbf{A}}^\top{\overline{\mathbf{\Omega}}}{\mathbf{A}},{\mathbf{A}}^\top{\mathbf{\delta}}; H\right).
$
\end{thm}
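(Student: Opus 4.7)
The plan is to exploit the MGF characterization established in Lemma \ref{Lem-CF-MG}: since the moment generating function uniquely determines the distribution (in a neighbourhood of the origin where $M_U$ exists), it suffices to compute $M_{{\mathbf{A}}^\top {\mathbf{X}}}({\mathbf{t}})$ and recognize it as the MGF of the claimed normalized $\mathcal{MMN}_p$ law. Since the conclusion concerns the normalized form (location $\boldsymbol{0}$ and the second parameter being a correlation matrix), the assumption $\mathrm{diag}({\mathbf{A}}^\top {\overline{\mathbf{\Omega}}}{\mathbf{A}}) = {\mathbf{I}}_p$ is precisely what we need so that the target law is well-posed as a normalized $\mathcal{MMN}$ distribution.

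For the computation, I would write
\begin{eqnarray*}
M_{{\mathbf{A}}^\top {\mathbf{X}}}({\mathbf{t}}) \;=\; {\rm E}\!\left[e^{{\mathbf{t}}^\top {\mathbf{A}}^\top {\mathbf{X}}}\right] \;=\; M_{\mathbf{X}}({\mathbf{A}}{\mathbf{t}}) \;=\; e^{\frac{1}{2}({\mathbf{A}}{\mathbf{t}})^\top ({\overline{\mathbf{\Omega}}} - {\boldsymbol{\delta}}{\boldsymbol{\delta}}^\top)({\mathbf{A}}{\mathbf{t}})}\, M_U\!\left(({\mathbf{A}}{\mathbf{t}})^\top {\boldsymbol{\delta}};{\boldsymbol{\nu}}\right),
\end{eqnarray*}
using formula (\ref{MtX}) for the normalized ${\mathbf{X}}$. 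Rewriting the quadratic form and the linear argument gives
\begin{eqnarray*}
M_{{\mathbf{A}}^\top {\mathbf{X}}}({\mathbf{t}}) \;=\; e^{\frac{1}{2}{\mathbf{t}}^\top \left({\mathbf{A}}^\top{\overline{\mathbf{\Omega}}}{\mathbf{A}} - ({\mathbf{A}}^\top{\boldsymbol{\delta}})({\mathbf{A}}^\top{\boldsymbol{\delta}})^\top\right){\mathbf{t}}}\, M_U\!\left({\mathbf{t}}^\top ({\mathbf{A}}^\top {\boldsymbol{\delta}});{\boldsymbol{\nu}}\right),
\end{eqnarray*}
which is exactly the MGF given by (\ref{MtX}) for a $\mathcal{MMN}_p({\boldsymbol{0}},{\mathbf{A}}^\top {\overline{\mathbf{\Omega}}}{\mathbf{A}}, {\mathbf{A}}^\top {\boldsymbol{\delta}}; H)$ random vector, with new correlation matrix ${\mathbf{A}}^\top {\overline{\mathbf{\Omega}}}{\mathbf{A}}$ and new skewness parameter ${\mathbf{A}}^\top {\boldsymbol{\delta}}$. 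Uniqueness of the MGF then closes the argument.

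The step that requires the most care, rather than being genuinely hard, is confirming that the identified parameters yield an admissible member of the $\mathcal{MMN}$ family. Specifically, the second parameter ${\mathbf{A}}^\top {\overline{\mathbf{\Omega}}}{\mathbf{A}}$ must be a correlation matrix, which is guaranteed precisely by the hypothesis $\mathrm{diag}({\mathbf{A}}^\top {\overline{\mathbf{\Omega}}}{\mathbf{A}}) = {\mathbf{I}}_p$ together with non-singularity of ${\mathbf{A}}$ (which preserves positive definiteness); and the implicit covariance matrix $\Sigma = {\mathbf{A}}^\top({\overline{\mathbf{\Omega}}} - {\boldsymbol{\delta}}{\boldsymbol{\delta}}^\top){\mathbf{A}}$ inherits positive definiteness from ${\overline{\mathbf{\Omega}}} - {\boldsymbol{\delta}}{\boldsymbol{\delta}}^\top$ via the same non-singularity of ${\mathbf{A}}$. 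Once these routine checks are made, the MGF identification is immediate.
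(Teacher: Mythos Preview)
Your proof is correct and follows essentially the same approach as the paper: compute $M_{{\mathbf{A}}^\top{\mathbf{X}}}({\mathbf{t}})=M_{\mathbf{X}}({\mathbf{A}}{\mathbf{t}})$ via (\ref{MtX}), rewrite the exponent and the argument of $M_U$, and invoke uniqueness of the MGF. Your additional remarks on admissibility of the resulting parameters (the role of the hypothesis $\mathrm{diag}({\mathbf{A}}^\top{\overline{\mathbf{\Omega}}}{\mathbf{A}})={\mathbf{I}}_p$ and preservation of positive definiteness) go slightly beyond what the paper spells out, but they are correct and do not change the argument.
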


\begin{thm}\label{Thm2}
If ${\mathbf{Y}}\sim \mathcal{MMN}_p({\boldsymbol{\xi}},{\mathbf{\Omega}}, {\boldsymbol{\delta}}; H)$,   A is a full-rank $p\times h$ matrix,
with $h \leq p$, and $\textbf{c}\in \mathbb{R}^h$, then
$
{\mathbf{T}}=\textbf{c}+{\mathbf{A}}^\top {\mathbf{Y}} \sim \mathcal{MMN}_h\left({\boldsymbol{\xi}}_{\mathbf{T}},{\mathbf{\Omega}}_{\mathbf{T}}, {\boldsymbol{\delta}_{\mathbf{T}}}; H\right),
$
where
 $
{\boldsymbol{\xi}}_{\mathbf{T}}=\textbf{c}+{\mathbf{A}}^\top {\boldsymbol{\xi}}
$,
$
{\mathbf{\Omega}}_{\mathbf{T}}={\mathbf{A}}^\top {\mathbf{\Omega}}{\mathbf{A}}
$, and
$
{\boldsymbol{\delta}}_{\mathbf{T}}={\boldsymbol{\omega}}_{\mathbf{T}}^{-1}{\mathbf{A}}^\top{\boldsymbol{\omega}}{\boldsymbol{\delta}},
$
with ${\boldsymbol{\omega}}_{\mathbf{T}}=\left( {\mathbf{\Omega}}_{\mathbf{T}}\odot \mathbf{I}_h\right)^{1/2}$.
\end{thm}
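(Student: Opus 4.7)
The plan is to prove this via the moment generating function, leveraging Lemma~\ref{Lem-CF-MG}, since the statement is essentially that the MGF of an affine image of an $\mathcal{MMN}$ vector has exactly the MMN functional form, and by uniqueness of MGFs this pins down the distribution.

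First I would write $M_{\mathbf{T}}(\mathbf{t}) = e^{\mathbf{t}^\top \mathbf{c}}\,M_{\mathbf{Y}}(\mathbf{A}\mathbf{t})$, and then substitute the formula for $M_{\mathbf{Y}}$ given in \eqref{MtY}. Letting $\mathbf{s}=\mathbf{A}\mathbf{t}\in\mathbb{R}^p$, this yields
$$
M_{\mathbf{T}}(\mathbf{t})=\exp\!\left\{\mathbf{t}^\top\mathbf{c}+\mathbf{t}^\top\mathbf{A}^\top{\boldsymbol{\xi}}+\tfrac{1}{2}\mathbf{t}^\top\mathbf{A}^\top{\mathbf{\Sigma}}_{\mathbf{Y}}\mathbf{A}\,\mathbf{t}\right\}\,M_U\!\left(\mathbf{t}^\top\mathbf{A}^\top{\boldsymbol{\omega}}{\boldsymbol{\delta}};{\boldsymbol{\nu}}\right).
$$
The next step is to recognize each factor as the corresponding piece of the MGF of $\mathcal{MMN}_h({\boldsymbol{\xi}}_{\mathbf{T}},{\mathbf{\Omega}}_{\mathbf{T}},{\boldsymbol{\delta}}_{\mathbf{T}};H)$. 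The linear term matches $\mathbf{t}^\top{\boldsymbol{\xi}}_{\mathbf{T}}$ by the definition of ${\boldsymbol{\xi}}_{\mathbf{T}}$. The crucial algebraic verification is that, with ${\boldsymbol{\delta}}_{\mathbf{T}}={\boldsymbol{\omega}}_{\mathbf{T}}^{-1}\mathbf{A}^\top{\boldsymbol{\omega}}{\boldsymbol{\delta}}$,
$$
{\boldsymbol{\omega}}_{\mathbf{T}}{\boldsymbol{\delta}}_{\mathbf{T}}{\boldsymbol{\delta}}_{\mathbf{T}}^\top{\boldsymbol{\omega}}_{\mathbf{T}}=\mathbf{A}^\top{\boldsymbol{\omega}}{\boldsymbol{\delta}}{\boldsymbol{\delta}}^\top{\boldsymbol{\omega}}\mathbf{A},
$$
so that
$$
\mathbf{A}^\top{\mathbf{\Sigma}}_{\mathbf{Y}}\mathbf{A}=\mathbf{A}^\top{\mathbf{\Omega}}\mathbf{A}-\mathbf{A}^\top{\boldsymbol{\omega}}{\boldsymbol{\delta}}{\boldsymbol{\delta}}^\top{\boldsymbol{\omega}}\mathbf{A}={\mathbf{\Omega}}_{\mathbf{T}}-{\boldsymbol{\omega}}_{\mathbf{T}}{\boldsymbol{\delta}}_{\mathbf{T}}{\boldsymbol{\delta}}_{\mathbf{T}}^\top{\boldsymbol{\omega}}_{\mathbf{T}}={\mathbf{\Sigma}}_{\mathbf{T}},
$$
and that $\mathbf{t}^\top\mathbf{A}^\top{\boldsymbol{\omega}}{\boldsymbol{\delta}}=\mathbf{t}^\top{\boldsymbol{\omega}}_{\mathbf{T}}{\boldsymbol{\delta}}_{\mathbf{T}}$, so the mixing MGF factor also assumes the required form. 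These two identities reduce the whole argument to mechanical verification.

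Before concluding, I would note the well-posedness issues: since $\mathbf{A}$ has full column rank $h\le p$ and ${\mathbf{\Omega}}$ is positive definite, ${\mathbf{\Omega}}_{\mathbf{T}}=\mathbf{A}^\top{\mathbf{\Omega}}\mathbf{A}$ is positive definite, hence ${\boldsymbol{\omega}}_{\mathbf{T}}$ is invertible and ${\boldsymbol{\delta}}_{\mathbf{T}}$ is well defined; the analogous factorization ${\mathbf{\Omega}}_{\mathbf{T}}={\boldsymbol{\omega}}_{\mathbf{T}}\overline{\mathbf{\Omega}}_{\mathbf{T}}{\boldsymbol{\omega}}_{\mathbf{T}}$ holds with $\overline{\mathbf{\Omega}}_{\mathbf{T}}$ a valid correlation matrix. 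Together with the MGF match above and the uniqueness of the MGF, this establishes $\mathbf{T}\sim\mathcal{MMN}_h({\boldsymbol{\xi}}_{\mathbf{T}},{\mathbf{\Omega}}_{\mathbf{T}},{\boldsymbol{\delta}}_{\mathbf{T}};H)$.

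The main obstacle, if any, is purely notational: making sure that the standardization by ${\boldsymbol{\omega}}_{\mathbf{T}}$ in the definition of ${\boldsymbol{\delta}}_{\mathbf{T}}$ and the absorption of ${\boldsymbol{\omega}}$ into the skewness direction are tracked consistently through the MGF, so that the argument of $M_U$ is expressed in terms of the new scale matrix. Once the identity ${\boldsymbol{\omega}}_{\mathbf{T}}{\boldsymbol{\delta}}_{\mathbf{T}}=\mathbf{A}^\top{\boldsymbol{\omega}}{\boldsymbol{\delta}}$ is isolated, everything else follows by direct substitution and the reasoning already used to derive Lemma~\ref{Lem-CF-MG}.
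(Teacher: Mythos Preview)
Your proposal is correct and follows essentially the same route as the paper: compute $M_{\mathbf{T}}(\mathbf{t})=e^{\mathbf{t}^\top\mathbf{c}}M_{\mathbf{Y}}(\mathbf{A}\mathbf{t})$, substitute \eqref{MtY}, and recognize the result as the MGF of $\mathcal{MMN}_h({\boldsymbol{\xi}}_{\mathbf{T}},{\mathbf{\Omega}}_{\mathbf{T}},{\boldsymbol{\delta}}_{\mathbf{T}};H)$. The paper's proof is a one-line version of exactly this computation; your write-up simply makes the key identity ${\boldsymbol{\omega}}_{\mathbf{T}}{\boldsymbol{\delta}}_{\mathbf{T}}=\mathbf{A}^\top{\boldsymbol{\omega}}{\boldsymbol{\delta}}$ and the well-posedness of ${\mathbf{\Omega}}_{\mathbf{T}}$ explicit.
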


As in the case of multivariate $\mathcal{SN}$ distribution, (see \cite{Azzalini and Capitanio (2014)}), it can be shown  that,  if the random vector ${\mathbf{Y}}$ is partitioned into a number of random vectors, the independence occurs between its components when at least one component follows the $\mathcal{MMN}$ distribution and the others have normal distribution, that is, the independence between components occurs when only one component of the skewness parameter ${\boldsymbol{\delta}}$ is non-zero and all others are zero.
Without loss of generality, from here on, it is assumed that the first element of ${\boldsymbol{\delta}}$ is non-zero.
We now focus  on a specific type of linear transformation of the $\mathcal{MMN}$ variable, having special relevance for theoretical developments
but also to some extent  for practical reasons.
\begin{thm}\label{Thm3}
For a given variable  ${\mathbf{Y}}\sim \mathcal{MMN}_p({\boldsymbol{\xi}},{\mathbf{\Omega}}, {\boldsymbol{\delta}}; H)$, there exists a linear transformation
$\textbf{Z}^*={\mathbf{A}}_*({\mathbf{Y}}-{\boldsymbol{\xi}})$ such that $\mathbf{Z}^*\sim \mathcal{MMN}_p\left({\boldsymbol{0}},{\mathbf{I}}_p, {\boldsymbol{\delta}}_{\textbf{Z}^*}; H\right)$, where at most one component of ${\boldsymbol{\delta}}_{\textbf{Z}^*}$ is not zero, and ${\boldsymbol{\delta}}_{\textbf{Z}^*}=({\mathbf{\delta}}_*, 0, \ldots, 0)^\top$ with $\delta_*=\left({\boldsymbol{\delta}}^\top \overline{{\mathbf{\Omega}}}^{-1} {\boldsymbol{\delta}}\right)^{1/2}$.
\end{thm}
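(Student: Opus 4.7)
The plan is to invoke Theorem \ref{Thm2} (affine closure) and then design $\mathbf{A}_*$ in two stages, first whitening and then rotating the skewness vector onto $\mathbf{e}_1=(1,0,\ldots,0)^\top$. Consider $\mathbf{Z}^*=\mathbf{A}_*(\mathbf{Y}-{\boldsymbol{\xi}})$. By Theorem \ref{Thm2}, applied with $\mathbf{A}^\top=\mathbf{A}_*$ and $\mathbf{c}=-\mathbf{A}_*{\boldsymbol{\xi}}$, one obtains
\[
\mathbf{Z}^*\sim\mathcal{MMN}_p\!\left(\mathbf{0},\,\mathbf{A}_*\mathbf{\Omega}\mathbf{A}_*^\top,\;{\boldsymbol{\omega}}_{\mathbf{Z}^*}^{-1}\mathbf{A}_*{\boldsymbol{\omega}}{\boldsymbol{\delta}};\,H\right).
\]
Hence the task reduces to producing a non-singular $\mathbf{A}_*$ satisfying the two conditions (i) $\mathbf{A}_*\mathbf{\Omega}\mathbf{A}_*^\top=\mathbf{I}_p$ (which forces ${\boldsymbol{\omega}}_{\mathbf{Z}^*}=\mathbf{I}_p$) and (ii) $\mathbf{A}_*{\boldsymbol{\omega}}{\boldsymbol{\delta}}=\delta_*\mathbf{e}_1$.

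I would first set $\mathbf{A}_0=\mathbf{\Omega}^{-1/2}$ (any symmetric square root), which clearly gives $\mathbf{A}_0\mathbf{\Omega}\mathbf{A}_0^\top=\mathbf{I}_p$, and then look for an orthogonal matrix $\mathbf{Q}$ with $\mathbf{Q}\mathbf{A}_0{\boldsymbol{\omega}}{\boldsymbol{\delta}}=\delta_*\mathbf{e}_1$. Setting $\mathbf{A}_*=\mathbf{Q}\mathbf{A}_0$, condition (i) is preserved because $\mathbf{Q}\mathbf{Q}^\top=\mathbf{I}_p$, so $\mathbf{A}_*\mathbf{\Omega}\mathbf{A}_*^\top=\mathbf{Q}\mathbf{I}_p\mathbf{Q}^\top=\mathbf{I}_p$, and condition (ii) is then automatic.

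The key calculation is to verify that such a $\mathbf{Q}$ exists, which amounts to checking $\|\mathbf{A}_0{\boldsymbol{\omega}}{\boldsymbol{\delta}}\|=\delta_*$. From ${\mathbf{\Omega}}={\boldsymbol{\omega}}\overline{\mathbf{\Omega}}{\boldsymbol{\omega}}$ we have ${\mathbf{\Omega}}^{-1}={\boldsymbol{\omega}}^{-1}\overline{\mathbf{\Omega}}^{-1}{\boldsymbol{\omega}}^{-1}$, and therefore
\[
\|\mathbf{A}_0{\boldsymbol{\omega}}{\boldsymbol{\delta}}\|^2={\boldsymbol{\delta}}^\top{\boldsymbol{\omega}}\,\mathbf{\Omega}^{-1}\,{\boldsymbol{\omega}}{\boldsymbol{\delta}}={\boldsymbol{\delta}}^\top\overline{\mathbf{\Omega}}^{-1}{\boldsymbol{\delta}}=\delta_*^2,
\]
so any Householder reflection or Gram--Schmidt construction produces the required $\mathbf{Q}$ (well-defined as long as $\delta_*\neq0$; when $\delta_*=0$ the vector ${\boldsymbol{\delta}}$ is already zero and the conclusion is trivial with $\mathbf{A}_*=\mathbf{\Omega}^{-1/2}$).

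The main obstacle, and really the only substantive step, is the length identity above; once it is in hand the rotation $\mathbf{Q}$ is standard linear algebra and Theorem \ref{Thm2} does the rest. I would close by noting that the positive-definiteness of $\overline{\mathbf{\Omega}}-{\boldsymbol{\delta}}{\boldsymbol{\delta}}^\top$ forces $\delta_*<1$, which ensures that $(\delta_*,0,\ldots,0)^\top$ is a valid skewness vector for the canonical $\mathcal{MMN}_p(\mathbf{0},\mathbf{I}_p,\cdot\,;H)$ family, so the resulting distribution of $\mathbf{Z}^*$ lies inside the model as claimed.
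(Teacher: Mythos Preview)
Your proof is correct and follows essentially the same two-step strategy as the paper: whiten $\mathbf{Y}-\boldsymbol{\xi}$ so that the scale matrix becomes $\mathbf{I}_p$, then rotate so that the transformed skewness vector points along $\mathbf{e}_1$, invoking Theorem~\ref{Thm2} to identify the resulting distribution. The only cosmetic difference is the choice of whitening matrix: the paper factors $\overline{\mathbf{\Omega}}=\mathbf{C}^\top\mathbf{C}$ and takes $\mathbf{A}_*=(\mathbf{C}^{-1}\mathbf{P})^\top\boldsymbol{\omega}^{-1}$ with an orthogonal $\mathbf{P}$ whose first column is proportional to $\mathbf{C}\overline{\mathbf{\Omega}}^{-1}\boldsymbol{\delta}$, whereas you use the symmetric square root $\mathbf{\Omega}^{-1/2}$ followed by an orthogonal $\mathbf{Q}$; both choices satisfy the same two conditions you identified, and your explicit verification of $\|\mathbf{\Omega}^{-1/2}\boldsymbol{\omega}\boldsymbol{\delta}\|=\delta_*$ together with the remark that $\delta_*<1$ are details the paper leaves implicit.
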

The variable $\textbf{Z}^*$, which we shall sometimes refer to as a canonical variate,
 consists of  $p$ independent components. The joint density is given by
the product of $p-1$ standard normal densities and at most one non-Gaussian
component $\mathcal{MMN}_1\left(0, 1, {\mathbf{\delta}}_*; H\right)$; that is, the density of $\textbf{Z}^*$ is
\begin{eqnarray}\label{density}
f_{\textbf{Z}^*}(\textbf{z})= f_{Z_1^*}(z_1) \prod_{i=2}^{p}\phi( z_i),
\end{eqnarray}
where $Z_1^*\sim \mathcal{MMN}_1(0,1,{\mathbf{\delta}}_*; H)$  (for univariate $\mathcal{MMN}$ distribution, see \cite{Negarestani et al. (2019)}).
 Although  Theorem \ref{Thm3} ensures that it is possible to obtain a canonical
form, note that in general there are many possible ways
to do so, but it is not obvious how to achieve the canonical form in practice.
To find the appropriate ${\mathbf{A}}_*$  in the linear transformation   $\textbf{Z}^*={\mathbf{A}}_*({\mathbf{Y}}-{\boldsymbol{\xi}})$, it is sufficient to find  a ${\mathbf{A}}_*$ satisfying  the following two conditions:
$
{\mathbf{A}}_*^\top {\mathbf{\Omega}} {\mathbf{A}}_*={\mathbf{I}}_p$
and
$
{\mathbf{A}}_*^\top{\boldsymbol{\omega}}{\boldsymbol{\delta}}={\boldsymbol{\delta}}_{\textbf{Z}^*}=({\mathbf{\delta}}_*, 0,\ldots, 0)^\top
$.
The canonical form facilitates the computation of the  mode of  the distribution and  the multivariate coefficients of skewness.
\begin{thm}\label{Thm5-mode}
If ${\mathbf{Y}}\sim \mathcal{MMN}_p({\boldsymbol{\xi}},{\mathbf{\Omega}},{\boldsymbol{\delta}}; H)$, the mode of ${\mathbf{Y}}$ is
$
\textbf{M}_0={\boldsymbol{\xi}}+\frac{m_0^*}{{\mathbf{\delta}}_*}{\boldsymbol{\omega}}{\boldsymbol{\delta}},
$
where $\delta_*=\left({\boldsymbol{\delta}}^\top \overline{{\mathbf{\Omega}}}^{-1} {\boldsymbol{\delta}}\right)^{1/2}$ and $m_0^*$ is
the mode of the univariate $\mathcal{MMN}_1\left(0,1,\delta_*; H\right)$ distribution.
\end{thm}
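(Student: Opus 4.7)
The plan is to reduce to the canonical form from Theorem~\ref{Thm3} and then maximise the resulting factorised density coordinatewise. Let $\mathbf{A}_*$ satisfy the two conditions stated after Theorem~\ref{Thm3}, namely $\mathbf{A}_*^\top \mathbf{\Omega}\,\mathbf{A}_*=\mathbf{I}_p$ and $\mathbf{A}_*^\top\boldsymbol{\omega}\boldsymbol{\delta}=\boldsymbol{\delta}_{\mathbf{Z}^*}=(\delta_*,0,\ldots,0)^\top$, and set $\mathbf{Z}^*=\mathbf{A}_*^\top(\mathbf{Y}-\boldsymbol{\xi})$. Then by Theorem~\ref{Thm3}, $\mathbf{Z}^*\sim\mathcal{MMN}_p(\mathbf{0},\mathbf{I}_p,\boldsymbol{\delta}_{\mathbf{Z}^*};H)$ and its density factorises as in display~(\ref{density}),
\begin{equation*}
f_{\mathbf{Z}^*}(\mathbf{z})=f_{Z_1^*}(z_1)\prod_{i=2}^{p}\phi(z_i),\qquad Z_1^*\sim\mathcal{MMN}_1(0,1,\delta_*;H).
\end{equation*}

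Because the $p$ factors depend on disjoint coordinates and are nonnegative, the joint density attains its maximum precisely when each factor does. The standard normal density $\phi$ is strictly unimodal at $0$, and by definition the univariate density $f_{Z_1^*}$ attains its maximum at $m_0^*$. Hence the mode of $\mathbf{Z}^*$ is $\mathbf{z}_0^*=(m_0^*,0,\ldots,0)^\top=(m_0^*/\delta_*)\,\boldsymbol{\delta}_{\mathbf{Z}^*}$.

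To return to $\mathbf{Y}$, I would use the fact that $\mathbf{y}\mapsto\mathbf{A}_*^\top(\mathbf{y}-\boldsymbol{\xi})$ is an affine bijection, so that the change-of-variables formula gives $f_{\mathbf{Y}}(\mathbf{y})=|\det\mathbf{A}_*|\,f_{\mathbf{Z}^*}\!\bigl(\mathbf{A}_*^\top(\mathbf{y}-\boldsymbol{\xi})\bigr)$. The Jacobian factor is constant in $\mathbf{y}$, so the unique maximiser of $f_{\mathbf{Y}}$ is the preimage of $\mathbf{z}_0^*$, namely $\mathbf{M}_0=\boldsymbol{\xi}+(\mathbf{A}_*^\top)^{-1}\mathbf{z}_0^*$. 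The final step is to invoke the second defining condition of $\mathbf{A}_*$ to eliminate $\mathbf{A}_*$ altogether: inverting $\mathbf{A}_*^\top\boldsymbol{\omega}\boldsymbol{\delta}=\boldsymbol{\delta}_{\mathbf{Z}^*}$ gives $(\mathbf{A}_*^\top)^{-1}\boldsymbol{\delta}_{\mathbf{Z}^*}=\boldsymbol{\omega}\boldsymbol{\delta}$, whence $\mathbf{M}_0=\boldsymbol{\xi}+(m_0^*/\delta_*)\,\boldsymbol{\omega}\boldsymbol{\delta}$, as claimed.

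The proof is essentially bookkeeping; the only mildly delicate point will be tracking the transposes between $\mathbf{A}_*$ and its defining conditions so that $(\mathbf{A}_*^\top)^{-1}$ indeed sends the first coordinate direction back to a scalar multiple of $\boldsymbol{\omega}\boldsymbol{\delta}$. No substantive analytic hurdle arises: existence of the univariate mode $m_0^*$ is built into the statement as a defining property, and the strict unimodality of $\phi$ is standard. The role of the canonical reduction is precisely to convert an otherwise intractable joint maximisation into a one-dimensional maximisation paired with $p-1$ trivial ones.
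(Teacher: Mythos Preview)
Your proof is correct and follows essentially the same route as the paper: reduce to the canonical form of Theorem~\ref{Thm3}, locate the mode of the factorised density at $(m_0^*,0,\ldots,0)^\top$, and then pull back via affine equivariance of the mode. The only cosmetic differences are that the paper writes out the inverse transform explicitly using the $\mathbf{C},\mathbf{P}$ factorisation from the proof of Theorem~\ref{Thm3} (so that $\mathbf{C}^\top\mathbf{P}\mathbf{P}^\top\mathbf{C}\overline{\mathbf{\Omega}}^{-1}=\mathbf{I}_p$ does the simplification), whereas you work abstractly from the two defining conditions on $\mathbf{A}_*$; and the paper phrases the canonical-mode step via first-order conditions rather than your direct product-maximisation argument.
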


\section{Likelihood Estimation through EM Algorithm}\label{Est-sec}
For obtaining the maximum likelihood
estimates of all the parameters of $\mathcal{MMN}_p({\boldsymbol{\xi}},{\mathbf{\Omega}},{\boldsymbol{\delta}}; H)$, we propose an EM-type algorithm as in  \cite{Meng and Rubin (1993)}. Let $\mathbf{Y} = (\mathbf{Y}_1, \ldots, \mathbf{Y}_n)^\top$ be a random sample of
size $n$ from a $\mathcal{MMN}_p({\boldsymbol{\xi}},{\mathbf{\Omega}},{\boldsymbol{\delta}}; H)$  distribution.
Consider the stochastic representation in (\ref{Repre})
for $\mathbf{Y}_i, i\in\{1,\ldots, n\}$. Following the EM algorithm, let $(\mathbf{Y}_i,U_i),  i\in\{1,\ldots, n\}$, be the complete data, where $\mathbf{Y}_i$ is the observed data and $U_i$ is considered as missing data. Let ${\boldsymbol{\theta}}=({\boldsymbol{\xi}},{\mathbf{\Omega}},{\boldsymbol{\delta}},{\boldsymbol{\nu}})$.
Using (\ref{Hierarchical.U}), the distribution of ${\mathbf{Y}}_i$, for
$i\in\{1,\ldots, n\}$, can be written hierarchically as
\begin{eqnarray*}\label{hierarchicall}
{\mathbf{Y}}_i|(U_i=u_i) \sim \mathcal{N}_p ({\boldsymbol{\xi}}+{\boldsymbol{\omega}}{\boldsymbol{\delta}} u_i, {\mathbf{\Sigma}}_{\mathbf{Y}}),~~~~~~
U_i \stackrel{iid}{\sim} H(.;{\boldsymbol{\nu}}),
\end{eqnarray*}
where $\stackrel{iid}{\sim}$ denotes independence of random variables and  ${\mathbf{\Sigma}}_{\mathbf{Y}}={\mathbf{\Omega}}-{\boldsymbol{\omega}}{\mathbf{{\boldsymbol{\delta}}{\boldsymbol{\delta}}}^\top} {\boldsymbol{\omega}}$.
 Let ${\boldsymbol{y}}=\left({\boldsymbol{y}}^\top_1, \ldots, {\boldsymbol{y}}^\top_n\right)$ , where ${\mathbf{y}}_i$ is a realization of $\mathcal{MMN}_p({\boldsymbol{\xi}},{\mathbf{\Omega}},{\boldsymbol{\delta}}; H)$.
Because
\begin{eqnarray}\label{complet-i}
f({\boldsymbol{y}}_i,u_i)=f({\boldsymbol{y}}_i|u_i)h(u_i;{\boldsymbol{\nu}}),
\end{eqnarray}
the complete data log-likelihood function, ignoring additive constants, is obtained from (\ref{complet-i}) as
\begin{eqnarray*}\label{complet-c}
\ell_c({\boldsymbol{\theta}})&=&-\frac{n}{2}\ln |{\mathbf{\Sigma}}_{\mathbf{Y}}|
-\frac{1}{2}\sum_{i=1}^{n}({\boldsymbol{y}}_i- {\boldsymbol{\xi}})^\top{\mathbf{\Sigma}}_{\mathbf{Y}}^{-1}({\boldsymbol{y}}_i- {\boldsymbol{\xi}})+{\boldsymbol{\alpha}}^\top {\mathbf{\Sigma}}_{\mathbf{Y}}^{-1} \sum_{i=1}^{n} u_i ({\boldsymbol{y}}_i- {\boldsymbol{\xi}})-\frac{1}{2} {\boldsymbol{\alpha}}^\top  {\mathbf{\Sigma}}_{\mathbf{Y}}^{-1}{\boldsymbol{\alpha}}\sum_{i=1}^{n}u_i^2+\sum_{i=1}^{n} \ln h(u_i;{\boldsymbol{\nu}}),
\end{eqnarray*}
where ${\boldsymbol{\alpha}}={\boldsymbol{\omega}}{\boldsymbol{\delta}}$.  Let us set
\begin{eqnarray}
\widehat{E_{i1}}^{(k)}={\rm E}\left[U_i|{\mathbf{Y}}_i={\boldsymbol{y}}_i, \widehat{{\boldsymbol{\theta}}}^{(k)}\right],~~~~~
\widehat{E_{i2}}^{(k)}={\rm E}\left[U_i^2|{\mathbf{Y}}_i={\boldsymbol{y}}_i,\widehat{{\boldsymbol{\theta}}}^{(k)}\right],\label{Eihat-general}
\end{eqnarray}
where   $\widehat{{\boldsymbol{\theta}}}^{(k)}=\left(\widehat{{\boldsymbol{\xi}}}^{(k)}, \widehat{{\boldsymbol{\Omega}}}^{(k)}, \widehat{{\boldsymbol{\delta}}}^{(k)},\widehat{{\boldsymbol{\nu}}}^{(k)}\right)$.
After some simple algebra and using (\ref{Eihat-general}),  the expectation with
respect to $U$ conditional on ${\boldsymbol{Y}}$, of the complete log-likelihood function, has the form
\begin{eqnarray}\label{Com-Like}
Q\left({\boldsymbol{\theta}}|\widehat{{\boldsymbol{\theta}}}^{(k)}\right)&=&\frac{n}{2}\ln|{\mathbf{\Sigma}}_{\mathbf{Y}}^{-1}|
-\frac{1}{2}\sum_{i=1}^{n}({\boldsymbol{y}}_i- {\boldsymbol{\xi}})^\top{\mathbf{\Sigma}}_{\mathbf{Y}}^{-1}({\boldsymbol{y}}_i- {\boldsymbol{\xi}})+ \sum_{i=1}^n \textrm{tr}\left[ {\mathbf{\Sigma}}_{\mathbf{Y}}^{-1} (\boldsymbol{y}_i-{\boldsymbol{\xi}}) {\boldsymbol{\alpha}}^\top\right]\widehat{E_{i1}}^{(k)}\nonumber\\
&&-\frac{1}{2} \textrm{tr}\left[ {\mathbf{\Sigma}}_{\mathbf{Y}}^{-1}{\boldsymbol{\alpha}}{\boldsymbol{\alpha}}^\top\right]\sum_{i=1}^n \widehat{E_{i2}}^{(k)}+\sum_{i=1}^n {\rm E}\left[\ln h(u_i;{\boldsymbol{\nu}})|{\mathbf{Y}}_i={\boldsymbol{y}}_i,\widehat{{\boldsymbol{\theta}}}^{(k)}\right],
\end{eqnarray}
where $\overline{{\boldsymbol{y}}}=\frac{1}{n} \sum_{i=1}^n {\boldsymbol{y}}_i$ is the sample mean vector.
The EM-type  algorithm for the ML estimation of ${\boldsymbol{\theta}} =({\boldsymbol{\xi}},{\mathbf{\Omega}},{\boldsymbol{\delta}},{\boldsymbol{\nu}})$ then proceeds as
follows:\\

\noindent \textbf{Algorithm 1.}  Based on the initial value of   $\boldsymbol{\theta}^{(0)}=\left(\boldsymbol{\xi}^{(0)}, \mathbf{\Omega}^{(0)}, \boldsymbol{\delta}^{(0)}, \boldsymbol{\nu}^{(0)}\right)$, the EM-type algorithm
iterates between the following E-step and M-step:

\noindent \textbf{E-step}: Given the estimates of model parameters at the $k$-th iteration, say  ${\boldsymbol{\theta}}=\widehat{{\boldsymbol{\theta}}}^{(k)}$, compute $\widehat{E_{i1}}^{(k)}$ and $\widehat{E_{i2}}^{(k)}$, for $i \in\{1, 2, \ldots, n\}$;

\noindent \textbf{M-step 1}:  Maximization of (\ref{Com-Like}) over parameters ${\boldsymbol{\xi}}$, ${\boldsymbol{\alpha}}$ and ${\mathbf{\Sigma}}_{\mathbf{Y}}$ leads to the following  closed-form expressions:
\begin{eqnarray*}
\widehat{{\boldsymbol{\alpha}}}^{(k+1)}&=&\frac{\sum_{i=1}^n {\boldsymbol{y}}_i\widehat{E_{i1}}^{(k)}-\overline{{\boldsymbol{y}}}\sum_{i=1}^n \widehat{E_{i1}}^{(k)}}{\sum_{i=1}^n \widehat{E_{i2}}^{(k)}-\frac{1}{n}\left(\sum_{i=1}^n\widehat{E_{i1}}^{(k)}\right)^2},~~~~
\widehat{{\boldsymbol{\xi}}}^{(k+1)}=\overline{{\boldsymbol{y}}}-\frac{{~\widehat{{\boldsymbol{\alpha}}}^{(k+1)}}}{n} \sum_{i=1}^n \widehat{E_{i1}}^{(k)},\\
\widehat{{\boldsymbol{\Sigma}}}^{(k+1)}_{\mathbf{Y}}&=&
\frac{1}{n}\sum_{i=1}^{n}\left({\boldsymbol{y}}_i- \widehat{{\boldsymbol{\xi}}}^{(k+1)}\right)\left({\boldsymbol{y}}_i-
\widehat{{\boldsymbol{\xi}}}^{(k+1)}\right)^\top-\frac{2}{n}\sum_{i=1}^n  \widehat{E_{i1}}^{(k)} \left(\boldsymbol{y}_i-\widehat{{\boldsymbol{\xi}}}^{(k+1)}\right){~\widehat{{\boldsymbol{\alpha}}}^{(k+1)}}^\top
+\frac{1}{n}{~\widehat{{\boldsymbol{\alpha}}}^{(k+1)}} {~\widehat{{\boldsymbol{\alpha}}}^{(k+1)}}^\top \sum_{i=1}^n \widehat{E_{i2}}^{(k)}.
\end{eqnarray*}
Therefore,  we can compute $\widehat{{\boldsymbol{\Omega}}}^{(k+1)}=\widehat{{\boldsymbol{\Sigma}}}^{(k+1)}_{\mathbf{Y}}+\widehat{{\boldsymbol{\alpha}}}^{(k+1)}{~\widehat{{\boldsymbol{\alpha}}}^{(k+1)}}^\top$
and ${~\widehat{{\boldsymbol{\delta}}}^{(k+1)}}={~\widehat{{\boldsymbol{\omega}}}^{(k+1)}}^{-1}{\widehat{{\boldsymbol{\alpha}}}^{(k+1)}}$, where
$\widehat{{\boldsymbol{\omega}}}=\left(\widehat{{\boldsymbol{\Omega}}}\odot \mathbf{I}_p\right)^{1/2}$.

\noindent \textbf{M-step 2}:  The update of $\widehat{{\boldsymbol{\nu}}}^{(k)}$ depends on the chosen distribution for $U$, and is obtained as
\begin{eqnarray*}
\widehat{{\boldsymbol{\nu}}}^{(k+1)}=\arg \max_{\boldsymbol{\nu}} \sum_{i=1}^n {\rm E}\left[\ln h(u_i;{\boldsymbol{\nu}})|{\mathbf{Y}}_i={\boldsymbol{y}}_i,\widehat{{\boldsymbol{\theta}}}^{(k)}\right].
\end{eqnarray*}
Updating   of   $\widehat{{\boldsymbol{\nu}}}^{(k)}$ is strongly related to the form of
$h(u_i;{\boldsymbol{\nu}})$.   If the conditional expectation
${\rm E} \left[\ln h(u_i; {\boldsymbol{\nu}}) | {\mathbf{Y}}_i={\boldsymbol{y}}_i, \widehat{{\boldsymbol{\theta}}}^{(k)}\right]$
 is difficult to evaluate, one may resort to
maximizing the restricted actual log-likelihood function, as follows:\\
\noindent \textbf{Modified M-step 2}: (Liu and Rubin \cite{Liu and  Rubin (1994)})
Update  $\widehat{{\boldsymbol{\nu}}}^{(k)}$ by
$
\widehat{{\boldsymbol{\nu}}}^{(k+1)}=\arg \max_{\boldsymbol{\nu}} \sum_{i=1}^n \ln f_{\mathcal{MMN}_p}\left({\boldsymbol{y}}_i;\widehat{{\boldsymbol{\xi}}}^{(k+1)}, \widehat{{\boldsymbol{\Omega}}}^{(k+1)}, \widehat{{\boldsymbol{\delta}}}^{(k+1)},{\boldsymbol{\nu}}\right)
$.\\
The above  algorithm iterates between
the E-step and M-step until  a suitable convergence criterion is satisfied.
 We adopt the distance involving two successive evaluations of the log-likelihood  function,
i.e.,   $\left|{\ell\left({\boldsymbol{\theta}}^{(k+1)}|{\boldsymbol{y}}\right)}/{\ell\left({\widehat{\boldsymbol{\theta}}}^{(k)}|{\boldsymbol{y}}\right)}-1\right|$,
as a convergence criterion, where
$
\ell({\boldsymbol{\theta}}|{\boldsymbol{y}})=\sum_{i=1}^n \ln f_{\mathcal{MMN}_p}
\left({\boldsymbol{y}}_i;{\boldsymbol{\xi}}, {\mathbf{\Omega}}, {\boldsymbol{\delta}}, {\boldsymbol{\nu}}\right)
$.

\section{Special Case of  $\mathcal{MMN}$ Distribution}\label{MMNG-sec}
In this section, we study in detail a special case of the $\mathcal{MMN}$ family.
In the stochastic representation in (\ref{Repre}), if the random variable $U$ follows the standard gamma
distribution with corresponding PDF $h(u; \nu)=u^{\nu-1} e^{-u}/\Gamma(\nu),~u>0$, we denote it by ${\mathbf{Y}}\sim \mathcal{MMNG}_p({\boldsymbol{\xi}}, {\mathbf{\Omega}}, {\boldsymbol{\delta}}, \nu)$. Then, the PDF of ${\mathbf{Y}}$ can be obtained from (\ref{density-y}) as follows:
\begin{eqnarray*}
f_{\mathcal{MMNG}_p}({\boldsymbol{y}})=\frac{\sqrt{2\pi}}{\eta^\nu\Gamma(\nu)}  \exp\left({\frac{A^2}{2}}\right) \phi_p({\boldsymbol{y}}; {\boldsymbol{\xi}}, {\mathbf{ \Sigma}}_{\mathbf{Y}}) \int_{-A}^{+\infty}(z+A)^{\nu-1}\phi(z) dz, ~~~~~{\mathbf{y}}\in {\mathbb{R}}^p,
\end{eqnarray*}where $\eta=\sqrt{{\boldsymbol{\delta}}^\top {\boldsymbol{\omega}} {\mathbf{ \Sigma}}_{\mathbf{Y}}^{-1}{\boldsymbol{\omega}}{\boldsymbol{\delta}}}$,  $A=\eta^{-1}\left[{\boldsymbol{\delta}}^\top {\boldsymbol{\omega}} {\mathbf{ \Sigma}}_{\mathbf{Y}}^{-1}({\boldsymbol{y}}-{\boldsymbol{\xi}}) -1 \right]$ and  ${\mathbf{ \Sigma}}_{\mathbf{Y}}={\mathbf{\Omega}}-{\boldsymbol{\omega}}{\mathbf{{\boldsymbol{\delta}}{\boldsymbol{\delta}}}^\top} {\boldsymbol{\omega}}$.
By using  the MGF  in (\ref{MtY}),  for ${\mathbf{Y}}\sim \mathcal{MMNG}_p({\boldsymbol{\xi}},{\mathbf{\Omega}},{\boldsymbol{\delta}},\nu)$, we  obtain
\begin{eqnarray*}
M_{\mathbf{Y}}({\mathbf{t}})=e^{{\mathbf{t}}^\top {\boldsymbol{\xi}}+  \frac{1}{2}{
{\mathbf{t}}^\top {\mathbf{ \Sigma}}_{\mathbf{Y}}  {\mathbf{t}}}}
 \left(1-{\mathbf{t}}^\top{\boldsymbol{\omega}}{\boldsymbol{\delta}}\right)^{-\nu},~~~~{\mathbf{t}}^\top{\boldsymbol{\omega}}{\boldsymbol{\delta}}\neq1,~~~\forall {\mathbf{t}}.
\end{eqnarray*}
From   the expressions in (\ref{M1X})-(\ref{M4Y}), and the fact that ${\rm E}(U^r)=\Gamma(\nu+r)/\Gamma(\nu)$, for positive constant $r$, we can compute the first four moments of ${\mathbf{Y}}$ by substituting
${\rm E}(U)=\nu$, ${\rm E}\left(U^2\right)=\nu(\nu+1)$, ${\rm E}\left(U^3\right)=\nu(\nu+1)(\nu+2)$, and ${\rm E}\left(U^4\right)=\nu(\nu+1)(\nu+2)(\nu+3)$. Specifically, we find   that
${\rm E}({\mathbf{Y}})={\boldsymbol{\xi}}+\nu{\boldsymbol{\omega}}{\boldsymbol{\delta}}$ and
$\textrm{var}\left({\mathbf{Y}}\right)={\mathbf{\Omega}}+(\nu-1){\boldsymbol{\omega}}{\boldsymbol{\delta}}{\boldsymbol{\delta}}^\top{\boldsymbol{\omega}}$.

\begin{figure}[htp]
 \centerline{ \includegraphics[scale=.7]{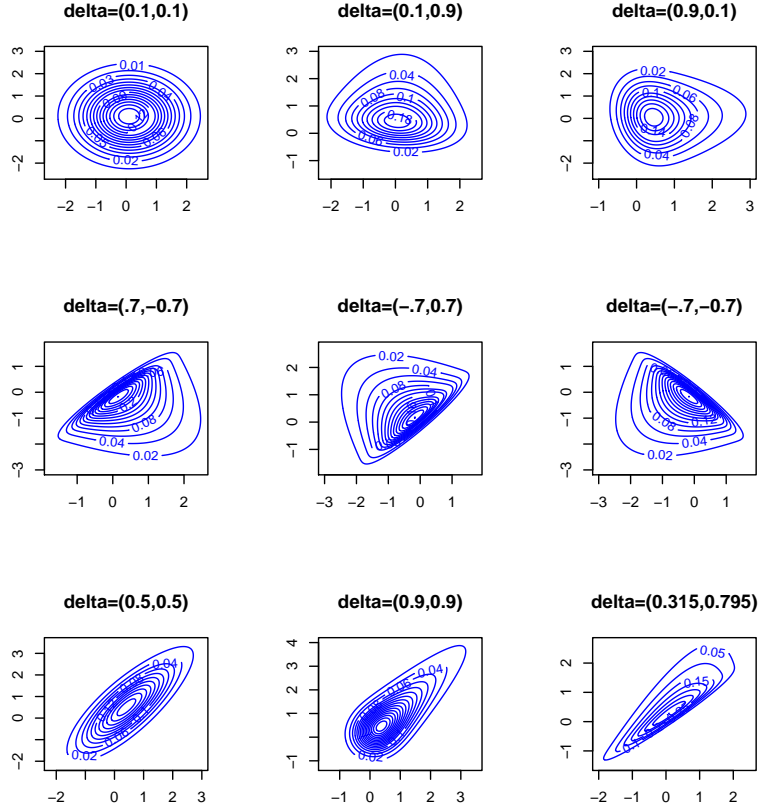}   }
\caption{Contour plots of $\mathcal{MMNE}_2$ distribution for different choices of ${\boldsymbol{\delta}}$.  For  the  first two rows, the scale matrix is ${\mathbf{\Omega}}=(1, 0; 0 ,1)$, while for the third row, it is  ${\mathbf{\Omega}}=(1, 1; 1 ,1.5)$.
 \label{PDFContour1}}
\end{figure}

\noindent\textbf{Definition 1.}  (Bose {\rm et al.} \cite{Bose et al. (2002)} ;  Steutel and  Van Harn \cite{Steutel and Van Harn (2004)}  )
A random vector ${\mathbf{Y}}$ (or its distribution) is said to be infinitely divisible if, for
each $n\geq1$, there exist independent and identically distributed (iid) random vectors
${\mathbf{Y}}_1,  \ldots, {\mathbf{Y}}_n$ such that ${\mathbf{Y}}\stackrel{d}{=}{\mathbf{Y}}_1+  \cdots + {\mathbf{Y}}_n$.

\begin{thm}\label{infinitely-divisible}
The $\mathcal{MMNG}$ distribution, in the multivariate case, is infinitely divisible.
\end{thm}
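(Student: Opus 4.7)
The plan is to reduce the claim to the classical facts that the univariate Gamma distribution and the multivariate Gaussian are both infinitely divisible. Specializing the stochastic representation (\ref{Repre}) to the Gamma case gives
$$\mathbf{Y} \stackrel{d}{=} \boldsymbol{\xi} + \boldsymbol{\omega}\boldsymbol{\delta}\, U + \boldsymbol{\omega}\mathbf{Z},$$
with $U \sim \mathrm{Gamma}(\nu, 1)$ and $\boldsymbol{\omega}\mathbf{Z} \sim \mathcal{N}_p(\mathbf{0}, \mathbf{\Sigma}_{\mathbf{Y}})$ independent. Since independent sums of infinitely divisible random vectors are again infinitely divisible, it is enough to split each piece of this decomposition into $n$ iid pieces separately.

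Explicitly, for an arbitrary $n \geq 1$ I would choose mutually independent families $U_1, \ldots, U_n \stackrel{iid}{\sim} \mathrm{Gamma}(\nu/n, 1)$ and $\mathbf{W}_1, \ldots, \mathbf{W}_n \stackrel{iid}{\sim} \mathcal{N}_p(\mathbf{0}, \mathbf{\Sigma}_{\mathbf{Y}}/n)$, and set
$$\mathbf{Y}_i := \frac{\boldsymbol{\xi}}{n} + \boldsymbol{\omega}\boldsymbol{\delta}\, U_i + \mathbf{W}_i, \qquad i \in \{1, \ldots, n\}.$$
Additivity of the Gamma shape parameter under convolution together with additivity of Gaussian covariances then yields $\sum_i U_i \stackrel{d}{=} U$ and $\sum_i \mathbf{W}_i \stackrel{d}{=} \boldsymbol{\omega}\mathbf{Z}$, so $\sum_{i=1}^n \mathbf{Y}_i \stackrel{d}{=} \mathbf{Y}$ with iid $\mathbf{Y}_i$, which is precisely what Definition 1 demands.

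An equivalent MGF-level verification starts from
$$M_{\mathbf{Y}}(\mathbf{t}) = \exp\!\bigl(\mathbf{t}^\top\boldsymbol{\xi} + \tfrac{1}{2}\mathbf{t}^\top\mathbf{\Sigma}_{\mathbf{Y}}\mathbf{t}\bigr)\bigl(1 - \mathbf{t}^\top\boldsymbol{\omega}\boldsymbol{\delta}\bigr)^{-\nu},$$
derived from Lemma \ref{Lem-CF-MG}: the $n$-th root has the same form with $\boldsymbol{\xi}/n$, $\mathbf{\Sigma}_{\mathbf{Y}}/n$ and $\nu/n$ replacing the original parameters, and is manifestly a valid MGF. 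No serious obstacle is anticipated here. The one subtlety worth flagging is that the summands $\mathbf{Y}_i$ need not themselves lie in the $\mathcal{MMNG}$ family, since rescaling $\mathbf{\Omega}$ by $1/n$ forces $\boldsymbol{\delta}$ to rescale by $\sqrt{n}$ and can violate the constraint $|\delta_i| < 1$; this is harmless, because infinite divisibility only requires iid summands from \emph{some} valid joint distribution.
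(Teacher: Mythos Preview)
Your proposal is correct and follows essentially the same route as the paper: both split the Gamma mixing variable into $n$ iid $\mathrm{Gamma}(\nu/n,1)$ pieces and the Gaussian part into $n$ iid $\mathcal{N}_p(\mathbf{0},\cdot/n)$ pieces, then appeal to additivity under convolution. The only cosmetic difference is that the paper first reduces to the normalized case $\mathbf{X}\sim\mathcal{MMNG}_p(\mathbf{0},\overline{\mathbf{\Omega}},\boldsymbol{\delta},\nu)$ before decomposing, whereas you carry the location $\boldsymbol{\xi}$ and scale $\boldsymbol{\omega}$ through explicitly and also offer the MGF $n$-th-root check; your closing remark about the summands possibly falling outside the $\mathcal{MMNG}$ parameter constraints is a valid clarification that the paper omits.
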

\begin{proof}[\bf Proof.]  Without loss of generality, let ${\mathbf{X}}\sim \mathcal{MMNG}_p\left({\boldsymbol{0}}, {\overline{\mathbf{\Omega}}}, {\boldsymbol{\delta}}, \nu\right)$ and  $ \textbf{X}_i\stackrel{d}{=} {\boldsymbol{\delta}} U_i + {\mathbf{Z}}_i$, where $U_i\sim  Gamma\left(\alpha=\frac{\nu}{n},\beta=1\right)$ and ${\mathbf{Z}}_i \sim \mathcal{N}_p\left(0,\frac{1}{n}\left({\overline{\mathbf{\Omega}}}-{\boldsymbol{\delta}}{\boldsymbol{\delta}}^\top\right)\right)$ be independent random variables. It is easy to show that $\sum_{i=1}^{n} U_i\sim  Gamma(\nu,1)$ and $\sum_{i=1}^{n} {\mathbf{Z}}_i\sim \mathcal{N}_p\left(0,{\overline{\mathbf{\Omega}}}-{\boldsymbol{\delta}}{\boldsymbol{\delta}}^\top\right)$, and so we can write
${\mathbf{X}}\stackrel{d}{=}{\mathbf{X}}_1+ \cdots+ {\mathbf{X}}_n$. Hence, the required result.
\end{proof}

In the following, a particular case of the $\mathcal{MMNG}$ distribution with $\nu=1$ is considered.
Upon  substituting  $\nu=1$, the mixing distribution of $U$ follows the standard exponential distribution and the distribution of  ${\mathbf{Y}}$ in this case is denoted by $\mathcal{MMNE}_p({\boldsymbol{\xi}},{\mathbf{\Omega}},{\boldsymbol{\delta}})$. Then, the PDF of ${\mathbf{Y}}$ can be obtained as
\begin{eqnarray*}
f_{\mathcal{MMNE}_p}({\boldsymbol{y}})=\frac{\sqrt{2\pi}}{\eta} \exp\left({\frac{A^2}{2}}\right) \phi_p({\boldsymbol{y}}; {\boldsymbol{\xi}}, {\mathbf{ \Sigma}}_{\mathbf{Y}})\Phi(A), ~~~~~{\mathbf{y}}\in {\mathbb{R}}^p.
\end{eqnarray*}
 Fig. \ref{PDFContour1}  presents the PDFs of the bivariate  $\mathcal{MMNE}$ distribution for
 ${\mathbf{\Omega}}=(1, 0; 0 ,1)$ and
 ${\mathbf{\Omega}}=(1, 1; 1 ,1.5)$,
and different choices of ${\boldsymbol{\delta}}$ for  ${\boldsymbol{\xi}}=(0,0)^\top$.
 Fig. \ref{PDFContour1}   shows that the $\mathcal{MMNE}$ distribution exhibits a wide variety of density shapes,  in terms of skewness. The  PDF of  the $\mathcal{MMNE}$ distribution clearly  depends on ${\mathbf{\Omega}}$ and  ${\boldsymbol{\delta}}$.
The following theorem is useful in the implementation of the EM algorithm for the
ML estimation of the parameters of the $\mathcal{MMNE}$ distribution.

\begin{thm}\label{Thm-conditional}
If ${\mathbf{Y}}\sim \mathcal{MMNE}_p({\boldsymbol{\xi}},{\mathbf{\Omega}},{\boldsymbol{\delta}})$ and the random variable $U$ follows the standard exponential distribution, then
$
U|\left({\mathbf{Y}}={\mathbf{y}}\right) \sim \mathcal{TN}\left(\eta^{-1}A,\eta^{-2},(0,\infty)\right).
$
Furthermore, for $k\in\{2, 3,  \ldots \}$,
\begin{eqnarray*}\label{EU|Y}
{\rm E}[U|{\mathbf{Y}}={\boldsymbol{y}}]=\eta^{-1}\left(A +\frac{\phi(A)}{\Phi(A)}\right),~~~~
{\rm E}[U^k|{\mathbf{Y}}={\boldsymbol{y}}]=A\eta^{-1} {\rm E}\left[U^{k-1}|{\mathbf{Y}}={\boldsymbol{y}}\right]+(k-1)\eta^{-2}
{\rm E}\left[U^{k-2}|{\mathbf{Y}}={\boldsymbol{y}}\right].
\end{eqnarray*}
\end{thm}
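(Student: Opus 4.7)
The natural approach is to compute the conditional density $f_{U|\mathbf{Y}}(u|\mathbf{y})$ directly via Bayes' rule and then read off the truncated normal parameters by completing the square; the moment formulas then follow from the well-known moment recursion for a normal distribution truncated at zero.

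First I would write $f(u \mid \mathbf{y}) \propto f(\mathbf{y}\mid u)\, h(u)$, where by the hierarchical representation $(\ref{Hierarchical.U})$ and the standard exponential assumption,
\begin{eqnarray*}
f(u\mid\mathbf{y}) \;\propto\; \phi_p\!\left(\mathbf{y};\,\boldsymbol{\xi}+\boldsymbol{\omega}\boldsymbol{\delta} u,\,\mathbf{\Sigma}_{\mathbf{Y}}\right)\exp(-u),\qquad u>0.
\end{eqnarray*}
Keeping only the terms that depend on $u$ in the exponent, and using $\eta^{2}=\boldsymbol{\delta}^\top\boldsymbol{\omega}\mathbf{\Sigma}_{\mathbf{Y}}^{-1}\boldsymbol{\omega}\boldsymbol{\delta}$ together with the definition of $A$, the exponent becomes
\begin{eqnarray*}
-\tfrac{1}{2}\eta^{2}u^{2}\;+\;u\bigl[\boldsymbol{\delta}^\top\boldsymbol{\omega}\mathbf{\Sigma}_{\mathbf{Y}}^{-1}(\mathbf{y}-\boldsymbol{\xi})-1\bigr] \;=\; -\tfrac{1}{2}\eta^{2}u^{2}+\eta A\,u.
\end{eqnarray*}

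Completing the square in $u$ yields $-\tfrac{\eta^{2}}{2}\bigl(u-\eta^{-1}A\bigr)^{2}$ up to a $u$-free constant, so that
\begin{eqnarray*}
f(u\mid\mathbf{y})\;\propto\;\exp\!\left\{-\tfrac{1}{2}\eta^{2}\bigl(u-\eta^{-1}A\bigr)^{2}\right\}\mathbf{1}_{(0,\infty)}(u),
\end{eqnarray*}
which is precisely the density of $\mathcal{TN}\!\left(\eta^{-1}A,\eta^{-2},(0,\infty)\right)$. This identifies the conditional distribution.

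For the moments, I would invoke the standard recursion for a normal variable with mean $\mu$ and variance $\sigma^{2}$ truncated to $(0,\infty)$, namely $E[U^{k}]=\mu E[U^{k-1}]+(k-1)\sigma^{2}E[U^{k-2}]$ for $k\geq 2$; this comes from integration by parts with $a=0$, which kills the usual boundary term $a^{k-1}\phi((a-\mu)/\sigma)$. Plugging in $\mu=\eta^{-1}A$ and $\sigma^{2}=\eta^{-2}$ gives the claimed recurrence. For the first moment, the Mills-ratio formula for a left-truncated normal, $E[U]=\mu+\sigma\,\phi(\mu/\sigma)/\Phi(\mu/\sigma)$, yields $E[U\mid\mathbf{Y}=\mathbf{y}]=\eta^{-1}\bigl(A+\phi(A)/\Phi(A)\bigr)$ since $\mu/\sigma=A$.

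There is no real obstacle here; the only thing that needs care is the bookkeeping in the linear coefficient of $u$, where the $-1$ coming from $h(u)=e^{-u}$ combines with $\boldsymbol{\delta}^\top\boldsymbol{\omega}\mathbf{\Sigma}_{\mathbf{Y}}^{-1}(\mathbf{y}-\boldsymbol{\xi})$ to produce exactly the bracketed expression in the definition of $A$, so that the completed square has mean $\eta^{-1}A$ rather than $\eta^{-1}\boldsymbol{\delta}^\top\boldsymbol{\omega}\mathbf{\Sigma}_{\mathbf{Y}}^{-1}(\mathbf{y}-\boldsymbol{\xi})$. Once this is checked, the truncated-normal identification and the moment formulas are immediate.
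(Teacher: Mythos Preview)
Your argument is correct and follows exactly the route the paper indicates: the paper's proof consists of the single sentence ``The proof of the conditional distribution is completed easily by the use of Bayes rule,'' and you have supplied the details of that Bayes-rule computation (completing the square in $u$) together with the standard truncated-normal moment recursion that the paper leaves implicit. There is nothing to add.
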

\begin{proof}[\bf Proof.]
The proof of the conditional distribution is  completed easily by the use of Bayes rule.
\end{proof}

Now, we can obtain the ML estimates of the parameters of $\mathcal{MMNE}$ distribution.  By using Theorem \ref{Thm-conditional} and  letting
\begin{eqnarray*}
\widehat{E_{i1}}^{(k)}={\rm E}\left[U_i|{\mathbf{Y}}_i={\boldsymbol{y}}_i,\widehat{{\boldsymbol{\theta}}}^{(k)}\right]=\frac{1}{\widehat{\eta}^{(k)}}\left(\widehat{A}_i^{(k)} +\frac{\phi\left(\widehat{A}_i^{(k)}\right)}{\Phi\left(\widehat{A}_i^{(k)}\right)}\right)\mbox{ and }
\widehat{E_{i2}}^{(k)}=
{\rm E}\left[U_i^2|{\mathbf{Y}}_i={\boldsymbol{y}}_i,\widehat{{\boldsymbol{\theta}}}^{(k)}\right]=\frac{1}{{~\widehat{\eta}^{(k)}}^2}
\left[{~\widehat{A}_i^{(k)}}^2 +\widehat{A}_i^{(k)}\frac{\phi\left(\widehat{A}_i^{(k)}\right)}{\Phi\left(\widehat{A}_i^{(k)}\right)}+1\right],
\end{eqnarray*}
in  expression  (\ref{Eihat-general}),  the EM algorithm for the $\mathcal{MMNE}$ distribution  can be performed,
where  $\widehat{\eta}^{(k)}= \sqrt{{~\widehat{{\boldsymbol{\alpha}}}^{(k)}}^\top  {~\widehat{\mathbf{\Sigma}}_{\mathbf{Y}}^{(k)}}^{-1} \widehat{{\boldsymbol{\alpha}}}^{(k)}}$    and     $\widehat{A}_i^{(k)}= {~\widehat{\eta}^{(k)}}^{-1} \left[{~\widehat{{\boldsymbol{\alpha}}}^{(k)}}^\top  {~\widehat{\mathbf{\Sigma}}_{\mathbf{Y}}^{(k)}}^{-1}\left({\boldsymbol{y}}_i-\widehat{{\boldsymbol{\xi}}}^{(k)}\right)   -1 \right]$.
Note that, in the case of $\mathcal{MMNE}$ distribution, the distribution of $U$ does not have any parameter,  and  so there is no need to estimate ${\boldsymbol{\nu}}$ in the EM algorithm and so M-step 2 must be skipped.

By using the fact that $E(U^m)=m!$, for $m\in\{1, 2, \ldots $\}, and by using expressions  in (\ref{M1Y})-(\ref{M4Y}),  for random vector ${\mathbf{Y}}\sim \mathcal{MMNE}_p({\boldsymbol{\xi}},{\mathbf{\Omega}},{\boldsymbol{\delta}})$, we have
\begin{eqnarray}
M_1({\mathbf{Y}})&=& {\boldsymbol{\xi}}+{\boldsymbol{\omega}}{\boldsymbol{\delta}},\label{M1Yexp}\\
M_2({\mathbf{Y}})&=& {\boldsymbol{\xi}}\otimes{\boldsymbol{\xi}}^\top+{\boldsymbol{\xi}}\otimes {\boldsymbol{\delta}}^\top{\boldsymbol{\omega}}+{\boldsymbol{\omega}}{\boldsymbol{\delta}}\otimes {\boldsymbol{\xi}}^\top+\left({\mathbf{ \Sigma}}_{\mathbf{Y}}+2{{\boldsymbol{\omega}}\boldsymbol{\delta}}{\boldsymbol{\delta}}^\top{\boldsymbol{\omega}}\right), \label{M2Yexp}\\
M_3({\mathbf{Y}})&=&{\boldsymbol{\xi}}{\boldsymbol{\xi}}^\top \otimes{\boldsymbol{\xi}}+{\boldsymbol{\xi}}{\boldsymbol{\xi}}^\top\otimes{\boldsymbol{\omega}}{\boldsymbol{\delta}}
+{\boldsymbol{\xi}}{\boldsymbol{\delta}}^\top{\boldsymbol{\omega}} \otimes{\boldsymbol{\xi}}
+{\boldsymbol{\omega}}{\boldsymbol{\delta}}\otimes{\boldsymbol{\xi}}{\boldsymbol{\xi}}^\top+ \left({\mathbf{ \Sigma}}_{\mathbf{Y}}+2{{\boldsymbol{\omega}}\boldsymbol{\delta}}{\boldsymbol{\delta}}^\top{\boldsymbol{\omega}} \right)\otimes{\boldsymbol{\xi}} + {\boldsymbol{\xi}}\otimes \left({\mathbf{ \Sigma}}_{\mathbf{Y}}+2{{\boldsymbol{\omega}}\boldsymbol{\delta}}{\boldsymbol{\delta}}^\top{\boldsymbol{\omega}} \right)\nonumber\\
&&+
\mathrm{vec}\left({\mathbf{ \Sigma}}_{\mathbf{Y}}+2{{\boldsymbol{\omega}}\boldsymbol{\delta}}{\boldsymbol{\delta}}^\top{\boldsymbol{\omega}} \right)\otimes{\boldsymbol{\xi}}^\top
+ {\boldsymbol{\omega}}{\mathbf{{\boldsymbol{\delta}}}}  \otimes{\mathbf{ \Sigma}}_{\mathbf{Y}}
+ \mathrm{vec}\left({\mathbf{ \Sigma}}_{\mathbf{Y}}\right){\boldsymbol{\delta}}^\top{\boldsymbol{\omega}}+\left({\mathbf{I}}_p\otimes {\boldsymbol{\omega}}{\boldsymbol{\delta}}\right)\left[ {\mathbf{ \Sigma}}_{\mathbf{Y}}+ 6{{\boldsymbol{\omega}}\boldsymbol{\delta}}{\boldsymbol{\delta}}^\top{\boldsymbol{\omega}}\right],  \label{M3Yexp}\\ 
M_4({\mathbf{Y}})&=& {\boldsymbol{\xi}}{\boldsymbol{\xi}}^\top \otimes{\boldsymbol{\xi}}{\boldsymbol{\xi}}^\top
+  {\boldsymbol{\xi}}{\boldsymbol{\xi}}^\top \otimes{\boldsymbol{\xi}}({\boldsymbol{\omega}}  {\boldsymbol{\delta}})^\top
+{\boldsymbol{\xi}}{\boldsymbol{\xi}}^\top \otimes ({\boldsymbol{\omega}}{\boldsymbol{\delta}} ){\boldsymbol{\xi}}^\top+ {\boldsymbol{\xi}}({\boldsymbol{\omega}}{\boldsymbol{\delta}} )^\top \otimes{\boldsymbol{\xi}}{\boldsymbol{\xi}}^\top
+{\boldsymbol{\xi}}{\boldsymbol{\xi}}^\top \otimes \left({\mathbf{ \Sigma}}_{\mathbf{Y}}+2{{\boldsymbol{\omega}}\boldsymbol{\delta}}{\boldsymbol{\delta}}^\top{\boldsymbol{\omega}}\right)\nonumber\\
&&+
({\boldsymbol{\xi}} \otimes{\boldsymbol{\xi}}) \left(\mathrm{vec}\left({\mathbf{ \Sigma}}_{\mathbf{Y}}+2{{\boldsymbol{\omega}}\boldsymbol{\delta}}{\boldsymbol{\delta}}^\top{\boldsymbol{\omega}} \right) \right)^\top
+{\boldsymbol{\xi}}  \otimes \left({\mathbf{ \Sigma}}_{\mathbf{Y}}+2{{\boldsymbol{\omega}}\boldsymbol{\delta}}{\boldsymbol{\delta}}^\top{\boldsymbol{\omega}} \right) \otimes{\boldsymbol{\xi}}^\top + {\boldsymbol{\xi}}  \otimes {\boldsymbol{\omega}}\left(M_3^{\mathbf{X}} \right)^\top\left({\boldsymbol{\omega}}\otimes{\boldsymbol{\omega}}\right)
+({\boldsymbol{\omega}}{\boldsymbol{\delta}} ){\boldsymbol{\xi}}^\top \otimes{\boldsymbol{\xi}}{\boldsymbol{\xi}}^\top
 \nonumber\\
&&+{\boldsymbol{\xi}}^\top  \otimes    \left({\mathbf{ \Sigma}}_{\mathbf{Y}}+2{{\boldsymbol{\omega}}\boldsymbol{\delta}}{\boldsymbol{\delta}}^\top{\boldsymbol{\omega}}\right)   \otimes {\boldsymbol{\xi}}
+{\boldsymbol{\xi}}^\top  \otimes \mathrm{vec}\left({\mathbf{ \Sigma}}_{\mathbf{Y}}+2{{\boldsymbol{\omega}}\boldsymbol{\delta}}{\boldsymbol{\delta}}^\top{\boldsymbol{\omega}}\right)   \otimes{\boldsymbol{\xi}}^\top
+{\boldsymbol{\xi}}^\top  \otimes ({\boldsymbol{\omega}}\otimes{\boldsymbol{\omega}}) M_3^{\mathbf{X}} {\boldsymbol{\omega}}
 \nonumber\\
&&+\left({\mathbf{ \Sigma}}_{\mathbf{Y}}+2{{\boldsymbol{\omega}}\boldsymbol{\delta}}{\boldsymbol{\delta}}^\top{\boldsymbol{\omega}} \right)   \otimes {\boldsymbol{\xi}}{\boldsymbol{\xi}}^\top+ {\boldsymbol{\omega}}  (M_3^{\mathbf{X}})^\top \left({\boldsymbol{\omega}}\otimes{\boldsymbol{\omega}}\right)\otimes{\boldsymbol{\xi}}
+  ({\boldsymbol{\omega}}\otimes{\boldsymbol{\omega}}) M_3^{\mathbf{X}} {\boldsymbol{\omega}}   \otimes{\boldsymbol{\xi}}^\top
+({\boldsymbol{\omega}}\otimes{\boldsymbol{\omega}}) M_4^{\mathbf{X}} ({\boldsymbol{\omega}}\otimes{\boldsymbol{\omega}}),\label{M4Yexp}
\end{eqnarray}
where
\begin{eqnarray*}
M_3^{\mathbf{X}}&=& {\boldsymbol{\delta}} \otimes {\mathbf{ \Sigma}}_{\mathbf{X}}+\mathrm{vec}({\mathbf{ \Sigma}}_{\mathbf{X}}){\boldsymbol{\delta}}^\top+ (\mathbf{I}_p\otimes {\boldsymbol{\delta}} ){\mathbf{ \Sigma}}_{\mathbf{X}} +  6 (\mathbf{I}_p\otimes {\boldsymbol{\delta}} )({\boldsymbol{\delta}}\otimes{\boldsymbol{\delta}}^\top),\\
M_4^{\mathbf{X}}&=&(\mathbf{I}_{p^2} +{\bf U}_{p,p})({\mathbf{ \Sigma}}_{\mathbf{X}}\otimes {\mathbf{ \Sigma}}_{\mathbf{X}})+
vec({\mathbf{ \Sigma}}_{\mathbf{X}})(\mathrm{vec}({\mathbf{ \Sigma}}_{\mathbf{X}}))^\top+ 2[
{\boldsymbol{\delta}}  \otimes {\boldsymbol{\delta}}^\top \otimes {\mathbf{ \Sigma}}_{\mathbf{X}}+ {\boldsymbol{\delta}}\otimes {\mathbf{ \Sigma}}_{\mathbf{X}}   \otimes  {\boldsymbol{\delta}}^\top+ {\mathbf{ \Sigma}}_{\mathbf{X}} \otimes {\boldsymbol{\delta}}\otimes{\boldsymbol{\delta}}^\top\nonumber \\
&&+  {\boldsymbol{\delta}}^\top\otimes {\mathbf{ \Sigma}}_{\mathbf{X}}   \otimes  {\boldsymbol{\delta}}+ {\boldsymbol{\delta}}^\top\otimes \mathrm{vec}({\mathbf{ \Sigma}}_{\mathbf{X}})   \otimes  {\boldsymbol{\delta}}^\top  +({\boldsymbol{\delta}}  \otimes {\boldsymbol{\delta}}) (\mathrm{vec}({\mathbf{ \Sigma}}_{\mathbf{X}}) )^\top ]+24 {\boldsymbol{\delta}}{\boldsymbol{\delta}}^\top\otimes {\boldsymbol{\delta}}{\boldsymbol{\delta}}^\top,
\end{eqnarray*}
and ${\mathbf{ \Sigma}}_{\mathbf{X}}={\overline{\mathbf{\Omega}}}-{\mathbf{{\boldsymbol{\delta}}{\boldsymbol{\delta}}}^\top}$.
In particular, the mean vector and covariance matrix are
$ {\rm E}({\mathbf{Y}})={\boldsymbol{\xi}}+{\boldsymbol{\omega}}{\boldsymbol{\delta}}$  and $\textrm{var}\left({\mathbf{Y}}\right)={\mathbf{\Omega}}$.

\begin{thm}
The $\mathcal{MMNE}$ distribution,  in the multivariate case, is log-concave.
\end{thm}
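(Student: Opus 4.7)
The plan is to establish log-concavity of the $\mathcal{MMNE}$ density by recognizing it as the ${\boldsymbol{y}}$-marginal of a log-concave joint density on $\mathbb{R}^p\times(0,\infty)$, and then invoking the standard result (Prékopa's theorem) that marginals of jointly log-concave densities are log-concave.

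First, I would write down the joint density of $({\mathbf{Y}}, U)$ from the hierarchical representation (\ref{Hierarchical.U}), specialized to $U$ standard exponential:
$$g({\boldsymbol{y}}, u) \;=\; \phi_p\!\bigl({\boldsymbol{y}};\, {\boldsymbol{\xi}}+{\boldsymbol{\omega}}{\boldsymbol{\delta}}\, u,\, {\mathbf{\Sigma}}_{\mathbf{Y}}\bigr)\, e^{-u}, \qquad {\boldsymbol{y}}\in\mathbb{R}^p,\ u>0,$$
defined on the convex set $\mathbb{R}^p\times(0,\infty)$.

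Next, I would check that $\ln g$ is concave on that domain. Ignoring additive constants,
$$\ln g({\boldsymbol{y}}, u) \;=\; -\tfrac{1}{2}\bigl({\boldsymbol{y}}-{\boldsymbol{\xi}}-{\boldsymbol{\omega}}{\boldsymbol{\delta}}\, u\bigr)^\top {\mathbf{\Sigma}}_{\mathbf{Y}}^{-1}\bigl({\boldsymbol{y}}-{\boldsymbol{\xi}}-{\boldsymbol{\omega}}{\boldsymbol{\delta}}\, u\bigr) \;-\; u.$$
The map $({\boldsymbol{y}},u)\mapsto {\boldsymbol{y}}-{\boldsymbol{\xi}}-{\boldsymbol{\omega}}{\boldsymbol{\delta}}\, u$ is affine in the enlarged variable $({\boldsymbol{y}},u)\in\mathbb{R}^{p+1}$, and since ${\mathbf{\Sigma}}_{\mathbf{Y}}^{-1}$ is positive definite, the composition is a convex quadratic. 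Its negative is therefore concave, and the additional linear term $-u$ preserves concavity; hence $g$ is jointly log-concave.

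Finally, I would apply Prékopa's theorem to conclude that the marginal
$$f_{\mathcal{MMNE}_p}({\boldsymbol{y}})\;=\;\int_0^\infty g({\boldsymbol{y}},u)\,du$$
is log-concave on $\mathbb{R}^p$, which is the assertion.

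I do not foresee a genuine obstacle along this route. The key structural observation is that the conditional mean ${\boldsymbol{\xi}}+{\boldsymbol{\omega}}{\boldsymbol{\delta}} u$ depends linearly on $u$, so the exponent remains a concave quadratic in the joint variables $({\boldsymbol{y}},u)$ and is not disturbed by the exponential mixing factor $e^{-u}$. A purely direct approach—differentiating $\ln f_{\mathcal{MMNE}_p}({\boldsymbol{y}})$ twice in ${\boldsymbol{y}}$ and invoking the known log-concavity of $\Phi$ (i.e.\ $(\ln\Phi)''<0$)—is feasible but requires a delicate cancellation between the $\tfrac{1}{2}A^{2}$ term, the normal quadratic, and the $\ln\Phi(A)$ term, so the Prékopa route is substantially cleaner.
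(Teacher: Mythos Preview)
Your argument is correct and takes a genuinely different route from the paper. The paper first invokes affine invariance of log-concavity to reduce to the canonical form $\mathbf{Z}^*\sim\mathcal{MMNE}_p(\mathbf{0},\mathbf{I}_p,\boldsymbol{\delta}_{\mathbf{Z}^*})$, whose density factors as a product of $p-1$ standard normal densities and one univariate $\mathcal{MMNE}_1$ density; it then cites Proposition~3.1 of Negarestani et al.\ (2019) for log-concavity of the univariate factor and concludes via the closure of log-concavity under products of independent components. Your approach bypasses the canonical form entirely: you exhibit the joint density of $(\mathbf{Y},U)$ as manifestly log-concave on the convex set $\mathbb{R}^p\times(0,\infty)$ (a negative semidefinite quadratic in $(\boldsymbol{y},u)$ plus a linear term, multiplied by the indicator of a half-space) and marginalize via Pr\'ekopa. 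The paper's route leans on the canonical-form machinery developed earlier and on an external univariate result, which knits the argument into the paper's narrative but makes it less self-contained. Your route is shorter, needs no external lemma, and immediately generalizes: any log-concave mixing density $h(u;\boldsymbol{\nu})$ (e.g.\ standard gamma with $\nu\ge 1$, hence $\mathcal{MMNG}_p$ for $\nu\ge 1$) yields a log-concave $\mathcal{MMN}$ density by the same one-line computation.
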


\begin{proof}[\bf Proof.] Because log-concavity is preserved by affine transformations, it is sufficient to prove this property for the canonical form $\mathbf{Z}^*\sim \mathcal{MMNE}_p({\boldsymbol{0}},{\mathbf{I}}_p, {\boldsymbol{\delta}}_{\textbf{Z}^*})$.
From \cite{An (1996)} and \cite{Preekopa (1973)}, if the elements of a random vector are independent, and each has a log-concave density function, then their joint density is log-concave. We know that in the canonical form with  PDF in (\ref{density}), the random variables $Z_1, \ldots, Z_p$ are independent of each other. Log-concavity of $\mathcal{MMNE}$ distribution in the univariate case has been established in Proposition 3.1 of \cite{Negarestani et al. (2019)}, and the PDF of the univariate normal distribution is also known to be log-concave. Hence, the result.
\end{proof}

As   shown in Section \ref{sec2}, to compute the mode of the $\mathcal{MMNE}$ distribution, it is sufficient to obtain the mode of the distribution in its canonical form, and then compute the mode of  the distribution using Theorem \ref{Thm5-mode}.
To compute the mode of the distribution in its canonical form, we must calculate the value of the mode in the univariate case. Existence and uniqueness of the mode (log-concavity) of the $\mathcal{MMNE}$ distribution in the univariate case has been discussed  in Proposition 3.1 of  \cite{Negarestani et al. (2019)}.
For this purpose, we recall  the density function of  the univariate $\mathcal{MMNE}$ distribution (given
in \cite{Negarestani et al. (2019)}) as
\begin{eqnarray*}
f_{Z_1}(z;\xi,\omega^2,\lambda)=\frac{\sqrt{1+\lambda^2}}{\omega|\lambda|}e^{-\frac{\sqrt{1+\lambda^2}}{\lambda}z+\frac{1}{2\lambda^2}} \Phi\left(\frac{\lambda\sqrt{1+\lambda^2}z-1}{|\lambda|}\right),
\end{eqnarray*}
where $z={(y-\xi)}/{\omega}$, $\lambda={\delta}/{\sqrt{1-\delta^2}}\neq0$, $y\in \mathbb{R}$,  $\xi\in \mathbb{R}$ is a location parameter and $\omega>0$ is a scale parameter. It is  denoted by $\mathcal{MMNE}_1(\xi,\omega^2,\lambda)$.  For obtaining the mode of $\mathcal{MMNE}_1$,  based on Theorem \ref{Thm5-mode}, we need to solve the equation
${\partial f_{Z_1^*}\left(z;0,1,\lambda_*\right)}/{\partial z}=0$, where $\lambda_*={\delta_*}/{\sqrt{1-{\delta_*}^2}}$. The solution  need to be obtained by  using numerical methods.

\section{Multivariate Measures of Skewness}\label{measures-skewness}
The skewed shape of   the distribution is usually captured  by multivariate skewness measures.
The skewness is a measure of the asymmetry of
a distribution about its mean and its value far from zero indicates stronger
asymmetry of the underlying distribution than that with close to zero skewness
value.
\begin{table}[htb!]
\scriptsize
\center
\caption{Multivariate measures of skewness for the $\mathcal{MMN}$ family.}
\label{skewformula}
\begin{tabular}{ll}
    \hline\\[-0.2cm]
Mardia \cite{Mardia (1970)}  \&   Malkovich and Afifi \cite{Malkovich and Afifi (1973)}&                         $\beta_{1,p}=(\gamma_1^*)^2$\\
Srivastava  \cite{Srivastava (1984)}&                     $\beta_{1p}^2=\frac{1}{p}\sum_{i=1}^{p} \left\{\frac{{\rm E}[{\boldsymbol{\gamma}}_i^\top({\mathbf{Y}}-{\boldsymbol{\mu}})]^3}{\lambda_i^{3/2}}\right\}^2$\\
M\'{o}ri-Rohatgi-Sz\'{e}kely  \cite{Mori et al. (1993)}&   ${\mathbf{s}}=\sum_{i=1}^{p}{\rm E}\left(Z_i^2{\mathbf{Z}}\right) =\left(\sum_{i=1}^{p}{\rm E}\left(Z_i^2 Z_1\right), \ldots,\sum_{i=1}^{p} {\rm E}\left(Z_i^2 Z_p\right) \right)^\top$\\\\
Kollo \cite{Kollo (2008)}&                          ${\mathbf{b}}={\rm E}\left(\sum_{i,j}^{p}(Z_i Z_j){\mathbf{Z}}\right)=\left(\sum_{i,j}^{p}E\left[(Z_i Z_j) Z_1\right], \ldots,\sum_{i,j}^{p}{\rm E}\left[(Z_i Z_j) Z_p\right]\right)^\top$\\\\
Balakrishnan-Brito-Quiroz  \cite{Balakrishnan et al. (2007)}&      $\mathbf{T}=\int_{{\phi}_p} {\mathbf{u}}c_1({\mathbf{u}})d \lambda({\mathbf{u}})$, $Q^*={\mathbf{T}}^\top {\mathbf{\Sigma}}_{\mathbf{Z}}^{-1} {\mathbf{T}}$\\
& The elements of ${\mathbf{T}}$ are $T_r=\frac{3}{p(p+2)} {\rm E}\left(Z_r^3\right)+ 3 \sum_{i\neq r} \frac{1}{p(p+2)} {\rm E}\left(Z_i^2 Z_r\right)$ \\[0.1cm]
Isogai \cite{Isogai (1982)} &                         $s_I=\frac{\left[{\mathbf{\delta}}_* {\rm E}(U)-m_0^* \right]^2}{1+{\mathbf{\delta}}_*^2[\textrm{var}(U)-1])}
$, $s_C=\left({\rm E}(U)-\frac{m_0^*}{{\mathbf{\delta}}_*}\right){\boldsymbol{\delta}}
$\\[0.1cm]
\hline
\end{tabular}
\end{table}
 The best-known scalar function of the vectorial measure of skewness proposed by \cite{Mori et al. (1993)} is its squared norm. Its sampling properties has been thoroughly discussed by \cite{Henze (1997)} and have been implemented in the \textsf{R} package \texttt{MultiSkew}. \cite{Franceschini and Loperfido (2019)} discusses  the usage of Multiskew and briefly review the literature related to the same squared norm. Also, the skewness measure in \cite{Malkovich and Afifi (1973)}  has become an useful tool in projection pursuit (\cite{Loperfido (2018)}).

In this work, multivariate measures of skewness by Mardia \cite{Mardia (1970)},  Malkovich and Afifi \cite{Malkovich and Afifi (1973)},
 Srivastava \cite{Srivastava (1984)}, M\'{o}ri {\rm et al.} \cite{Mori et al. (1993)},  Kollo \cite{Kollo (2008)},  Balakrishnan {\rm et al.} \cite{Balakrishnan et al. (2007)}   and Isogai \cite{Isogai (1982)}  are studied for the $\mathcal{MMN}$ family. Table \ref{skewformula} presents these measures for the $\mathcal{MMN}$ family of distributions. The relevant derivations are given in Appendix B.
In Table \ref{skewformula},  $\gamma_1^*$ is  the  skewness of  $Z_1^*\sim  \mathcal{MMN}_1(0,1,{\mathbf{\delta}}_*; H)$ of the canonical form, respectively.
Srivastava measure uses principal components  ${\mathbf{F}}={\mathbf{\Gamma}} {\mathbf{Y}}$,   where ${\mathbf{\Gamma}}=({\boldsymbol{\gamma}}_1, \ldots, {\boldsymbol{\gamma}}_p)$ is the matrix of eigenvectors of the covariance matrix ${\mathbf{\Delta}}$, that is, an orthogonal matrix such that ${\mathbf{\Gamma}}^\top{\mathbf{\Delta}}{\mathbf{\Gamma}}=\mathbf{\Lambda}$, and  $\mathbf{\Lambda}=\textrm{diag}(\lambda_1, \ldots, \lambda_p)$ is diagonal matrix of corresponding eigenvalues. Here, ${\mathbf{Z}}={\mathbf{\Delta}}^{-1/2}({\mathbf{Y}}-{\boldsymbol{\mu}})=(Z_1, \ldots, Z_p)^\top$ has the distribution $\mathcal{MMN}_p({\boldsymbol{\xi}}_{\mathbf{Z}},{\mathbf{\Omega}}_{\mathbf{Z}},{\boldsymbol{\delta}}_{\mathbf{Z}}; H)$, with its parameters  as
${\boldsymbol{\xi}}_{\mathbf{Z}}={\mathbf{\Delta}}^{-1/2}({\boldsymbol{\xi}}-{\boldsymbol{\mu}})$,
${\mathbf{\Omega}}_{\mathbf{Z}}={\mathbf{\Delta}}^{-1/2}{\mathbf{\Omega}} {\mathbf{\Delta}}^{-1/2}$,
${\boldsymbol{\delta}}_{\mathbf{Z}}={\boldsymbol{\omega}}_{\mathbf{Z}}^{-1} {\mathbf{\Delta}}^{-1/2}  {\boldsymbol{\omega}}{\boldsymbol{\delta}}$, and ${\boldsymbol{\omega}}_{\mathbf{Z}}=( {\mathbf{\Omega}}_{\mathbf{Z}}\odot \mathbf{I}_p)^{1/2}$.
Also, $m_0^*$ is the mode of the  scalar $\mathcal{MMN}$ distribution in the canonical form.
From Table \ref{skewformula}, and using the moments in  (\ref{M1Yexp})-(\ref{M4Yexp}), we can obtain different measures of skewness for the  $ \mathcal{MMNE}_p ({\boldsymbol{\xi}}, {\mathbf{\Omega}}, {\boldsymbol{\delta}})$  distribution as follows:
\begin{itemize}
\item \textit{\textbf{Mardia and Malkovich-Afifi indices}}:
 $\beta_{1,p}=\beta_{1}^*=4\delta_*^6$;

\item\textbf{\textit{ Srivastava index}}:
$
\beta_{1p}^2=\frac{1}{p}\sum_{i=1}^{p} \left\{\frac{{\rm E}[{\boldsymbol{\gamma}}_i^\top({\mathbf{Y}}-{\boldsymbol{\mu}})]^3}{\lambda_i^{3/2}}\right\}^2,
$
where ${\boldsymbol{\gamma}}_i$ and $\lambda_i$ are eigenvectors and corresponding eigenvalues for covariance matrix $\textrm{var}\left({\mathbf{Y}}\right)={\mathbf{\Omega}}$,  when ${\mathbf{Y}}\sim \mathcal{MMNE}_p({\boldsymbol{\xi}},{\mathbf{\Omega}},{\boldsymbol{\delta}})$;

\item \textit{\textbf{M\'{o}ri-Rohatgi-Sz\'{e}kely index}}:
If ${\mathbf{Y}}\sim \mathcal{MMNE}_p({\boldsymbol{\xi}},{\mathbf{\Omega}},{\boldsymbol{\delta}})$,  then for the standardized variable ${\mathbf{Z}}={\mathbf{\Omega}}^{-1/2}({\mathbf{Y}}-{\boldsymbol{\mu}})$, and with ${\mathbf{A}}={\mathbf{\Omega}}^{-1/2}$, we have (see  Appendix A)
\begin{eqnarray}\label{noncentralzzEE}
M_3({\mathbf{Z}})&=&{\rm E}\left[{\mathbf{A}}^\top({\mathbf{Y}}-{\boldsymbol{\mu}})\right]^3= \left({\mathbf{A}}^\top \otimes {\mathbf{A}}^\top\right) M_3({\mathbf{Y}}){\mathbf{A}}-\left[{\mathbf{A}}^\top M_2({\mathbf{Y}}) {\mathbf{A}} \right]\otimes \left[{\mathbf{A}}^\top {\rm E}({\mathbf{Y}})\right]
-{\mathbf{A}}^\top {\rm E}({\mathbf{Y}})\otimes \left[{\mathbf{A}}^\top M_2({\mathbf{Y}}) {\mathbf{A}} \right]\nonumber\\
&&-\mathrm{vec}\left({\mathbf{A}}^\top M_2({\mathbf{Y}}) {\mathbf{A}}\right){\rm E}({\mathbf{Y}})^\top {\mathbf{A}}
+2\left[{\mathbf{A}}^\top {\rm E}({\mathbf{Y}}){\rm E}({\mathbf{Y}})^\top {\mathbf{A}}\right]\otimes\left[{\mathbf{A}}^\top {\rm E}({\mathbf{Y}})\right].
\end{eqnarray}

All the quantities in the M\'{o}ri-Rohatgi-Sz\'{e}kely measure of skewness are specific non-central moments of third order of ${\mathbf{Z}}$, where ${\mathbf{Z}}={\mathbf{\Omega}}^{-1/2}({\mathbf{Y}}-{\boldsymbol{\mu}})\sim \mathcal{MMNE}_p({\boldsymbol{\xi}}_{\mathbf{Z}},{\mathbf{\Omega}}_{\mathbf{Z}},{\boldsymbol{\delta}}_{\mathbf{Z}})$, such that
${\boldsymbol{\xi}}_{\mathbf{Z}}=-{\mathbf{\Omega}}^{-1/2}{\boldsymbol{\omega}}{\boldsymbol{\delta}}$,
${\mathbf{\Omega}}_{\mathbf{Z}}=\textbf{I}_p$  and
${\boldsymbol{\delta}}_{\mathbf{Z}}= {\mathbf{\Omega}}^{-1/2}  {\boldsymbol{\omega}}{\boldsymbol{\delta}}$;

\item \textbf{\textit{Kollo index}}:
To  obtain  Kollo's  measure, we  use the elements of
 non-central moments of third order of ${\mathbf{Z}}$;

\item \textit{\textbf{Balakrishnan-Brito-Quiroz index}}:
Upon substituting $E(U^m)=m!$ for $m=1, 2, \ldots$, the elements of ${\mathbf{T}}$ in Table  \ref{skewformula}, for $r=1, 2, \ldots, p$,  are
$
{\mathbf{T}}_r=\frac{3}{p(p+2)} \left({\mathbf{M}}_3^{\mathbf{Z}}[(r - 1)p + r, r] +\sum_{i\neq r}  {\mathbf{M}}_3^{\mathbf{Z}}[(i - 1)p + i, r]\right),
$
where  ${\mathbf{M}}_3^{\mathbf{Z}}[., .]$ denotes the elements of matrix ${\mathbf{M}}_3^{\mathbf{Z}}$, third moments of  $\mathcal{MMNE}_p({\boldsymbol{\xi}}_{\mathbf{Z}},{\mathbf{\Omega}}_{\mathbf{Z}},{\boldsymbol{\delta}}_{\mathbf{Z}})$ distribution,
and  we can then compute ${\mathbf{Q}}={\mathbf{T}}^\top {\mathbf{\Sigma}}_{\mathbf{T}}^{-1}{\mathbf{T}}$ and  ${\mathbf{Q}}^*={\mathbf{T}}^\top {\mathbf{T}}$;

\item\textbf{\textit{ Isogai index}}:
 By substituting ${\rm E}(U)=\textrm{var}(U)=1$ in Isogai measure of skewness, we have
$S_I=\left({\mathbf{\delta}}_*-m_0^* \right)^2$,
where  $m_0^*$
is the mode of the  $\mathcal{MMNE}_1$ distribution in the canonical form. This index is location and scale invariant.
 The vectorial  measure, given by \cite{Balakrishnan and Scarpa (2012)}, is $S_C=\left(1-\frac{m_0^*}{{\mathbf{\delta}}_*}\right){\boldsymbol{\delta}}$.
Therefore, the direction of ${\boldsymbol{\delta}}$ can be regarded as a measure of vectorial skewness for the $\mathcal{MMNE}$ distribution.
\end{itemize}
\begin{table}[htb!]
\scriptsize
\center
\caption{The Average computational time spent (Atime), the average values (Mean), thecorresponding standard
deviations (Std.), Bias and MSE of the  EM estimates over 1000 samples from the $\mathcal{MMNE}$ model in Subsection \ref{Simul-EST}.   }
\label{tab-sim}
\begin{tabular}{cclccccccccc}
    \hline
n& Atime&& $\xi_1$ & $\xi_2$& $\xi_3$&$\delta_1$&$\delta_2$&$\delta_3$&$\sigma_{11}$&$\sigma_{22}$&$\sigma_{33}$\\
    \hline
$50$&0.3265&Mean&5.0804 &10.1518 &15.1735 &0.1663 &0.4940 &0.2259& 0.3947 &0.5931& 0.9800 \\
&&Std.&  0.2625 &0.3163 &0.4202 &0.3870 &0.3912 &0.4041 &0.0846 &0.1512 &0.2003 \\
&&Bias&0.0804& 0.1518& 0.1735& -0.1337 &-0.2060 &-0.1741& -0.0053 &-0.0069 &-0.0200 \\
&&MSE&0.0753 &0.1230 &0.2065 &0.1675 &0.1953 &0.1934 &0.0072 &0.0229 &0.0405 \\
$100$&0.4824&Mean&5.0409 &10.0554 &15.0637 &0.2360 &0.6221 &0.3346 &0.3979 &0.5973 &0.9871 \\
&&Std.&  0.1755 &0.2028 &0.2814 &0.2581 &0.2453 &0.2651 &0.0584 &0.1110 &0.1492 \\
&&Bias&0.0409 &0.0554 &0.0637 &-0.0640 &-0.0779 &-0.0654 &-0.0021& -0.0027& -0.0129 \\
&&MSE&0.0324 &0.0441 &0.0832 &0.0706 &0.0662 &0.0745 &0.0034 &0.0123 &0.0224 \\
$500$&1.8063&Mean&5.0006& 10.0036& 15.0024& 0.3001 &0.6973 &0.3964 &0.3993 &0.5995 &1.0004 \\
&&Std.&  0.0508 &0.0534 &0.0733 &0.0686 &0.0492 &0.0587 &0.0268 &0.0483 &0.0644 \\
&&Bias&0.0006 &0.0036 &0.0025 &0.0006 & -0.0027 &-0.0036 &0.0007 & 0.0005 & 0.0004\\
&&MSE&0.0026 &0.0029 &0.0054 &0.0047 &0.0024 &0.0035 &0.0007 &0.0023 &0.0041\\
$1000$&4.0388&Mean&5.0004 &9.9996 &15.0025 &0.3006 &0.6995 &0.3971 &0.3999 &0.5997 &0.9981 \\
&&Std.& 0.0345 &0.0355 &0.0529 &0.0435 &0.0328 &0.0424 &0.0180 &0.0336 &0.0447  \\
&&Bias&0.0004 &-0.0004 &0.0024 &0.0001 &-0.0005 &-0.0029 &-0.0001 &-0.0003 &-0.0002  \\
&&MSE&0.0012 &0.0013 &0.0028 &0.0019 &0.0011 &0.0018 &0.0003 &0.0011 &0.0020\\
\hline
\end{tabular}
\end{table}
\section{Simulation Study}
\subsection{Model Fitting}\label{Simul-EST}
This subsection presents the results of a Monte Carlo simulation
study carried out to examine the performance of the proposed estimation method for the $\mathcal{MMNE}$ distribution in the trivariate case.
We evaluate the estimates in terms of Bias and MSE (mean squared error). The results are  based on $1000$ simulated samples from the $\mathcal{MMNE}$ distribution with  parameters
$ {\boldsymbol{\xi}}=(5, 10, 15)^\top$, ${\mathbf{\Omega}}=\textrm{diag}(0.4, 0.6, 1.0)$,  ${\boldsymbol{\delta}}=(0.3, 0.7, 0.4)^\top$ for different sample sizes $n\in\{50, 100, 500, 1000\}$.  We computed the  Bias and the MSE as
$
\textrm{Bias}=\frac{1}{1000}\sum_{j=1}^{1000} (\widehat{\theta}_j-\theta)$ and
$\textrm{MSE}=\frac{1}{1000}\sum_{j=1}^{1000} (\widehat{\theta}_j-\theta)^2$,
where  $\theta$ is the true parameter (each of
${\boldsymbol{\xi}}=(\xi_1,\xi_2,\xi_3)^\top$,
${\boldsymbol{\delta}}=(\delta_1,\delta_2,\delta_3)^\top$  and
${\mathbf{\Omega}}=\textrm{diag}(\sigma_{11},\sigma_{22},\sigma_{33})^\top$)
and  $\widehat{\theta}_j$ is the estimate from the $j$-th simulated sample.
Table \ref{tab-sim} presents the  average computational time spent (Atime) (in seconds) (computational time for the convergence of the EM algorithm), average values (Mean), the corresponding
standard deviations (Std.),  Bias and  MSE  of the EM estimates of all the parameters of the $\mathcal{MMNE}$ model in 1000 simulated samples for each sample size.
It can be observed form Table \ref{tab-sim}
that the Bias and MSE decrease as $n$ increases, revealing  the asymptotic unbiasedness and
consistency of the ML estimates obtained through the EM algorithm.
Note that the EM algorithm presented in this work, for the $\mathcal{MMNE}$ distribution, leads to closed-form expressions, and so  the computational time  required for the convergence of the EM estimates of the parameters is quite short.
\begin{table}[htb!]
\scriptsize
\center
\caption{ Skewness measures in Section \ref{measures-skewness}, for some bivariate $\mathcal{MMNE}$ distributions with presumed parameter ${\boldsymbol{\xi}}=\boldsymbol{0}$ and different choices of parameters ${\boldsymbol{\Omega}}$ and  ${\boldsymbol{\delta}}$.}
\label{tab.skew2var}
\begin{tabular}{cclcccccccc}
\hline
$\#$&\multicolumn{2}{c}{Parameters}&$\beta_{1,p}$&$\beta_{1p}^2$& $s$  &  $b$& $Q^*$& $T$  &  $s_I$& $s_C$\\
\hline\\[-0.15cm]
 $1$&
 ${\boldsymbol{\Omega}}=\begin{bmatrix}   1& 1 \\  1&2.5 \\ \end{bmatrix}$&
 ${\boldsymbol{\delta}}=\begin{bmatrix} 0.750\\ 0.985 \\ \end{bmatrix}$&
 3.966& 1.975 &
 $\begin{bmatrix} 0.825 \\1.812 \\ \end{bmatrix}$
&
$\begin{bmatrix} 1.448 \\3.179\\ \end{bmatrix}$
&
0.558
&
$\begin{bmatrix} 0.310\\ 0.680\\ \end{bmatrix}$
&
 0.788
&
$\begin{bmatrix} 0.667\\ 0.876  \\ \end{bmatrix}$
 \\[0.2cm]
 \\[-0.15cm]
   $2$&
 ${\boldsymbol{\Omega}}=\begin{bmatrix}   1& 0 \\  0&2.5 \\ \end{bmatrix}$&
 ${\boldsymbol{\delta}}=\begin{bmatrix} 0.200\\ 0.975\\ \end{bmatrix}$&
 3.889& 1.718 &
 $\begin{bmatrix} 0.396\\ 1.932 \\ \end{bmatrix}$
&
$\begin{bmatrix} 0.552\\ 2.692\\ \end{bmatrix}$
&
0.547
&
$\begin{bmatrix} 0.149\\ 0.724\\ \end{bmatrix}$
&
0.673
&
$\begin{bmatrix}  0.165\\ 0.804  \\ \end{bmatrix}$
 \\[0.2cm]
 \\[-0.15cm]
     $3$&
 ${\boldsymbol{\Omega}}=\begin{bmatrix}   1& 0 \\  0&2.5 \\ \end{bmatrix}$&
 ${\boldsymbol{\delta}}=\begin{bmatrix} 0.000\\ 0.995\\ \end{bmatrix}$&
 3.881& 1.941 &
 $\begin{bmatrix} 0.000 \\1.970 \\ \end{bmatrix}$
&
$\begin{bmatrix} 0.000\\ 1.970\\ \end{bmatrix}$
&
 0.546
&
$\begin{bmatrix} 0.000\\ 0.739\\ \end{bmatrix}$
&
 0.666
&
$\begin{bmatrix}   0.000\\ 0.816  \\ \end{bmatrix}$
 \\[0.2cm]
 \\[-0.15cm]
     $4$&
 ${\boldsymbol{\Omega}}=\begin{bmatrix}   1& 1 \\ 1&2.5 \\ \end{bmatrix}$&
 ${\boldsymbol{\delta}}=\begin{bmatrix} 0.650\\ 0.995\\ \end{bmatrix}$&
3.890& 1.774 &
 $\begin{bmatrix} 0.562\\ 1.890 \\ \end{bmatrix}$
&
$\begin{bmatrix}  0.870\\ 2.924\\ \end{bmatrix}$
&
0.547
&
$\begin{bmatrix} 0.211 \\0.709\\ \end{bmatrix}$
&
 0.675
&
$\begin{bmatrix}   0.536\\ 0.821  \\ \end{bmatrix}$
 \\[0.2cm]
 \\[-0.15cm]
     $5$&
 ${\boldsymbol{\Omega}}=\begin{bmatrix}   1& 1 \\ 1&2.5 \\ \end{bmatrix}$&
  ${\boldsymbol{\delta}}=\begin{bmatrix} 0.850\\ 0.900\\ \end{bmatrix}$&
3.337 &1.511 &
 $\begin{bmatrix} 1.099\\ 1.460 \\ \end{bmatrix}$
&
$\begin{bmatrix}  2.154\\ 2.862 \\ \end{bmatrix}$
&
 0.469
&
$\begin{bmatrix}  0.412\\ 0.547\\ \end{bmatrix}$
&
 0.412
&
$\begin{bmatrix}   0.562\\ 0.595  \\ \end{bmatrix}$
 \\[0.2cm]
 \\[-0.15cm]
    $6$&
 ${\boldsymbol{\Omega}}=\begin{bmatrix}   1& 0 \\ 0&2.5 \\ \end{bmatrix}$&
  ${\boldsymbol{\delta}}=\begin{bmatrix} 0.550\\-0.800 \\ \end{bmatrix}$&
 3.349&  0.580 &
 $\begin{bmatrix} 1.037\\ -1.508 \\ \end{bmatrix}$
&
$\begin{bmatrix}  0.069\\ -0.100 \\ \end{bmatrix}$
&
0.471
&
$\begin{bmatrix}   0.389 \\-0.566 \\ \end{bmatrix}$
&
0.415
&
$\begin{bmatrix}   0.365\\ -0.531  \\ \end{bmatrix}$
 \\[0.2cm]
 \\[-0.15cm]
   $7$&
 ${\boldsymbol{\Omega}}=\begin{bmatrix}   1& 1 \\1&2.5 \\ \end{bmatrix}$&
 ${\boldsymbol{\delta}}=\begin{bmatrix} 0.900\\0.775 \\ \end{bmatrix}$&
 2.731& 0.843 &
 $\begin{bmatrix} 1.254\\ 1.077 \\ \end{bmatrix}$
&
$\begin{bmatrix}   2.493\\ 2.141\\ \end{bmatrix}$
&
0.384
&
$\begin{bmatrix}   0.470\\ 0.404\\ \end{bmatrix}$
&
 0.286
&
$\begin{bmatrix}  0.513\\ 0.442  \\ \end{bmatrix}$
 \\[0.2cm]
 \\[-0.15cm]
$8$&
 ${\boldsymbol{\Omega}}=\begin{bmatrix}   1& 0 \\0&2.5 \\ \end{bmatrix}$&
  ${\boldsymbol{\delta}}=\begin{bmatrix} 0.800\\0.400 \\ \end{bmatrix}$&
 2.048& 0.532 &
 $\begin{bmatrix} 1.280\\ 0.640 \\ \end{bmatrix}$
&
$\begin{bmatrix}    2.304\\ 1.152\\ \end{bmatrix}$
&
0.288
&
$\begin{bmatrix}   0.480\\ 0.240\\ \end{bmatrix}$
&
0.193
&
$\begin{bmatrix}   0.393\\ 0.196   \\ \end{bmatrix}$
  \\[0.2cm]
 \\[-0.15cm]
$9$&
 ${\boldsymbol{\Omega}}=\begin{bmatrix}   1& 1 \\1&2.5 \\ \end{bmatrix}$&
 ${\boldsymbol{\delta}}=\begin{bmatrix}  0.750\\ 0.150 \\ \end{bmatrix}$&
 1.607&  0.521 &
 $\begin{bmatrix} 1.263\\ -0.110 \\ \end{bmatrix}$
&
$\begin{bmatrix}    1.045\\ -0.091\\ \end{bmatrix}$
&
 0.226
&
$\begin{bmatrix}  0.473\\ -0.041\\ \end{bmatrix}$
&
0.1451
&
$\begin{bmatrix}    0.333 \\ 0.066   \\ \end{bmatrix}$
 \\[0.2cm]
 \\[-0.15cm]
$10$&
 ${\boldsymbol{\Omega}}=\begin{bmatrix}   1& 1 \\1&2.5 \\ \end{bmatrix}$&
 ${\boldsymbol{\delta}}=\begin{bmatrix}  -0.750\\ -0.150 \\ \end{bmatrix}$&
1.607 & 0.521 &
 $\begin{bmatrix} -1.263\\  0.110  \\ \end{bmatrix}$
&
$\begin{bmatrix}    -1.045\\  0.091\\ \end{bmatrix}$
&
0.226
&
$\begin{bmatrix}   -0.474 \\ 0.041\\ \end{bmatrix}$
&
0.145
&
$\begin{bmatrix}    -0.333\\ -0.067   \\ \end{bmatrix}$
 \\[0.2cm]
 \\[-0.15cm]
$11$&
 ${\boldsymbol{\Omega}}=\begin{bmatrix}   1& 0 \\0&2.5 \\ \end{bmatrix}$&
 ${\boldsymbol{\delta}}=\begin{bmatrix}  0.700\\ 0.000 \\ \end{bmatrix}$&
 0.471& 0.235  &
 $\begin{bmatrix} 0.686\\ 0.000  \\ \end{bmatrix}$
&
$\begin{bmatrix}    0.686\\ 0.000\\ \end{bmatrix}$
&
0.066
&
$\begin{bmatrix}   0.257\\ 0.000\\ \end{bmatrix}$
&
 0.043
&
$\begin{bmatrix}   0.208\\ 0.000   \\ \end{bmatrix}$
 \\[0.2cm]
 \\[-0.15cm]
$12$&
 ${\boldsymbol{\Omega}}=\begin{bmatrix}   1& -1 \\-1&2.5 \\ \end{bmatrix}$&
  ${\boldsymbol{\delta}}=\begin{bmatrix}  0.000\\ 0.000 \\ \end{bmatrix}$&
 0.000& 0.000  &
 $\begin{bmatrix} 0.000\\ 0.000  \\ \end{bmatrix}$
&
$\begin{bmatrix}    0.000\\ 0.000\\ \end{bmatrix}$
&
0.000
&
$\begin{bmatrix}   0.000\\ 0.000\\ \end{bmatrix}$
&
0.000
&
$\begin{bmatrix}   0.000\\ 0.000   \\ \end{bmatrix}$
 \\[0.2cm]
  \hline
\end{tabular}
\end{table}
\subsection{Assessment of Skewness}
To study and compare  different multivariate measures of skewness for the $\mathcal{MMN}$ distributions,
we consider the  $\mathcal{MMNE}$ distribution. We compute the values of all the skewness
measures for different choices of the parameters of the bivariate and trivariate $\mathcal{MMNE}$  distributions. Tables \ref{tab.skew2var} and \ref{tab.skew3var}  present the values of all the skewness measures. It should be noted that all the measures are location and scale invariant, a desirable property indeed for any measure of skewness.  For similar work on skewness comparisons for $\mathcal{SN}$ distribution, one may refer to \cite{Balakrishnan and Scarpa (2012)}, and also to \cite{Kim and Zhao (2018)} for a similar work on scale mixtures of $\mathcal{SN}$ distributions.
From Table  \ref{tab.skew2var}, we find that in all  cases with scalar measures of skewness,   Mardia's  measure  have the highest value and  Srivastava's measure is  the next largest.
Just as in the case of  $\mathcal{SN}$ distribution, for the bivariate  $\mathcal{MMNE}$ distribution, the vectorial measures yield very similar results in terms of skewness directions, especially when the distribution is highly asymmetric \cite{Balakrishnan and Scarpa (2012)}.
It is important to note that Cases 9 and 10 deal with reflected distributions; and in these cases, all the measures
are the same and the vectorial ones are reflected as well.
Table \ref{tab.skew3var}  presents the values of all the measures for the   trivariate $\mathcal{MMNE}$ distribution.  In this case, differences among the
measures become much more pronounced.  From Table  \ref{tab.skew3var}, we find that in all  cases, among the vectorial measures of skewness,  Mardia's  measure has the highest value.
Of course, the magnitude of the measures alone does not say much; one has to know how significant the values are!
\begin{table}[htb!]
\scriptsize
\center
\caption{
 Skewness measures in Section \ref{measures-skewness}, for some trivariate $\mathcal{MMNE}$ distributions with presumed parameter ${\boldsymbol{\xi}}=\boldsymbol{0}$ and different choices of parameters ${\boldsymbol{\Omega}}$ and  ${\boldsymbol{\delta}}$.}
\label{tab.skew3var}
\begin{tabular}{cclcccccccc}
\hline
$\#$&\multicolumn{2}{c}{Parameters}&$\beta_{1,p}$&$\beta_{1p}^2$& $s$  &  $b$& $Q^*$& $T$  &  $s_I$& $s_C$\\
\hline\\[-0.15cm]
 $1$&
 ${\boldsymbol{\Omega}}=\begin{bmatrix}   1& 0& 0\\ 0& 2.5&0\\ 0& 0& 2.5 \\ \end{bmatrix}$&
 ${\boldsymbol{\delta}}=\begin{bmatrix} 0.10\\ 0.70\\ 0.70 \\ \end{bmatrix}$&
 3.881& 0.314 &
 $\begin{bmatrix} 0.198 \\1.386\\ 1.386 \\ \end{bmatrix}$
&
$\begin{bmatrix} 0.450 \\3.150\\ 3.150 \\ \end{bmatrix}$
&
0.155
&
$\begin{bmatrix} 0.040\\ 0.277\\ 0.277\\ \end{bmatrix}$
&
 0.666
&
$\begin{bmatrix}  0.082\\ 0.574\\ 0.574  \\ \end{bmatrix}$
\\[0.2cm]
 \\[-0.15cm]
   $2$&
 ${\boldsymbol{\Omega}}=\begin{bmatrix}   1& 1& 1\\ 1& 2.5&1\\ 1&1& 10 \\ \end{bmatrix}$&
 ${\boldsymbol{\delta}}=\begin{bmatrix} 0.75\\0.75\\0.65\\ \end{bmatrix}$&
2.712&  0.235  &
 $\begin{bmatrix} 0.752\\   1.044\\   1.028 \\ \end{bmatrix}$
&
$\begin{bmatrix} 2.211\\   3.070 \\  3.023\\ \end{bmatrix}$
&
0.108
&
$\begin{bmatrix} 0.150 \\  0.209 \\  0.206\\ \end{bmatrix}$
&
0.283
&
$\begin{bmatrix}   0.426\\   0.426 \\  0.369  \\ \end{bmatrix}$
  \\[0.2cm]
 \\[-0.15cm]
    $3$&
 ${\boldsymbol{\Omega}}=\begin{bmatrix}   1& 0& 0\\ 0& 2.5&0\\ 0& 0& 5 \\ \end{bmatrix}$&
 ${\boldsymbol{\delta}}=\begin{bmatrix}0.995\\ 0.00\\0.00\\ \end{bmatrix}$&
3.881& 1.294  &
 $\begin{bmatrix} 1.970\\ 0.000\\ 0.000 \\ \end{bmatrix}$
&
$\begin{bmatrix} 1.970\\ 0.000\\ 0.000 \\ \end{bmatrix}$
&
0.155
&
$\begin{bmatrix} 0.394 \\0.000\\ 0.000\\ \end{bmatrix}$
&
 0.666
&
$\begin{bmatrix}  0.816\\ 0.000\\ 0.000  \\ \end{bmatrix}$
 \\[0.2cm]
 \\[-0.15cm]
     $4$&
 ${\boldsymbol{\Omega}}=\begin{bmatrix}   1& 0& 0\\ 0& 1&0\\ 0& 0& 2.5 \\ \end{bmatrix}$&
 ${\boldsymbol{\delta}}=\begin{bmatrix} 0.40\\-0.60\\-0.60\\ \end{bmatrix}$&
2.726&  0.130   &
 $\begin{bmatrix} 0.704\\ -1.056 \\ -1.056 \\ \end{bmatrix}$
&
$\begin{bmatrix}  0.512\\ -0.768\\ -0.768\\ \end{bmatrix}$
&
0.109
&
$\begin{bmatrix} 0.141\\ -0.211\\ -0.211 \\ \end{bmatrix}$
&
0.286
&
$\begin{bmatrix}   0.228\\ -0.342\\ -0.342  \\ \end{bmatrix}$
  \\[0.2cm]
 \\[-0.15cm]
     $5$&
 ${\boldsymbol{\Omega}}=\begin{bmatrix}  1& 1& 1\\ 1& 2.5&1\\ 1&1& 10 \\ \end{bmatrix}$&
  ${\boldsymbol{\delta}}=\begin{bmatrix} 0.55\\0.05\\-0.30\\ \end{bmatrix}$&
1.372  &0.223  &
 $\begin{bmatrix} 1.055\\ -0.140\\ -0.490\\ \end{bmatrix}$
&
$\begin{bmatrix} 0.139\\ -0.018\\ -0.064 \\ \end{bmatrix}$
&
 0.055
&
$\begin{bmatrix}  0.211\\ -0.028\\ -0.098\\ \end{bmatrix}$
&
0.122
&
$\begin{bmatrix}   0.230\\  0.021\\ -0.125  \\ \end{bmatrix}$
 \\[0.2cm]
 \\[-0.15cm]
    $6$&
 ${\boldsymbol{\Omega}}=\begin{bmatrix}  1& 0& 0\\ 0& 2.5&0\\ 0&0& 1 \\ \end{bmatrix}$&
 ${\boldsymbol{\delta}}=\begin{bmatrix} 0.75\\0.35\\0.35 \\ \end{bmatrix}$&
2.106& 0.242  &
 $\begin{bmatrix} 1.211 \\0.565 \\0.565\\ \end{bmatrix}$
&
$\begin{bmatrix}  3.154\\ 1.472\\ 1.472  \\ \end{bmatrix}$
&
0.084
&
$\begin{bmatrix}  0.242 \\0.113\\ 0.113\\ \end{bmatrix}$
&
 0.200
&
$\begin{bmatrix}   0.373 \\0.174 \\0.174  \\ \end{bmatrix}$
 \\[0.2cm]
  \hline
\end{tabular}
\end{table}
\begin{table}[htb!]
\scriptsize\center
\caption{Upper and lower $2.5\%$ critical  values based on the test statistics $\beta_{1,p}$, $\beta_{1p}^2$, $s_{{\rm sum}}$, $s_{{\rm max}}$, $b_{\rm sum}$, $b_{\rm max}$, $Q^*$,  $\boldsymbol{T}_{{\rm sum}}$, $\boldsymbol{T}_{{\rm max}}$, $s_I$,  ${s_C}_{{\rm sum}}$, and ${s_C}_{{\rm max}}$, in Subsection \ref{Power}, obtained from $10000$ simulated samples of standard multivariate normal distribution,  for sample size $n=100$ and dimensions $p\in\{2, \ldots, 8\}$. }
\label{25Percentiles}
\begin{tabular}{c|crrrrrrrrrrrrrr}
\hline
Test Statistics& Percentile &$p=2~$&$p=3~$&$p=4~$&$p=5~$ & $p=6~$&$p=7~$&$p=8~$\\
\hline
$\beta_{1,p}$&    0.025  &0.0000 &0.0000&0.0685&0.1232&0.2304& 0.3170  &0.4388                           \\
                    &	   0.975	 &1.1816&1.4552&2.1717 &2.8814&3.5455& 3.8266 &3.9085 			\\
$\beta_{1p}^2$& 0.025  &0.0000&0.0000&0.0025  &0.0038&0.0041&0.0035 & 0.0042               \\
                    &	   0.975  &0.4474&0.2918&0.2827 &0.2906&0.3167& 0.2311	 &0.2654   				\\
$s_{max}$ &        0.025  &-0.3562& -0.1626&-0.0975&-0.0244&0.0160&0.1061 &0.1452    \\
                    &		0.975 &	0.8894&0.9881&1.1032&1.2195&1.2737& 1.3150 &	1.4936 			\\
$s_{sum}$&          0.025 &-0.9945&-1.3716&-1.6814& -1.9739&-2.2822&-2.5103 &-2.7708                 \\
                    &		0.975 &1.0052&	1.3275&1.4690&1.9442&2.1993&2.5009	 &3.1478			\\
$b_{max}$&         0.025  & -0.6804&-0.3989&-0.2855&-0.0707&0.0000&0.0002 &0.0018                    \\
                    &		0.975  & 1.0957&1.6014&1.7865	&2.4971&2.6128&3.1300  &4.0002                           			\\
$b_{sum}$&         0.025   &-1.7299&-2.8449&-3.9202&-5.5741&-6.4435&-7.3387   &-8.0038                      \\
                    &		0.975  &1.7621&2.8721&3.3774 & 5.6679&6.3422&7.8015 &10.2697				\\
$Q^*$&                0.025  & 0.0000&0.0000&0.0011&0.0009& 0.0009&0.0007   &  0.0006            \\
                    &		0.975    &0.1662&0.0582&0.0339&0.0212&	0.0138&0.0087	 &0.0055			\\
$\boldsymbol{T}_{max}$ &  0.025 &-0.1336&-0.0325&-0.0122&-0.0021&0.0010&0.0051 &0.0054 \\
                    &	                  0.975	&0.3335&0.1976&0.1379&0.1045& 0.0796&0.0626 &0.0560				\\
$\boldsymbol{T}_{sum}$ &  0.025 &-0.3729&-0.2743&-0.2102&-0.1692& -0.1426&-0.1195 &-0.1039      \\
                    &		               0.975 &0.3769&0.2655&0.1836&0.1666&0.1375&0.1191  &0.1180 					\\
${s_I}$&                              0.025 &0.0000& 0.0000&0.0080&0.0133&0.0229&0.0304  &0.0407 \\
                    &		               0.975 &0.1045&0.1301&0.2077&0.3122&0.4770&0.6192 &0.6956					\\
${s_C}_{max}$&                  0.025 &-0.1130&-0.0536&-0.0347&-0.0108&0.0109&0.0311 &0.0482   \\
                    &		               0.975 &0.2710&0.2955&0.3410&0.4033& 0.4830&0.5097 &0.6410  					\\
${s_C}_{sum}$&                  0.025 & -0.2920&-0.4116& -0.5417&-0.6276&-0.7303&-0.8984 &-1.0657    \\
                    &		               0.975 &0.3036&0.3932&0.4644&0.6154&0.7311&0.8275 &1.1957					\\
  \hline
\end{tabular}
\end{table}
\begin{table}[htb!]
\scriptsize\center
\caption{Upper $5\%$ critical values based on the test statistics $\beta_{1,p}$, $\beta_{1p}^2$, $s_{{\rm sum}}$, $s_{{\rm max}}$, $b_{\rm sum}$, $b_{\rm max}$, $Q^*$,  $\boldsymbol{T}_{{\rm sum}}$, $\boldsymbol{T}_{{\rm max}}$, $s_I$,  ${s_C}_{{\rm sum}}$, and ${s_C}_{{\rm max}}$,  in Subsection \ref{Power}, obtained from $10000$ simulated samples of standard multivariate normal distribution, for  sample size $n=100$ and dimensions $p\in\{2, \ldots, 8\}$. }
\label{5Percentiles}
\begin{tabular}{c|rrrrrrrrrrrrrr}
\hline
Test Statistics& $p=2~$&$p=3~$&$p=4~$&$p=5~$ & $p=6~$&$p=7~$&$p=8~$\\
\hline
$\beta_{1,p}$ &0.9325&1.2317&1.7024&2.4021&3.0324& 3.4899 &	 3.8939			\\
$\beta_{1p}^2$ &0.3111&0.2135&0.1997&0.1812&0.2044& 0.1612 &	0.2039				\\
$s_{max}$ &        0.7679 &0.8177&0.9712&1.0931&1.2011& 1.1774 &1.3446					\\
$s_{sum}$&         0.8546	&1.0509&1.2424&1.6682& 1.8934& 1.9996 & 2.4041 				\\
$b_{max}$&         0.9619&1.2206&1.3468& 1.9457&2.0467& 2.2711 & 2.8953 				\\
$b_{sum}$&         1.3001&2.1093& 2.5046&3.9367&4.4979&5.1949 &7.1686   				\\
$Q^*$&              0.1311&	0.0493&0.0266&0.0176&0.0118&0.0079 &0.0055   				\\
$\boldsymbol{T}_{max}$ &0.2880 &0.1635&0.1214&0.0937&0.0751&0.0561 &0.0504 				\\
$\boldsymbol{T}_{sum}$ &  0.3205	&0.2102&0.1553&0.1430&0.1183& 0.0952 &0.0902  				\\
${s_I}$&                              0.0824	&0.1091&0.1549& 0.2375&0.3409&0.4576	 & 0.6792 			\\
${s_C}_{max}$&                  0.2324	  &0.2468&0.2959& 0.3382&0.4069&	 0.4282 &0.5509			\\
${s_C}_{sum}$&                  0.2619	  &0.3224&0.3757&0.5072&0.6228 &0.6621 &	0.8116		\\
  \hline
\end{tabular}
\end{table}
\begin{table}[htb!]
\scriptsize
\center
\caption{ Simulated values of power for all tests based on the test statistics $\beta_{1,p}$, $\beta_{1p}^2$, $s_{{\rm sum}}$, $s_{{\rm max}}$, $b_{\rm sum}$, $b_{\rm max}$, $Q^*$,  $\boldsymbol{T}_{{\rm sum}}$, $\boldsymbol{T}_{{\rm max}}$, $s_I$,  ${s_C}_{{\rm sum}}$, and ${s_C}_{{\rm max}}$, in Subsection \ref{Power}, for bivariate normal distribution against $\mathcal{MMNE}$ distribution.}
\label{power-2var}
\begin{tabular}{cclcccccccccccc}
\hline
$\#$&\multicolumn{2}{c}{Parameters}&$\beta_{1,p}$&$\beta_{1p}^2$& $s_{{\rm max}}$  & $s_{\rm sum}$  &  $b_{{\rm max}}$  & $b_{\rm sum}$  & $Q^*$& $\boldsymbol{T}_{{\rm max}}$  & $\boldsymbol{T}_{\rm sum}$ & $s_I$   &  ${s_C}_{{\rm max}}$  & ${s_C}_{\rm sum}$\\
\hline\\[-0.2cm]
 $1$&
  ${\boldsymbol{\Omega}}=\begin{bmatrix}   1& 0 \\ 0&2.5 \\ \end{bmatrix}$&
   ${\boldsymbol{\delta}}=\begin{bmatrix}0.1\\ 0.1 \\ \end{bmatrix}$&
  0.030& 0.041& 0.034& 0.040& 0.038& 0.035& 0.030& 0.034& 0.040& 0.030& 0.030& 0.044\\[0.3cm]
 $2$&&  ${\boldsymbol{\delta}}=\begin{bmatrix}0.5\\ 0.5 \\ \end{bmatrix}$&
 0.283& 0.121& 0.172& 0.467& 0.492& 0.497& 0.283& 0.172& 0.467& 0.283& 0.167& 0.459\\[0.3cm]
   $3$&&  ${\boldsymbol{\delta}}=\begin{bmatrix}0.1\\ 0.8 \\ \end{bmatrix}$&
0.659& 0.748& 0.711& 0.668& 0.631& 0.349& 0.659 &0.711& 0.668& 0.659& 0.690& 0.634\\[0.3cm]
  $4$&&  ${\boldsymbol{\delta}}=\begin{bmatrix}0.8\\ 0.1 \\ \end{bmatrix}$&
0.682& 0.724 &0.729& 0.700& 0.661& 0.383& 0.682& 0.729& 0.700& 0.682& 0.711& 0.682\\[0.3cm]
  $5$&&  ${\boldsymbol{\delta}}=\begin{bmatrix}0.8\\ 0.8 \\ \end{bmatrix}$&
0.994& 0.994& 0.994& 0.994& 0.994& 0.994& 0.994& 0.994& 0.994& 0.994& 0.994& 0.994\\[0.3cm]
  $6$&&  ${\boldsymbol{\delta}}=\begin{bmatrix}-0.7\\ -0.7 \\ \end{bmatrix}$&
  0.982& 0.982& 0.982& 0.981& 0.981& 0.981& 0.982& 0.982& 0.981& 0.982& 0.982& 0.981\\[0.3cm]
  \hline\\[-0.2cm]
$7$&
  ${\boldsymbol{\Omega}}=\begin{bmatrix}   1&1 \\1&2.5 \\ \end{bmatrix}$& ${\boldsymbol{\delta}}=\begin{bmatrix}0.1\\ 0.1 \\ \end{bmatrix}$&
  0.032& 0.035& 0.043& 0.057& 0.060& 0.059& 0.032& 0.043& 0.057& 0.032& 0.087& 0.140  \\[0.3cm]
$8$&& ${\boldsymbol{\delta}}=\begin{bmatrix}0.5\\ 0.5 \\ \end{bmatrix}$&
 0.074& 0.079& 0.062& 0.159& 0.189& 0.186& 0.074& 0.062& 0.159& 0.074& 0.088& 0.315\\[0.3cm]
$9$&&  ${\boldsymbol{\delta}}=\begin{bmatrix}0.1\\ 0.8 \\ \end{bmatrix}$&
0.979& 0.929& 0.980& 0.668& 0.105& 0.004& 0.979& 0.980& 0.668& 0.979& 0.974& 0.956 \\[0.3cm]
$10$&&  ${\boldsymbol{\delta}}=\begin{bmatrix}0.8\\ 0.1\\ \end{bmatrix}$&
  0.982& 0.978& 0.984& 0.941& 0.639& 0.084& 0.982& 0.984& 0.941& 0.982& 0.977& 0.963\\[0.3cm]
 $11 $&& ${\boldsymbol{\delta}}=\begin{bmatrix}0.8\\ -0.1\\ \end{bmatrix}$&
 0.956& 0.958& 0.959& 0.816& 0.076& 0.002& 0.956& 0.959& 0.816& 0.956& 0.960& 0.938\\[0.3cm]
 $12 $&& ${\boldsymbol{\delta}}=\begin{bmatrix}-0.8\\ 0.1\\ \end{bmatrix}$&
 0.953& 0.955& 0.023& 0.832& 0.003& 0.005& 0.953& 0.023& 0.832& 0.953& 0.024& 0.936\\[0.3cm]
 $13 $&& ${\boldsymbol{\delta}}=\begin{bmatrix}-0.8\\ -0.1\\ \end{bmatrix}$&
 0.968& 0.963& 0.004& 0.921& 0.004& 0.088& 0.968& 0.004& 0.921& 0.968& 0.248& 0.944\\[0.3cm]
 $14 $&& ${\boldsymbol{\delta}}=\begin{bmatrix}0.8\\ 0.8\\ \end{bmatrix}$&
 0.929& 0.929& 0.815& 0.977& 0.979& 0.980& 0.929& 0.815& 0.977& 0.929& 0.918& 0.987\\[0.3cm]
  \hline
\end{tabular}
\end{table}
\begin{table}[htb!]
\scriptsize
\center
\caption{ Simulated values of power for  all tests based on the test statistics $\beta_{1,p}$, $\beta_{1p}^2$, $s_{{\rm sum}}$, $s_{{\rm max}}$, $b_{\rm sum}$, $b_{\rm max}$, $Q^*$,  $\boldsymbol{T}_{{\rm sum}}$, $\boldsymbol{T}_{{\rm max}}$, $s_I$,  ${s_C}_{{\rm sum}}$, and ${s_C}_{{\rm max}}$, in Subsection \ref{Power}, for trivariate normal distribution against $\mathcal{MMNE}$ distribution.}
\label{power-3var}
\begin{tabular}{ccccccccccccccc}
\hline
$\#$&\multicolumn{2}{c}{Parameters}&$\beta_{1,p}$&$\beta_{1p}^2$& $s_{\rm max}$  & $s_{\rm sum}$  &  $b_{\rm max}$  & $b_{\rm sum}$  & $Q^*$& $\boldsymbol{T}_{\rm max}$  & $\boldsymbol{T}_{\rm sum}$ & $s_I$   &  ${s_C}_{\rm max}$  & ${s_C}_{\rm sum}$\\
\hline\\[-0.2cm]
 $1$&
 ${\boldsymbol{\Omega}}=\begin{bmatrix}   1&0&0\\0&2.5&0\\0&0&2.5 \\ \end{bmatrix}$& ${\boldsymbol{\delta}}=\begin{bmatrix}0.1\\ 0.1 \\ 0.1\\ \end{bmatrix}$&
0.056& 0.058& 0.047 &0.036& 0.042& 0.035& 0.056& 0.047& 0.036& 0.056& 0.053& 0.038\\[0.3cm]
 $2$&& ${\boldsymbol{\delta}}=\begin{bmatrix} -0.1\\ -0.1 \\ -0.1\\ \end{bmatrix}$&
 0.057& 0.061& 0.057& 0.035& 0.048& 0.052& 0.057& 0.057& 0.035& 0.057& 0.056& 0.039\\[0.3cm]
 \\[-0.2cm]
  $3$&
 ${\boldsymbol{\Omega}}=\begin{bmatrix}   1&1&1\\1&2.5&1\\1&1&10 \\ \end{bmatrix}$& ${\boldsymbol{\delta}}=\begin{bmatrix}0.1\\ 0.7 \\ 0.1\\ \end{bmatrix}$&
0.601& 0.192& 0.637& 0.166& 0.082& 0.031& 0.601& 0.637& 0.166& 0.601& 0.557& 0.513\\[0.3cm]
 \\[-0.2cm]
  $4$&
 ${\boldsymbol{\Omega}}=\begin{bmatrix}    1&0&0\\0&2.5&0\\0&0&5 \\ \end{bmatrix}$&
 ${\boldsymbol{\delta}}=\begin{bmatrix}0.7\\ 0.1 \\ 0.1\\ \end{bmatrix}$&
0.272& 0.324& 0.309& 0.277& 0.262& 0.168& 0.272& 0.309& 0.277& 0.272& 0.308& 0.287\\[0.3cm]
 \\[-0.2cm]
 $5$&
 ${\boldsymbol{\Omega}}=\begin{bmatrix}   1&0&0\\0&1&0\\0&0&2.5 \\ \end{bmatrix}$& ${\boldsymbol{\delta}}=\begin{bmatrix}0.4\\ 0.6 \\ 0.6\\ \end{bmatrix}$&
0.899& 0.513& 0.816& 0.902& 0.901& 0.902& 0.899& 0.816& 0.902& 0.899& 0.815& 0.901\\
[0.3cm] \\[-0.2cm]
  $6$&
 ${\boldsymbol{\Omega}}=\begin{bmatrix}   1&1&1\\1&2.5&1\\1&1&10 \\ \end{bmatrix}$& ${\boldsymbol{\delta}}=\begin{bmatrix}0.7\\ 0.05 \\ 0.3\\ \end{bmatrix}$&
0.745& 0.674 &0.750& 0.501& 0.277& 0.114& 0.745 &0.750& 0.501& 0.745& 0.638& 0.709\\
[0.3cm] \\[-0.2cm]
   $7$&
 ${\boldsymbol{\Omega}}=\begin{bmatrix}    1&0&0\\0&2.5&0\\0&0&1 \\ \end{bmatrix}$& ${\boldsymbol{\delta}}=\begin{bmatrix}0.7\\ 0.3 \\ 0.3\\ \end{bmatrix}$&
0.572& 0.349& 0.461& 0.726& 0.729& 0.703& 0.572& 0.461& 0.726& 0.572& 0.476& 0.711\\[0.3cm]
 $8$&
  & ${\boldsymbol{\delta}}=\begin{bmatrix} -0.7\\ 0.3 \\ 0.1\\ \end{bmatrix}$&
0.381& 0.228& 0.025& 0.025& 0.015& 0.015& 0.380 &0.025 &0.025& 0.380& 0.029& 0.029\\[0.3cm]
$9$&
 & ${\boldsymbol{\delta}}=\begin{bmatrix} 0.7\\ -0.3 \\ 0.1\\ \end{bmatrix}$&
0.415& 0.248& 0.400& 0.061& 0.034& 0.013& 0.415& 0.401 &0.061 &0.415& 0.414 &0.081\\[0.3cm]
$10$&
 & ${\boldsymbol{\delta}}=\begin{bmatrix}0.7\\ 0.3 \\  -0.1\\ \end{bmatrix}$&
0.394 &0.240& 0.378& 0.344 &0.259 &0.152& 0.392 &0.378& 0.344 &0.394& 0.385 &0.362\\[0.3cm]
$11$&
 & ${\boldsymbol{\delta}}=\begin{bmatrix}-0.7\\ -0.3 \\  0.1\\ \end{bmatrix}$&
0.404& 0.206& 0.073 &0.304& 0.059& 0.146 &0.404 &0.073& 0.304 &0.404 &0.072& 0.317\\[0.3cm]
$12$&
  & ${\boldsymbol{\delta}}=\begin{bmatrix}-0.7\\ 0.3 \\  -0.1\\ \end{bmatrix}$&
0.415& 0.222& 0.034 &0.050 &0.015 &0.022 &0.415 &0.034 &0.050& 0.415& 0.044& 0.069\\[0.3cm]
$13$&
  & ${\boldsymbol{\delta}}=\begin{bmatrix} 0.7\\ -0.3 \\  -0.1\\ \end{bmatrix}$&
0.393& 0.257& 0.356& 0.031& 0.020& 0.015& 0.393 &0.356& 0.031& 0.393& 0.370& 0.037\\[0.3cm]
$14$&
 & ${\boldsymbol{\delta}}=\begin{bmatrix}-0.8\\ -0.8 \\  -0.8\\ \end{bmatrix}$&
0.996& 0.965& 0.997& 0.993 &0.993& 0.993& 0.996& 0.997 &0.993 &0.996& 0.997& 0.993\\[0.3cm]
  \hline
\end{tabular}
\end{table}
\subsection{Comparison and Performance of Different Skewness Measures}\label{Power}
The measures studied in Section \ref{measures-skewness} and in the preceding subsection are not directly comparable with each other. So, for comparing them, we should have measures obtained on the same scale. To get such a set of comparable indices, we study the sample version for each of the skewness measures considered as test statistics for the hypothesis of normal distribution against $\mathcal{MMNE}$ distribution,  and then use the power of test based on different test statistics.
Let $\boldsymbol{Y}_1, \boldsymbol{Y}_2, \ldots, \boldsymbol{Y}_n$ denote a sample of $p\times1$ observations from any  $p$-dimensional distribution. Then, a sample version of all the skewness measures described can be obtained by replacing $\boldsymbol{\xi}$, $\boldsymbol{\Omega}$, and $\boldsymbol{\delta}$ with the maximum likelihood estimates of these quantities \cite{Balakrishnan and Scarpa (2012)}.
 As seen in the previous sections, the Mardia and Malkovich-Afifi measure $\beta_{1,p}$, Srivastava measure $\beta_{1p}^2$, Isogai measure $s_I$, Balakrishnan-Brito-Quiroz measure $Q^*$ are scalar indices and the M\'{o}ri-Rohatgi-Sz\'{e}kely measure $s$, Kollo measure $b$, Balakrishnan-Brito-Quiroz measure $\boldsymbol{T}$, and Isogai measure $s_C$ are vectorial indices.
Here, we study different statistics for testing the null hypothesis and powers for each of these tests to quantify the capacity of each skewness measure to identify the specific asymmetry present in the $\mathcal{MMNE}$ distribution.
The power of the test is a probability, and its use  enables us to compare different statistics, no matter what the original scales of them were.
To obtain a single test statistic for the vectorial measures, we propose two different metrics, namely, the sum and the maximum  (see \cite{Balakrishnan and Scarpa (2012)}, pages 82-83).
For the M\'{o}ri-Rohatgi-Sz\'{e}kely measure, we compute $s_{{\rm sum}}=\sum_{r=1}^p s_r$ and $s_{{\rm max}}=\max_{r\in \{1, \ldots ,p\}} s_r$,
for the Kollo measure $b_{{\rm sum}}=\sum_{r=1}^p b_r$ and $b_{{\rm max}}=\max_{r\in \{1, \ldots ,p\}} b_r$,
for the Balakrishnan-Brito-Quiroz measure $\boldsymbol{T}_{{\rm sum}}=\sum_{r=1}^p T_r$ and $\boldsymbol{T}_{{\rm max}}=\max_{r\in \{1, \ldots ,p\}} T_r$,
for Isogai's measure ${s_C}_{{\rm sum}}= \sum_{r=1}^p{s_C}_r$ and ${s_C}_{{\rm max}}= \max_{r\in \{1, \ldots ,p\}} {s_C}_r $.
The distributions of sample versions  of measures, $\beta_{1,p}$, $\beta_{1p}^2$, $s_{{\rm sum}}$, $s_{{\rm max}}$, $b_{\rm sum}$, $b_{\rm max}$, $Q^*$,  $\boldsymbol{T}_{{\rm sum}}$, $\boldsymbol{T}_{{\rm max}}$, $s_I$,  ${s_C}_{{\rm sum}}$, and ${s_C}_{{\rm max}}$ are not analytically computable easily,
and so we may determine the critical values of  these  tests through Monte Carlo simulations. Two sets of  critical values obtained by Monte Carlo simulation, based on $10000$ samples from the standard multivariate normal distribution, are  presented in Tables \ref{25Percentiles} and \ref{5Percentiles}, for dimensions $p\in\{2, \ldots, 8\}$.
To get the values of critical values,  we first simulated  $10000$ samples of size $n=100$ from the standard multivariate normal distribution with dimensions $p\in\{2, \ldots, 8\}$. We  estimated  the parameters and then  found the values of test statistics.   Then,  we  arranged  the obtained values in increasing order and then  selected  the $2.5$ and  $5$ lower and upper  percentage points as critical values.

For computing the powers of the different tests, based on the above test statistics, we simulated  1000 samples from $\mathcal{MMNE}$ distribution of size $n=100$ for different choices of the parameters of the $\mathcal{MMNE}$ distribution and  estimated the test statistics by  using the ML estimates of parameters evaluated by EM algorithm. Then, we  computed the proportion  of samples falling in the same rejection region.
For test statistics $\beta_{1,p}$, $\beta_{1p}^2$,  $Q^*$  and $s_I$,
we  considered the sample versions exceeding the critical values as critical regions, in the form $CR=\{Q_0>q_\alpha\}$, and
for all other test statistics, the rejection regions  were the  two-sided areas  of  the form  $CR=\{Q_0<q_{1-\alpha/2} ~\mbox{or}~ Q_0>q_{\alpha/2}\}$, where  $Q_0$ is test  statistic  under  null hypothesis and $q_\alpha$ is upper $\alpha$ percentile of distribution of test  statistic.
\begin{table}[htb!]
\scriptsize
\center
\caption{ Simulated values of power for  all tests  based on the test statistics $\beta_{1,p}$, $\beta_{1p}^2$, $s_{{\rm sum}}$, $s_{{\rm max}}$, $b_{\rm sum}$, $b_{\rm max}$, $Q^*$,  $\boldsymbol{T}_{{\rm sum}}$, $\boldsymbol{T}_{{\rm max}}$, $s_I$,  ${s_C}_{{\rm sum}}$, and ${s_C}_{{\rm max}}$, in Subsection \ref{Power}, for seven dimensional  normal distribution against $\mathcal{MMNE}$ distribution when  ${\boldsymbol{\Omega}}=\mathbf{I}_7$.}
\label{power-7var}
\begin{tabular}{cccccccccccccc}
\hline
$\#$&Parameter&$\beta_{1,p}$&$\beta_{1p}^2$& $s_{\rm max}$  & $s_{\rm sum}$  &  $b_{\rm max}$  & $b_{\rm sum}$  & $Q^*$& $\boldsymbol{T}_{\rm max}$  & $\boldsymbol{T}_{\rm sum}$ & $s_I$   &  ${s_C}_{\rm max}$  & ${s_C}_{\rm sum}$\\
\hline
 $1$&
 ${\boldsymbol{\delta}}=(0.1,0.1,0.1,0.1,0.1,0.1,0.1)^\top$&
 0.061& 0.060& 0.061& 0.038& 0.042& 0.046& 0.062& 0.061& 0.044& 0.061& 0.059& 0.045\\
  $2$&
 ${\boldsymbol{\delta}}=(0.7,0.7,0.7,0.7,0.7,0.7,0.7)^\top$&
 0.978& 0.985& 0.000 &0.985 &0.985& 0.985& 0.978& 0.000& 0.985& 0.978& 0.704& 0.985\\
   $3$&
 ${\boldsymbol{\delta}}=(0.1,0.7,0.1,0.7,0.1,0.7,0.1)^\top$&
 0.952& 0.543& 0.035& 0.952 &0.953 &0.951& 0.952 &0.035& 0.952& 0.952 &0.580& 0.954\\
   $4$&
 ${\boldsymbol{\delta}}=(0.4,0.2,0.5,0.1,0.7,0.6,0.3)^\top$&
 0.918 &0.357 &0.033& 0.923 &0.927 &0.923 &0.919 &0.034 &0.923 &0.918 &0.326& 0.925\\
   $5$&
 ${\boldsymbol{\delta}}=-(0.1,0.1,0.1,0.1,0.1,0.1,0.1)^\top$&
 0.052& 0.055& 0.093& 0.058& 0.047 &0.070 &0.052& 0.095& 0.060& 0.052& 0.075 &0.054\\
   $6$&
 ${\boldsymbol{\delta}}=-(0.7,0.7,0.7,0.7,0.7,0.7,0.7)^\top$&
 0.980& 0.982 &0.986 &0.982 &0.983 &0.982 &0.980 &0.986 &0.982 &0.980& 0.983 &0.982\\
   $7$&
 ${\boldsymbol{\delta}}=(0.1,-0.7,0.1,-0.7,0.1,-0.7,0.1)^\top$&
 0.959 &0.516 &0.010& 0.684& 0.000 &0.080& 0.959 &0.010 &0.771& 0.959& 0.007& 0.841\\
   $8$&
 ${\boldsymbol{\delta}}=(-0.4,0.2,-0.5,0.1,-0.7,0.6,-0.3)^\top$&
 0.889& 0.413& 0.015 &0.009& 0.006 &0.005 &0.891& 0.015& 0.014& 0.889& 0.102& 0.156\\
  \hline
\end{tabular}
\end{table}

In the simulation study, we   took  $\boldsymbol{\xi}= \boldsymbol{0}$,  and  the parameters $\boldsymbol{\Omega}$ and $\boldsymbol{\delta}$  as  given in   Tables \ref{power-2var}-\ref{power-7var}.
Tables \ref{power-2var}-\ref{power-7var}  present the power of the proposed tests for bivariate,  trivariate and seven dimensional normal distribution against MMNE distribution, respectively.
The comparison of  different  measures may be done directly from the results in  Tables  \ref{power-2var}-\ref{power-7var}.
These results show clearly which are the poorer indices of skewness among those considered.
Based on our empirical study, by considering different cases of the $\mathcal{MMNE}$ distributions in two, three and seven dimensions, we
 make  the following points:
for all cases with small skewness, as expected, the power of the tests are lower for distributions more similar to the normal, and test statistics $\beta_{1,p}$, $\beta_{1p}^2$,  $Q^*$ and $s_I$ have better performance.
From Tables  \ref{power-2var}-\ref{power-7var}, as expected, for increasing values ​​of the elements of the skewness parameters,  the power values of all  tests  increase.
 For large  elements close to 1 or -1​, for skewness parameters, the power of the tests are  higher and have almost   the same values for different test statistics.

The behaviour of test statistics $\beta_{1,p}$, $Q^*$ and $s_I$, are very close to each other and have the same power. For small values of the skewness parameter,  these test statistics have poorer performance.
The power of the tests for  $s_{{\rm max}}$,  and $\boldsymbol{T}_{{\rm max}}$ statistics are the same, and  the test statistics $s_{\rm sum}$ and $\boldsymbol{T}_{\rm sum}$  often have similar behaviour.
For large and  moderate values ​​of skewness parameters,  $b_{{\rm sum}}$ and $b_{{\rm max}}$    statistics have the lowest test power and have the worst performance compared with other test statistics.
For the bivariate case in Table \ref{power-2var}, when one element of the skewness parameter is large and one is small, the statistics $\beta_{1p}^2$, $\boldsymbol{T}_{{\rm max}}$ and $s_{{\rm max}}$ perform well, but    $b_{{\rm sum}}$ and $b_{{\rm max}}$    statistics have the lowest test power.

For the trivariate case in Table \ref{power-3var}, when one element of the skewness parameter is large and two elements and small, the statistics $\beta_{1p}^2$, $\boldsymbol{T}_{{\rm max}}$ and $s_{{\rm max}}$ have better performance.

From Table \ref{power-7var},  for case 3, the statistic ${s_C}_{{\rm sum}}$   has  the best performance  and  $\boldsymbol{T}_{{\rm max}}$ and $s_{{\rm max}}$ have  lower power close to  $0.05$.
From Table \ref{power-7var},  for case 4, the statistic $b_{{\rm max}}$ has the best performance, but $\boldsymbol{T}_{{\rm max}}$ and $s_{{\rm max}}$ have lower power close to $0.05$.
A result  that we  find from Tables  \ref{power-2var}-\ref{power-7var} is that  the test statistic ${s_C}_{{\rm max}}$ performs better than others in many cases.

\section{Illustrative Examples}\label{Example}
In this section, we fit the $\mathcal{MMNE}$ model for two real data sets to
illustrate the flexibility of the model. It is also compared  with $\mathcal{SN}$ and $\mathcal{ST}$ distributions in terms of some measures of fit.

\subsection{AIS data}\label{AIS-Example}
The first example considers the Australian
Institute of Sport (AIS) data  \cite{Cook and Weisberg (1994)},
containing 11 biomedical measurements on 202 Australian
athletes (100 female and 102 male).
Here, we focus solely on the first 100, and the trivariate case corresponding to BMI, SSF and Bfat variables, where the three acronyms denote Body Mass Index,  Sum of Skin Folds, and Body Fat percentage, respectively. These data are available in the \textsf{R} software, \texttt{sn} package.
Fig. \ref{Hist-AIS} presents the histograms  for the three variables.
Upon using the  EM algorithm, we  obtained the  maximum likelihood estimates of parameters of the model. Table \ref{tab_est_AIS} presents the estimates of parameters $({\boldsymbol{\xi}},{\mathbf{\Omega}},{\boldsymbol{\delta}})$.
Table \ref{tab_skew_AIS} presents values of all skewness  measures   by using the estimates of parameters, presented in Table \ref{tab_est_AIS}.
\begin{figure}[htp]
 \centerline{ \includegraphics[scale=.45]{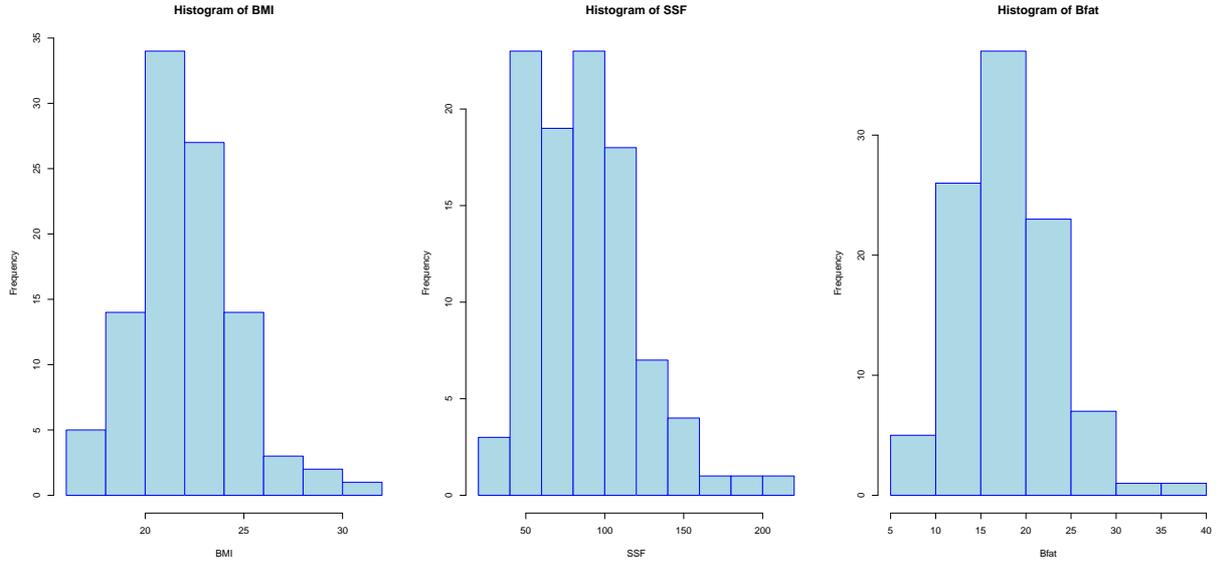} }
\caption{ The  histograms  for the three selected variables BMI, SSF and Bfat of  the AIS data set in Subsection \ref{AIS-Example}.   \label{Hist-AIS}}
\end{figure}
The relative difference in the fit of a number of
candidate models can be compared by using the  maximized log-likelihood values  $ \ell({\boldsymbol{\hat{\theta}}}|{\boldsymbol{y}}) $,  the Akaike information criterion (AIC) and the Bayesian information criterion (BIC). The AIC and BIC indices are defined as
$
AIC = 2k-\ell\left({\boldsymbol{\hat{\theta}}}|{\boldsymbol{y}}\right)$
and
$BIC = k \ln n-2\ell\left({\boldsymbol{\hat{\theta}}}|{\boldsymbol{y}}\right),
$
where $k$ is the number of model parameters and $\ell\left({\boldsymbol{\hat{\theta}}}|{\boldsymbol{y}}\right)$ is the   maximized log-likelihood value of a fitted model.
The larger value of $\ell\left({\boldsymbol{\hat{\theta}}}|{\boldsymbol{y}}\right)$ and the smaller value of AIC or BIC indicates a better fit of the model to the data.

Table \ref{tab_AIC_fit} summarizes the fitting performance of $\mathcal{MMNE}$ model, as compared to the $\mathcal{SN}$ and $\mathcal{ST}$ distributions. From Table \ref{tab_AIC_fit}, it is seen that the $\mathcal{MMNE}$ model provides the best fit overall  as it provides the   largest $\ell({\boldsymbol{\hat{\theta}}}|{\boldsymbol{y}})$ value and
the lowest AIC and BIC  values. Fig. \ref{scatter-AIS} shows the scatter plots of pairs of the three  variables BMI, SSF, Bfat,  along with   the contour plots for the  fitted $\mathcal{MMNE}$, $\mathcal{SN}$ and $\mathcal{ST}$ distributions.\\

\begin{table}[htb!]
\small
\center
\caption{Parameter estimates of the $\mathcal{MMNE}$ distribution  by using EM Algorithm presented in Section \ref{Est-sec}, based on the three selected variables (BMI, SSF, Bfat) of the AIS  data set in Subsection \ref{AIS-Example}.}
\label{tab_est_AIS}
\begin{tabular}{ccc}
    \hline
$\widehat{{\boldsymbol{\xi}}}$&$\widehat{{\boldsymbol{\Omega}}}$& $\widehat{{\boldsymbol{\delta}}}$\\
\hline\\[-0.2cm]
$
\begin{bmatrix}
  20.1099 \\
  56.1969  \\
  13.6666 \\
\end{bmatrix}
$
&
$
\begin{bmatrix}
7.2870&   66.3650 & 10.2745\\
66.3650& 1238.1858& 191.2748\\
10.2745&  191.2748&  31.3535\\
 \end{bmatrix}
$
&
$
\begin{bmatrix}
0.6963\\
0.8747\\
0.7471\\
\end{bmatrix}
$\\[0.2cm]
 \hline
\end{tabular}
\end{table}

\begin{table}[htb!]
\small
\center
\caption{Values of skewness measures in Section \ref{measures-skewness}, based on the three selected variables (BMI, SSF, Bfat) of the AIS  data set in Subsection \ref{AIS-Example}.}
\label{tab_skew_AIS}
\begin{tabular}{cccccccc}
    \hline
$\beta_{1,p}$&$\beta_{1p}^2$& $s$  &  $b$& $Q^*$& $T$ &  $s_I$& $s_C$\\
\hline\\[-0.2cm]
3.5539 & 0.5973  &
$
\begin{bmatrix}
  0.3182\\
  1.7703\\
  -0.5644\\
\end{bmatrix}
$
&
$
\begin{bmatrix}
 0.2080\\
 1.1571\\
 -0.3689\\
\end{bmatrix}
$
&
0.1422
&
$
\begin{bmatrix}
  0.0636\\
  0.3541  \\
  -0.1129 \\
\end{bmatrix}
$
&
0.4800
&
$
\begin{bmatrix}
  0.4920\\
  0.6181  \\
  0.5279 \\
\end{bmatrix}
$
 \\[0.2cm]
 \hline
 \end{tabular}
\end{table}

\begin{table}[htb!]
\small
\center
\caption{Comparison of fitting measures, maximized log-likelihood value $\ell({\boldsymbol{\hat{\theta}}}|{\boldsymbol{y}})$, Akaike information criterion AIC and Bayesian information criterion BIC, for skew-normal ($\mathcal{SN}$), skew-t ($\mathcal{ST}$) and $\mathcal{MMNE}$ distributions for the three selected variables (BMI, SSF, Bfat) of the  AIS data set in Subsection \ref{AIS-Example}.}
\label{tab_AIC_fit}
\begin{tabular}{l|ccc }
      \hline
Distribution &  ~~~~~~~ $\ell({\boldsymbol{\hat{\theta}}}|{\boldsymbol{y}}) $  &    ~~~~~~~  AIC&    ~~~~~~~  BIC\\
\hline
$\mathcal{SN}$   &  ~~~~~~~ -866.2725 & ~~~~~~~   1756.545&  ~~~~~~~ 1787.807     \\
$\mathcal{ST}$   &  ~~~~~~~ -852.1354 &   ~~~~~~~  1730.271&   ~~~~~~~ 1764.138     \\
$\mathcal{MMNE}$   & ~~~~~~~  -850.7388 &  ~~~~~~~  1725.478& ~~~~~~~ 1756.740     \\
\hline
\end{tabular}
\end{table}

\begin{figure}[htp]
 \centerline{ \includegraphics[scale=1]{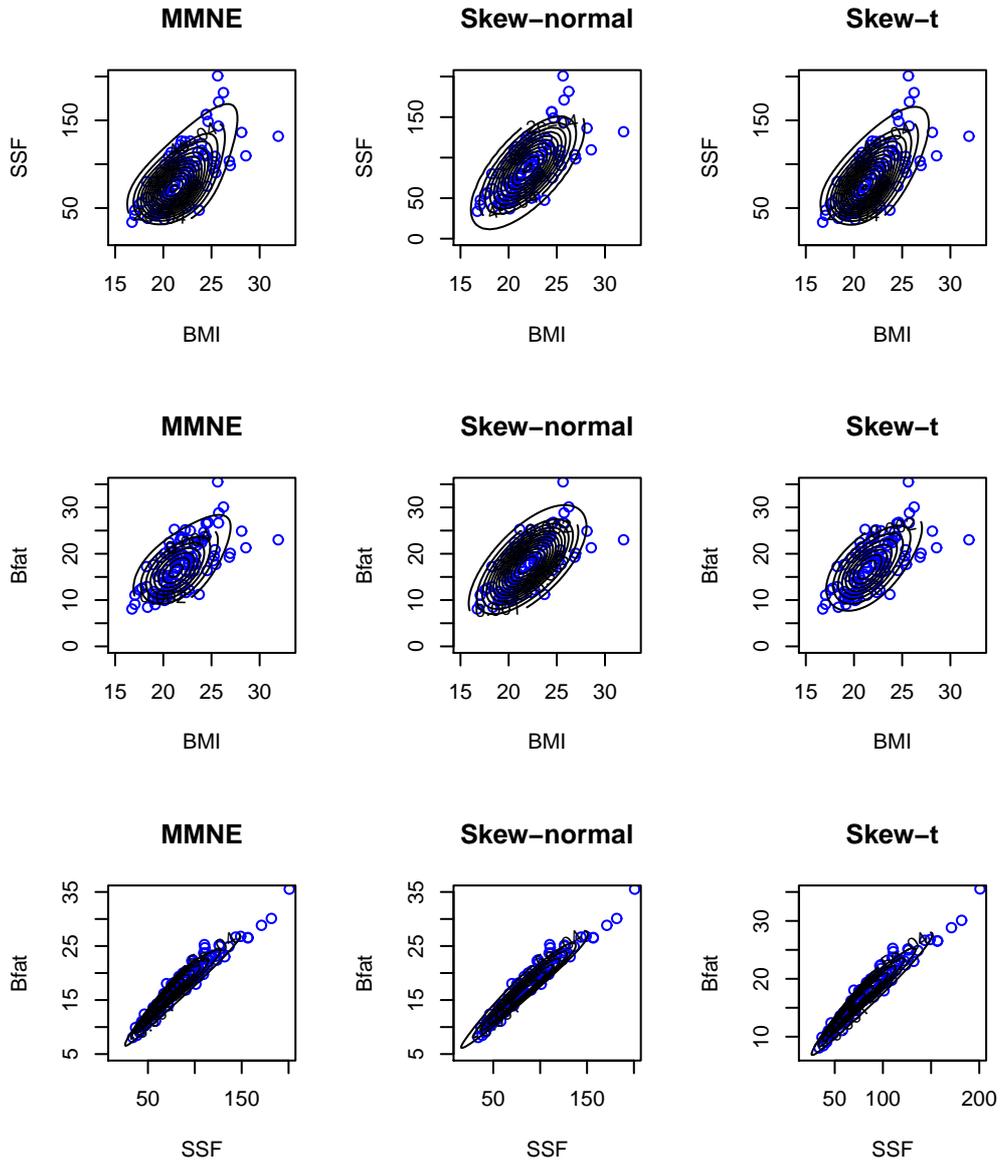} }
\caption{ Scatter plots of pairs of the  three selected variables for the  AIS data set, along with  the contour plots for the  fitted $\mathcal{MMNE}$, skew-normal ($\mathcal{SN}$) and skew-t ($\mathcal{ST}$) distributions  presented in Section \ref{sec1} and \ref{sec2}.   \label{scatter-AIS}}
\end{figure}

\subsection{ Italian olive oil data}\label{olive-Example}
As a second example, we consider the well-known data on the percentage composition of eight fatty acids found by lipid fraction of 572 Italian olive oils.  These data come from three areas; within each area, there are a number of constituent regions,  9 in total. The data set includes a data frame with 572 observations and 10 columns. The first column gives the area (one of Southern Italy, Sardinia, and Northern Italy), the second gives the region, and the remaining 8 columns give the variables. Southern Italy  consists of  North Apulia, Calabria, South Apulia, and Sicily regions, Sardinia is divided into Inland Sardinia, and Coastal Sardinia, and Northern Italy  consists of  Umbria, East Liguria, and West Liguria regions. These data are available in the \textsf{R} software, \texttt{pgmm} package.
\begin{figure}[htp]
 \centerline{ \includegraphics[scale=.45]{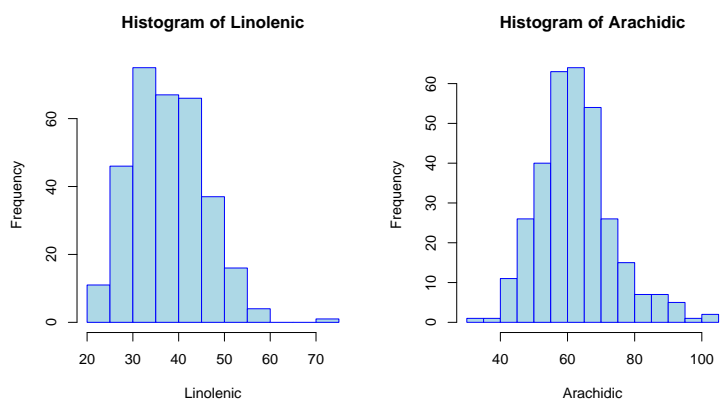} }
\caption{  The  histograms of the two selected variables Linolenic and Arachidic fatty acids of olive oil data set in Subsection \ref{olive-Example}.   \label{Hist-Olive}}
\end{figure}
\begin{table}[htb!]
\small
\center
\caption{Parameter estimates of the $\mathcal{MMNE}$ distribution by using EM Algorithm presented in Section \ref{Est-sec},  based on the two selected variables (Linolenic, Arachidic) of the  olive oil data set  in Subsection \ref{olive-Example}.}
\label{tab_est-Olive}
\begin{tabular}{ccc}
    \hline
$\widehat{{\boldsymbol{\xi}}}$&$\widehat{{\boldsymbol{\Omega}}}$& $\widehat{{\boldsymbol{\delta}}}$\\
\hline\\[-0.2cm]
$
\begin{bmatrix}
  36.8344\\
  55.3462 \\
\end{bmatrix}
$
&
$
\begin{bmatrix}
63.3623&  40.9481\\
40.9481& 124.0575\\
 \end{bmatrix}
$
&
$
\begin{bmatrix}
0.1546\\
0.6977\\
\end{bmatrix}
$
\\[0.2cm]
 \hline
\end{tabular}
\end{table}
\begin{table}[htb!]
\small
\center
\caption{Values of skewness measures in Section \ref{measures-skewness},  based on the two selected variables (Linolenic, Arachidic)  of the olive oil data set  in Subsection \ref{olive-Example}.}
\label{tab_skew-Olive}
\begin{tabular}{cccccccc}
    \hline
$\beta_{1,p}$&$\beta_{1p}^2$& $s$  &  $b$& $Q^*$& $T$ &  $s_I$& $s_C$\\
\hline\\[-0.2cm]
0.5707 &  0.1218 &
$
\begin{bmatrix}
  -0.0492\\
   0.7538\\
\end{bmatrix}
$
&
$
\begin{bmatrix}
   -0.0428\\
   0.6557 \\
\end{bmatrix}
$
&
0.0802&
$
\begin{bmatrix}
  -0.0185\\
  0.2827 \\
\end{bmatrix}
$
&
0.0517
&
$
\begin{bmatrix}
   0.0486\\
    0.2195 \\
\end{bmatrix}
$
 \\[0.2cm]
 \hline
 \end{tabular}
\end{table}

For the purpose of illustration, we consider 323 cases from Southern Italy, and columns (8, 9), Linolenic and Arachidic fatty acids, respectively, so as to consider the  bivariate case.  Fig.~\ref{Hist-Olive} shows the  histograms  of the two selected variables, while Table \ref{tab_est-Olive} presents the estimates of parameters and Table \ref{tab_skew-Olive} presents the values of skewness  measures.
Table \ref{tab_olive_fit} provides the fit of $\mathcal{MMNE}$ model, as compared to those of  $\mathcal{SN}$ and $\mathcal{ST}$ distributions, for the considered data. From  Table \ref{tab_olive_fit}, it is clear that the $\mathcal{MMNE}$ model provides the best overall fit as it  possesses  the largest $\ell({\boldsymbol{\hat{\theta}}}|{\boldsymbol{y}})$ value and
the lowest AIC and BIC  values.
Fig. \ref{fig-olive} shows the scatter plot of the data and the  contour plots of the  fitted $\mathcal{MMNE}$, $\mathcal{SN}$ and $\mathcal{ST}$ distributions.

\begin{table}[htb]
\small
\center
\caption{Comparison of fitting measures, maximized log-likelihood value $\ell({\boldsymbol{\hat{\theta}}}|{\boldsymbol{y}})$, Akaike information criterion AIC and Bayesian information criterion BIC, for  skew-normal ($\mathcal{SN}$), skew-t ($\mathcal{ST}$) and $\mathcal{MMNE}$ distributions for the two selected variables (Linolenic, Arachidic)  of  the olive oil data set  in Subsection \ref{olive-Example}.}
\label{tab_olive_fit}
\begin{tabular}{l|ccc}
    \hline
Distribution & ~~~~~~~$\ell({\boldsymbol{\hat{\theta}}}|{\boldsymbol{y}}) $  &   ~~~~~~~  AIC&  ~~~~~~~   BIC\\
\hline
$\mathcal{SN}$     & ~~~~~~~ -2320.039 &  ~~~~~~~ 4654.079& ~~~~~~~ 4680.522  \\
$\mathcal{ST}$    &  ~~~~~~~-2316.320 &  ~~~~~~~  4648.640&  ~~~~~~~ 4678.861  \\
$\mathcal{MMNE}$     & ~~~~~~~-2314.604 &  ~~~~~~~ 4643.207&  ~~~~~~~4669.651  \\
\hline
\end{tabular}
\end{table}

\begin{figure}[htp]
 \centerline{ \includegraphics[scale=.60]{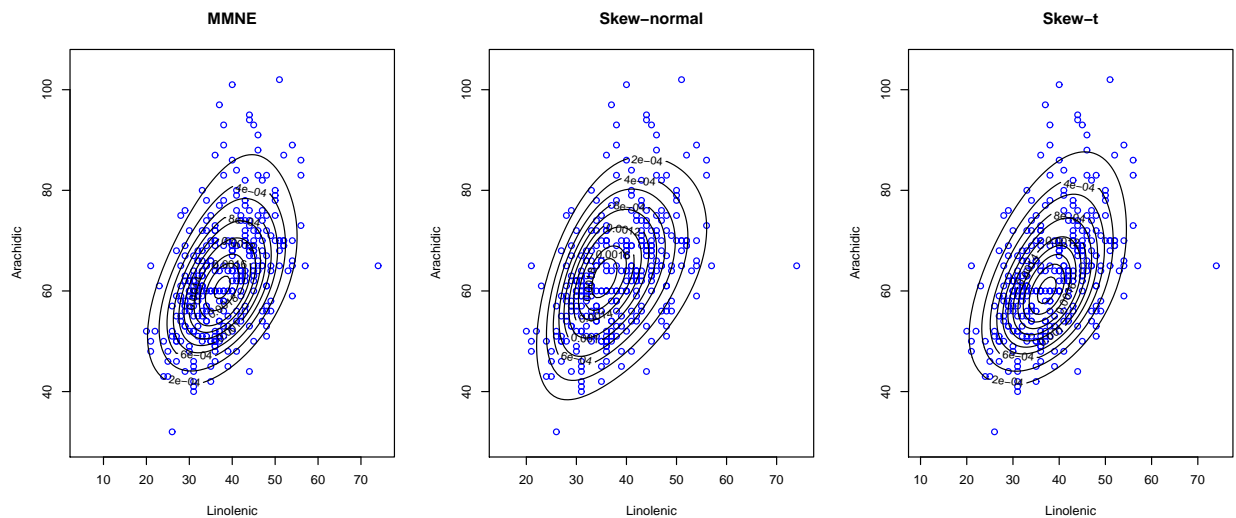} }
\caption{ Scatter plots  of the  olive oil data, and  the contour plots of the  fitted $\mathcal{MMNE}$, skew-normal ($\mathcal{SN}$) and skew-t ($\mathcal{ST}$) distributions presented in Section \ref{sec1} and \ref{sec2}.   \label{fig-olive}}
\end{figure}

\section{Concluding Remarks}
In this paper, we have discussed  the mean mixture of multivariate normal distribution ($\mathcal{MMN}$), which includes the normal, $\mathcal{SN}$, and extended  $\mathcal{SN}$ distributions as  particular cases. We have studied several features of this family of distributions, including the first four moments, the  distributions of affine transformations and canonical forms, estimation of parameters by using an EM-type algorithm with closed-form expressions,  and different measures of multivariate skewness. Two special cases of the $\mathcal{MMN}$ family, with standard gamma and standard exponential distributions as mixing distributions, denoted by $\mathcal{MMNG}$ and $\mathcal{MMNE}$ distributions, have been studied in detail.
A simulation study has been performed to evaluate the performance of the MLEs of parameters of the $\mathcal{MMNE}$ distribution.
From the results in Section \ref{Example}, for the AIS and  olive oil data sets, the $\mathcal{MMNE}$ distribution is shown to provide a better fit than the  $\mathcal{SN}$ and $\mathcal{ST}$ distributions.
Different multivariate measures of skewness have been derived for the $\mathcal{MMNE}$ distribution, and the evaluation of tests based on these measures  is carried out in terms of powers of tests.

There are  several possible directions for future research. For example, the study of finite mixtures and scale mixtures of $\mathcal{MMN}$ family will be of great interest.
In the stochastic representation  in (\ref{Repre}), if the skewness parameter is a matrix, with representation
${\mathbf{Y}} \stackrel{d}{=}  {\boldsymbol{\xi}}+{\boldsymbol{\omega}}\left( {\boldsymbol{\Delta}} \mathbf{U}+ {\mathbf{Z}}\right)$, then ${\mathbf{Y}}$ has the unified skew
normal  ($\mathcal{SUN}$) distribution (see \cite{Arellano Valle and  Azzalini (2006)}), wherein elements  of $\mathbf{U}$ have the  standard half-normal distribution. In this connection, consideration of a general distribution for $\mathbf{U}$ would be of interest.
All the computations presented
in this paper were performed by using the statistical software \textsf{R}, version 4.0.0. The computer program for the implementation of the proposed EM-type algorithm and   comparison of the skewness measures are available as supplementary material associated with this article.\\

\noindent\textbf{CRediT authorship contribution statement}\\

\textbf{Me$'$raj Abdi:} Conceptualization, Methodology, Software, Writing - original draft, Writing – review \& editing. Investigation, Validation.
\textbf{Mohsen Madadi:} Methodology, Supervision, Investigation.
\textbf{Narayanaswamy Balakrishnan:} Supervision, Writing - review \& editing, Methodology.
\textbf{Ahad Jamalizadeh:} Conceptualization, Methodology, Supervision, Visualization.

\section*{Acknowledgments}  The authors are grateful to the Editors and two anonymous reviewers who provided
very helpful feedback, comments, and suggestions, based on which the paper has improved significantly.

\section*{Appendix A. Proofs}

\begin{proof}[\bf Proof of  Lemma \ref{Lem1}.]
By using (\ref{MtX}), we can calculate the partial derivatives of $M_{\mathbf{X}}({\mathbf{t}})$, the MGF of normalized $\mathcal{MMN}$ distribution, that are directly related to the moments of the $\mathcal{MMN}$ random  vector. Suppose ${\mathbf{X}}\sim \mathcal{MMN}_p({\boldsymbol{0}}, {\overline{\mathbf{\Omega}}},{\boldsymbol{\delta}}; H)$.  Then, some derivatives
 of $M_{\mathbf{X}}({\mathbf{t}})$ in (\ref{MtX}) are as follows:
\begin{eqnarray*}
\frac{\partial M_{\mathbf{X}}({\mathbf{t}})}{\partial {\mathbf{t}}}&=&e^{\frac{1}{2}{\mathbf{t}}^\top{\mathbf{ \Sigma}}_{\mathbf{X}} {\mathbf{t}}}
\left[{\mathbf{ \Sigma}}_{\mathbf{X}}{\mathbf{t}} M_U({\mathbf{t}}^\top{\boldsymbol{\delta}})+{\boldsymbol{\delta}} M_U^{(1)}({\mathbf{t}}^\top{\boldsymbol{\delta}})\right],\label{dif1}\\
\frac{\partial^2 M_{\mathbf{X}}({\mathbf{t}})}{\partial {\mathbf{t}}\partial {\mathbf{t}}^\top}&=&e^{\frac{1}{2}{\mathbf{t}}^\top{\mathbf{ \Sigma}}_{\mathbf{X}} {\mathbf{t}}}
 \left\{  M_U({\mathbf{t}}^\top{\boldsymbol{\delta}})\left[{\mathbf{ \Sigma}}_{\mathbf{X}}+({\mathbf{ \Sigma}}_{\mathbf{X}}{\mathbf{t}})\otimes({\mathbf{ \Sigma}}_{\mathbf{X}}{\mathbf{t}})^\top    \right]
 +M_U^{(1)}({\mathbf{t}}^\top{\boldsymbol{\delta}} )\left[ ({\mathbf{ \Sigma}}_{\mathbf{X}}{\mathbf{t}})\otimes {\boldsymbol{\delta}}^\top + {\boldsymbol{\delta}}\otimes ({\mathbf{ \Sigma}}_{\mathbf{X}}{\mathbf{t}})^\top\right]+ M_U^{(2)}({\mathbf{t}}^\top{\boldsymbol{\delta}})~{\boldsymbol{\delta}}\otimes {\boldsymbol{\delta}}^\top \right\},\label{dif2}\\
 \frac{\partial^3 M_{\mathbf{X}}({\mathbf{t}})}{\partial {\mathbf{t}}\partial {\mathbf{t}}^\top\partial {\mathbf{t}}}&=&e^{\frac{1}{2}{\mathbf{t}}^\top{\mathbf{ \Sigma}}_{\mathbf{X}} {\mathbf{t}}}
 \left \{ M_U({\mathbf{t}}^\top{\boldsymbol{\delta}})
 \left[ ({\mathbf{ \Sigma}}_{\mathbf{X}}{\mathbf{t}})\otimes {\mathbf{ \Sigma}}_{\mathbf{X}}
 +\mathrm{vec}({\mathbf{ \Sigma}}_{\mathbf{X}})({\mathbf{ \Sigma}_{\mathbf{X}}}{\mathbf{t}})^\top
 +(\mathbf{I}_p\otimes ({\mathbf{ \Sigma}}_{\mathbf{X}}{\mathbf{t}}))({\mathbf{ \Sigma}}_{\mathbf{X}} + ({\mathbf{ \Sigma}}_{\mathbf{X}}{\mathbf{t}})\otimes ({\mathbf{ \Sigma}}_{\mathbf{X}}{\mathbf{t}})^\top)   \right]\right. \nonumber\\
 &&+ M_U^{(1)}({\mathbf{t}}^\top{\boldsymbol{\delta}})\left[{\boldsymbol{\delta}} \otimes{\mathbf{ \Sigma}}_{\mathbf{X}}+\mathrm{vec}({\mathbf{ \Sigma}}_{\mathbf{X}}){\boldsymbol{\delta}}^\top
+ (\mathbf{I}_p\otimes ({\mathbf{ \Sigma}}_{\mathbf{X}}{\mathbf{t}}))[{\boldsymbol{\delta}}\otimes({\mathbf{ \Sigma}}_{\mathbf{X}}{\mathbf{t}})^\top+({\mathbf{ \Sigma}}_{\mathbf{X}}{\mathbf{t}})\otimes{\boldsymbol{\delta}}^\top]\right.\nonumber\\
&&+  \left.   (\mathbf{I}_p\otimes {\boldsymbol{\delta}} )\left({\mathbf{ \Sigma}}_{\mathbf{X}}+({\mathbf{ \Sigma}}_{\mathbf{X}}{\mathbf{t}})\otimes({\mathbf{ \Sigma}}_{\mathbf{X}}{\mathbf{t}})^\top     \right)     \right]\nonumber\\
 &&+  M_U^{(2)}({\mathbf{t}}^\top{\boldsymbol{\delta}})   \left[ (\mathbf{I}_p\otimes ({\mathbf{ \Sigma}}_{\mathbf{X}}{\mathbf{t}}))({\boldsymbol{\delta}} \otimes {\boldsymbol{\delta}}^\top ) +(\mathbf{I}_p\otimes {\boldsymbol{\delta}} ) \left( {\boldsymbol{\delta}} \otimes ({\mathbf{ \Sigma}}_{\mathbf{X}}{\mathbf{t}})^\top  + ({\mathbf{ \Sigma}}_{\mathbf{X}}{\mathbf{t}}) \otimes {\boldsymbol{\delta}}^\top \right)  \right]+  \left.M_U^{(3)}({\mathbf{t}}^\top{\boldsymbol{\delta}}) (\mathbf{I}_p\otimes {\boldsymbol{\delta}} ) ({\boldsymbol{\delta}} \otimes {\boldsymbol{\delta}}^\top)   \right\},\label{dif3}
\end{eqnarray*}
where $M_U^{(1)}({\mathbf{t}}^\top{\boldsymbol{\delta}})=\frac{\partial M_U({\mathbf{t}}^\top{\boldsymbol{\delta}}) }{\partial \mathbf{t}}$,
$M_U^{(2)}({\mathbf{t}}^\top{\boldsymbol{\delta}})=\frac{\partial^2 M_U({\mathbf{t}}^\top{\boldsymbol{\delta}})}{\partial {\mathbf{t}}\partial {\mathbf{t}}^\top}$
and  $M_U^{(3)}({\mathbf{t}}^\top{\boldsymbol{\delta}})=\frac{\partial^3 M_U({\mathbf{t}}^\top{\boldsymbol{\delta}})}{\partial {\mathbf{t}}\partial {\mathbf{t}}^\top\partial {\mathbf{t}}}$.
Setting $\mathbf{t}=\boldsymbol{0}$,  as in \cite{Genton et al. (2001)}, we obtain the first three moments of the $\mathcal{MMN}$ family. To find the fourth moment, since we only need the value of fourth partial derivative of $M_{\mathbf{X}}({\mathbf{t}}) $ at $\mathbf{t}=\boldsymbol{0}$, say $M_4({\mathbf{X}})=\frac{\partial^4 M_{\mathbf{X}}({\mathbf{t}})}{\partial {\mathbf{t}}\partial {\mathbf{t}}^\top\partial {\mathbf{t}}\partial {\mathbf{t}}^\top}|_{\mathbf{t}=\boldsymbol{0}}$,
we do not need to compute the whole expression.  Instead, we can simply single out all the terms in $\frac{\partial^4 M_{\mathbf{X}}({\mathbf{t}})}{\partial {\mathbf{t}}\partial {\mathbf{t}}^\top\partial {\mathbf{t}}\partial {\mathbf{t}}^\top}$ that do not contain the factor $\mathbf{t}$ or $\mathbf{t}^\top$.
\end{proof}

\noindent{\bf Note 1:}
The stochastic representation
${\mathbf{Y}} \stackrel{d}{=}  {\boldsymbol{\xi}}+{\boldsymbol{\omega}}\left( {\boldsymbol{\delta}} U + {\mathbf{Z}}\right)$
can be used directly as  a way to obtain the first  four moments of ${\mathbf{Y}}$ in the following formulas:
\begin{eqnarray*}
M_1({\mathbf{Y}})&=&{\rm E}({\mathbf{Y}}),~~~~~~~~~
M_2({\mathbf{Y}})={\rm E}\left({\mathbf{Y}}\otimes {\mathbf{Y}}^\top\right)={\rm E}\left({\mathbf{Y}} {\mathbf{Y}}^\top\right),~~~~~~~~
M_3({\mathbf{Y}})={\rm E}\left({\mathbf{Y}}\otimes {\mathbf{Y}}^\top \otimes {\mathbf{Y}}\right)={\rm E}\left[({\mathbf{Y}}\otimes {\mathbf{Y}})  {\mathbf{Y}}^\top\right], \nonumber\\
M_4({\mathbf{Y}})&=&{\rm E}\left({\mathbf{Y}}\otimes {\mathbf{Y}}^\top \otimes {\mathbf{Y}}\otimes {\mathbf{Y}}^\top\right)={\rm E}\left[\left({\mathbf{Y}} {\mathbf{Y}}^\top\right) \otimes \left({\mathbf{Y}} {\mathbf{Y}}^\top\right)\right].
\end{eqnarray*}
The corresponding central moments of ${\mathbf{Y}}$ are then
\begin{eqnarray*}
\overline{M}_1({\mathbf{Y}})&=&{\mathbf{0}},~~~~~~~~~
\overline{M}_2({\mathbf{Y}})={\rm E}\left\{[{\mathbf{Y}}-{\rm E}({\mathbf{Y}})]\otimes [{\mathbf{Y}}-{\rm E}({\mathbf{Y}})]^\top \right\}=\textrm{var}({\mathbf{Y}}),\nonumber\\
\overline{M}_3({\mathbf{Y}})&=&{\rm E}\left\{[{\mathbf{Y}}-{\rm E}({\mathbf{Y}})]\otimes [{\mathbf{Y}}-{\rm E}({\mathbf{Y}})]^\top \otimes [{\mathbf{Y}}-{\rm E}({\mathbf{Y}})]\right\},\nonumber\\
\overline{M}_4({\mathbf{Y}})&=&{\rm E}\left\{\left([{\mathbf{Y}}-{\rm E}({\mathbf{Y}})] [{\mathbf{Y}}-{\rm E}({\mathbf{Y}})]^\top\right) \otimes \left([{\mathbf{Y}}-{\rm E}({\mathbf{Y}})] [{\mathbf{Y}}-{\rm E}({\mathbf{Y}})]^\top\right)\right\}.
\end{eqnarray*}

\noindent{\bf Note 2:} We know that for any multivariate random vector ${\mathbf{Y}}$, the central moments of third and fourth orders are related to the non-central moments by the following relationships (see, for example, \cite{Kollo and Srivastava (2004)} and \cite{Kollo and von Rosen (2005)}):
\begin{eqnarray}
\overline{M}_3({\mathbf{Y}}) &=& M_3({\mathbf{Y}}) - M_2({\mathbf{Y}}) \otimes {\rm E}({\mathbf{Y}}) - {\rm E}({\mathbf{Y}}) \otimes M_2({\mathbf{Y}}) - \mathrm{vec}(M_2({\mathbf{Y}})){\rm E}({\mathbf{Y}})^\top+ 2{\rm E}({\mathbf{Y}}){\rm E}({\mathbf{Y}})^\top \otimes {\rm E}({\mathbf{Y}}),\label{M3-M3bar}\\
\overline{M}_4({\mathbf{Y}}) &=& M_4({\mathbf{Y}}) - (M_3({\mathbf{Y}}) )^\top\otimes {\rm E}({\mathbf{Y}})
 - M_3({\mathbf{Y}}) \otimes {\rm E}({\mathbf{Y}}) ^\top -{\rm E}({\mathbf{Y}})\otimes (M_3({\mathbf{Y}}) )^\top -{\rm E}({\mathbf{Y}}) ^\top \otimes M_3({\mathbf{Y}})
\nonumber\\
&&+M_2({\mathbf{Y}})\otimes  {\rm E}({\mathbf{Y}})E({\mathbf{Y}})^\top
+ ({\rm E}({\mathbf{Y}}) \otimes {\rm E}({\mathbf{Y}}))(\mathrm{vec}(M_2({\mathbf{Y}})))^\top+{\rm E}({\mathbf{Y}}) \otimes M_2({\mathbf{Y}})\otimes {\rm E}({\mathbf{Y}})^\top  \nonumber\\
&&+
{\rm E}({\mathbf{Y}})^\top \otimes M_2({\mathbf{Y}})\otimes {\rm E}({\mathbf{Y}})
+
{\rm E}({\mathbf{Y}})^\top \otimes \mathrm{vec}(M_2({\mathbf{Y}}))\otimes {\rm E}({\mathbf{Y}})^\top+ {\rm E}({\mathbf{Y}}){\rm E}({\mathbf{Y}})^\top \otimes M_2({\mathbf{Y}}) \nonumber\\
&&- 3{\rm E}({\mathbf{Y}}){\rm E}({\mathbf{Y}})^\top \otimes {\rm E}({\mathbf{Y}}){\rm E}({\mathbf{Y}})^\top.\label{M4-M4bar}
\end{eqnarray}

Upon using the relations for affine transformations of moments, we then obtain
\begin{eqnarray}
M_1({\mathbf{AY}})&=&{\rm E}({\mathbf{AY}})={\mathbf{A}}{\rm E}({\mathbf{Y}}), \label{M1-AY}~~~~~~~~~M_2({\mathbf{AY}})={\rm E}\left({\mathbf{AY}}\otimes ({\mathbf{AY}})^\top\right)={\mathbf{A}}{\rm E}\left({\mathbf{Y}}\otimes {\mathbf{Y}}^\top\right) {\mathbf{A}}^\top,\\
M_3({\mathbf{AY}})&=&{\rm E}[({\mathbf{AY}}\otimes {\mathbf{AY}}) ({\mathbf{AY}})^\top]={\rm E}\left\{\mathrm{vec}\left({\mathbf{AY}}({\mathbf{AY}})^\top\right)({\mathbf{AY}})^\top\right\} =({\mathbf{A}}\otimes {\mathbf{A}})M_3({\mathbf{Y}}) {\mathbf{A}}^\top, \label{M3-AY}\\
M_4({\mathbf{AY}})&=&{\rm E}\left({\mathbf{AY}} ({\mathbf{AY}})^\top \otimes {\mathbf{AY}} ({\mathbf{AY}})^\top\right)
=({\mathbf{A}}\otimes {\mathbf{A}})M_4({\mathbf{Y}})\left({\mathbf{A}}\otimes {\mathbf{A}}\right)^\top.\label{M4-AY}
\end{eqnarray}

\begin{proof}[\bf Proof of  Theorem \ref{Thm1}.]
The moment generating function of ${\mathbf{A}}^\top{\mathbf{X}} $ can be written as
\begin{eqnarray*}
M_{{\mathbf{A}}^\top{\mathbf{X}} }({\mathbf{t}})=M_{\mathbf{X}}({\mathbf{A}}{\mathbf{t}})=
e^{\frac{1}{2}{\mathbf{t}}^\top\left({\mathbf{A}}^\top{\overline{\mathbf{\Omega}}}{\mathbf{A}} -{\mathbf{A}}^\top{\boldsymbol{\delta}}{\boldsymbol{\delta}}^\top{\mathbf{A}} \right) {\mathbf{t}}}
M_U\left({\mathbf{t}}^\top {\mathbf{A}}^\top{\boldsymbol{\delta}};{\boldsymbol{\nu}}\right).
\end{eqnarray*}
Upon using the uniqueness property of the moment generating function, the required result  is obtained.
\end{proof}

\begin{proof}[\bf Proof of  Theorem \ref{Thm2}.]  The moment generating function of ${\mathbf{X}}=\textbf{c}+{\mathbf{A}}^\top {\mathbf{Y}}$ can be written as
\begin{eqnarray*}
M_{{\mathbf{X}} }({\mathbf{t}})
&=&e^{{\mathbf{t}}^\top  \textbf{c}}M_{\mathbf{Y}}({\mathbf{A}}{\mathbf{t}})=
e^{ {\mathbf{t}}^\top {\boldsymbol{\xi}}_{{\mathbf{X}} }+
\frac{1}{2}{\mathbf{t}}^\top  \left({\mathbf{\Omega}}_{{\mathbf{X}} }- {\boldsymbol{\omega}}_{{\mathbf{X}} }{\boldsymbol{\delta}}_{{\mathbf{X}} } {\boldsymbol{\delta}}_{{\mathbf{X}} }^\top{\boldsymbol{\omega}}_{{\mathbf{X}} }\right){\mathbf{t}}} M_U\left({\mathbf{t}}^\top {\boldsymbol{\omega}}_{{\mathbf{X}} }{\boldsymbol{\delta}}_{{\mathbf{X}} } ;{\boldsymbol{\nu}}\right),
\end{eqnarray*}
which completes the proof.
\end{proof}

\begin{proof}[\bf Proof of  Theorem \ref{Thm3}.]
 We have introduced the $\mathcal{MMN}$ distribution by assuming ${\mathbf{\Omega}}>0$ through the factorization ${\mathbf{\Omega}}={\boldsymbol{\omega}} \overline{{\mathbf{\Omega}}}{\boldsymbol{\omega}}$. The matrix $\overline{{\mathbf{\Omega}}}$ is a positive definite non-singular matrix if and only if there exists some invertible (non-singular)  matrix ${\mathbf{C}}$ such that  $\overline{{\mathbf{\Omega}}}=\mathbf{C}^\top\mathbf{C}$. If ${\boldsymbol{\delta}}\neq \textbf{0}$, there exists an orthogonal
matrix $\mathbf{P}$ with the first column  being proportional to $\mathbf{C}\overline{{\mathbf{\Omega}}}^{-1}{\boldsymbol{\delta}}$, while for ${\boldsymbol{\delta}}=\textbf{0}$
we set $\mathbf{P}=\mathbf{I}_p$. Finally, define ${\mathbf{A}}_*=\left({\mathbf{C}}^{-1} \mathbf{P}\right)^\top{\boldsymbol{\omega}}^{-1}$. By using Theorem \ref{Thm2},
 we see that $\textbf{Z}^*={\mathbf{A}}_*({\mathbf{Y}}-{\boldsymbol{\xi}})$ has the stated distribution with ${\boldsymbol{\delta}}_{{\mathbf{Z}}^*}=({\mathbf{\delta}}_*, 0, \ldots, 0)^\top$.
 \end{proof}

\begin{proof}[\bf Proof of  Theorem \ref{Thm5-mode}.]
First, consider the mode of the
corresponding canonical variable $Z^*\sim \mathcal{MMN}_p({\boldsymbol{0}},{\mathbf{I}}_p, {\boldsymbol{\delta}}_{\textbf{Z}^*}; H)$. We find this mode  by solving the following equations with respect to $z_1, z_2, \ldots, z_p$:
\begin{eqnarray*}
\frac{\partial f_{Z_1^*}(z_1)}{\partial z_1}=0,~~~~~z_i f_{Z_1^*}(z_1)=0,~~~ i\in\{2,3, \ldots, p\}.
\end{eqnarray*}
The last $p - 1$ equations are fulfilled when $z_i= 0$, while the  root of the first equation corresponds to the mode, $m_0^*$ say, of the $\mathcal{MMN}_1(0,1,{\mathbf{\delta}}_*; H)$ distribution. Therefore, the mode of $\textbf{Z}^*$ is $\textbf{M}_0^*=(m_0^*, 0, \ldots, 0)^\top = \frac{m_0^*}{{\mathbf{\delta}}_*}{\boldsymbol{\delta}}_{\textbf{Z}^*}$. From Theorem  \ref{Thm3}, we can write ${\mathbf{Y}}={\boldsymbol{\xi}}+{\boldsymbol{\omega}}\mathbf{C}^\top \mathbf{P}\textbf{Z}^* $
and  ${\boldsymbol{\delta}}_{\textbf{Z}^*}=\mathbf{P}^\top \mathbf{C}\overline{{\mathbf{\Omega}}}^{-1}{\boldsymbol{\delta}}$. As  the
mode is equivariant with respect to affine transformations, the mode of $\textbf{Y}$ is
\begin{eqnarray*}
\textbf{M}_0={\boldsymbol{\xi}}+\frac{m_0^*}{{\mathbf{\delta}}_*}{\boldsymbol{\omega}}\mathbf{C}^\top \mathbf{P}{\boldsymbol{\delta}}_{\textbf{Z}^*}
={\boldsymbol{\xi}}+\frac{m_0^*}{{\mathbf{\delta}}_*}{\boldsymbol{\omega}}\mathbf{C}^\top \mathbf{P}\mathbf{P}^\top \mathbf{C}\overline{{\mathbf{\Omega}}}^{-1}{\boldsymbol{\delta}}
={\boldsymbol{\xi}}+\frac{m_0^*}{{\mathbf{\delta}}_*}{\boldsymbol{\omega}}{\boldsymbol{\delta}}.
\end{eqnarray*}
Hence, the  result.
\end{proof}

\section*{Appendix B. Computation of Different Measures of Skewness  }

\subsection*{B1. Mardia  Measure of Skewness}\label{Sec-Mardia}

Mardia \cite{Mardia (1970), Mardia (1974)} presented a multivariate measure of  skewness of an arbitrary $p$-dimensional distribution $F$ with mean vector ${\boldsymbol{\mu}}$ and covariance matrix $\mathbf{\Delta}$. Let
${\mathbf{X}} $  and ${\mathbf{Y}}$   be two independent and identically distributed random vectors
from  distribution  $F$. Then, the  measure of skewness is
\begin{eqnarray}
\beta_{1,p}&=&{\rm E}\left[\left\{({\mathbf{X}}-{\boldsymbol{\mu}})^\top {\mathbf{\Delta}}^{-1}({\mathbf{Y}}-{\boldsymbol{\mu}})\right\}^3\right],\label{Mardia-SK}
\end{eqnarray}
where ${\boldsymbol{\mu}}={\rm E}({\mathbf{X}})$ and  ${\mathbf{\Delta}}=\textrm{var}({\mathbf{X}})$.
Mardia  measure of  skewness is location and scale invariant (see \cite{Mardia (1970)}).  From  Theorems \ref{Thm2} and \ref{Thm3}, the $\mathcal{MMN}$ family is closed under affine transformations and have a canonical form. If ${\mathbf{X}}\sim \mathcal{MMN}_p({\boldsymbol{\xi}},{\mathbf{\Omega}}, {\boldsymbol{\delta}}; H)$,
there exists a linear transformation
$\textbf{Z}^*={\mathbf{A}}_*({\mathbf{Y}}-{\boldsymbol{\xi}})$ such that $\textbf{Z}^*\sim \mathcal{MMN}_p({\boldsymbol{0}},{\mathbf{I}}_p, {\boldsymbol{\delta}}_{\textbf{Z}^*}; H)$, where at most one component of ${\boldsymbol{\delta}}_{\textbf{Z}^*}$ is not zero.
Without loss of any generality, we take the first component of $\textbf{Z}^*$ to be skewed and denote it by $Z_1^*$,  and so for computing the measure in (\ref{Mardia-SK}), we can use the canonical form of the $\mathcal{MMN}$ family. Let ${\mathbf{X}}^*$ and ${\mathbf{Y}}^*$ be
two independent and identically distributed random vectors
from $\mathcal{MMN}_p({\boldsymbol{0}},{\mathbf{I}}_p, {\boldsymbol{\delta}}_{\textbf{Z}^*}; H)$. Now,
by using
${\boldsymbol{\mu}}^*={\rm E}({\mathbf{X}}^*)={\rm E}({\mathbf{Y}}^*)=  {\rm E}(U){\boldsymbol{\delta}}_{\textbf{Z}^*}$ and
${\boldsymbol{\Delta}}^*=\textrm{var}({\mathbf{X}}^*)=\textrm{var}({\mathbf{Y}}^*)={\mathbf{I}}_p+ \left(\textrm{var}(U)-1\right){\boldsymbol{\delta}}_{\textbf{Z}^*}{\boldsymbol{\delta}}_{\textbf{Z}^*}^\top$ in  (\ref{Mardia-SK}), the Mardia  measure of  skewness can be expressed as
\begin{eqnarray}\label{Mardia-skew}
\beta_{1,p}&=&{\rm E}\left[\left\{({\textbf{X}^*}-{\boldsymbol{\mu}}^*)^\top [{\boldsymbol{\Delta}}^*]^{-1}(\textbf{Y}^*-{\boldsymbol{\mu}}^*])\right\}^3\right] =\left({\rm E}\left[\frac{Z_1^*-\delta_*}{\sqrt{\textrm{var}(Z_1^*)}}\right]^3\right)^2 =(\gamma_1^*)^2,
\end{eqnarray}
where $\gamma_1^*$ is  the univariate skewness of  $Z_1^*\sim \mathcal{MMN}_1(0,1,{\mathbf{\delta}}_*; H)$ of the canonical form (see Theorem \ref{Thm3}).
An explicit formula of $\gamma_1^*$  can be found  in \cite{Negarestani et al. (2019)} for the univariate case.

\subsection*{B2. Malkovich-Afifi  Measure of Skewness}\label{Sec-Malkovich-Afifi-skew}

Malkovich and Afifi \cite{Malkovich and Afifi (1973)} proposed a measure of multivariate skewness as a different type of generalization of the univariate measure. By denoting the unit $p$-dimensional sphere by ${\phi}_p=\left\{{\mathbf{u}}\in \mathbb{R}^p; ||{\mathbf{u}}||=1  \right\}$, for ${\mathbf{u}}\in {\phi}_p$, the usual univariate measure of skewness in the ${\mathbf{u}}$-direction is
\begin{eqnarray}\label{malkovich1skew}
\beta_{1}({\mathbf{u}})&=&\frac{\left[ {\rm E}\left\{{\mathbf{u}}^\top({\mathbf{Y}}-{\rm E}({\mathbf{Y}}))\right\}^3  \right]^2}{\left[\textrm{var}({\mathbf{u}}^\top{\mathbf{Y}})\right]^3},
\end{eqnarray}
and so the Malkovich-Afifi multivariate extension of it is  defined as
\begin{eqnarray}\label{malkovich2skew}
\beta_1^{*}= \sup_{{\mathbf{u}}\in {\phi}_p}  \beta_{1}({\mathbf{u}}).
\end{eqnarray}

Malkovich-Afifi  measure of multivariate skewness is also location and scale invariant.
\cite{Malkovich and Afifi (1973)} then defined the measures in (\ref{malkovich1skew}) and  (\ref{malkovich2skew})  and showed that if ${\mathbf{Z}}$ is the standardized variable ${\mathbf{Z}}={\mathbf{\Delta}}^{-1/2}({\mathbf{Y}}-{\boldsymbol{\mu}})$, an equivalent
version is
$
\beta_1^{*}= \sup_{{\mathbf{u}}\in {\phi}_p} \left({\rm E}\left[({\mathbf{u}}^\top{\mathbf{Z}})^3   \right]  \right)^2.
$ 
For obtaining $\beta_1^{*}$ for the $\mathcal{MMN}$ family, it is convenient to use the canonical form.
If ${\mathbf{Y}} \sim  \mathcal{MMN}_p({\boldsymbol{\xi}}, {\mathbf{\Omega}}, {\boldsymbol{\delta}}; H)$,  there exists a linear transformation
$\textbf{Z}^*={\mathbf{A}}_*({\mathbf{Y}}-{\boldsymbol{\xi}})$ such that $\textbf{Z}^*\sim \mathcal{MMN}_p({\boldsymbol{0}},{\mathbf{I}}_p, {\boldsymbol{\delta}}_{\textbf{Z}^*}; H)$, where at most one component of ${\boldsymbol{\delta}}_{\textbf{Z}^*}$ is not zero.
This means that the Malkovich-Afifi index, which is the maximum of the univariate skewness  measures among all
the directions of the unit sphere, will be, for $\textbf{Z}^*$, the index of asymmetry in the only (if there is) skew direction (without loss
of any generality, we take the first component of $\textbf{Z}^*$ to be skewed and denote it by $Z_1^*$):
\begin{eqnarray}\label{beta1star}
\beta_{1}^*=\beta_{1}^*({\mathbf{u}})&=&\sup_{{\mathbf{u}}\in {\phi}_p} \frac{\left[ {\rm E}\left\{{\mathbf{u}}^\top({\mathbf{Y}}-{\rm E}({\mathbf{Y}}))\right\}^3  \right]^2}{\left[\textrm{var}({\mathbf{u}}^\top{\mathbf{Y}})\right]^3}=\frac{\left[ {\rm E}\left\{Z_1^*-{\rm E}(Z_1^*)\right\}^3  \right]^2}{\left[\textrm{var}(Z_1^*)\right]^3}=(\gamma_1^*)^2.
\end{eqnarray}

As in the case of Mardia index, we have used $\gamma_1^*$ to denote the univariate skewness measure of the unique (if any) skewed component of the canonical
form ${\mathbf{Z}}^*$.  As this measure is location and scale invariant, it is invariant for linear transforms and consequently (\ref{beta1star}) is also
the Malkovich-Afifi measure for ${\mathbf{Y}}$, and thus it is the same as the Mardia index in (\ref{Mardia-skew}).

\subsection*{B3. Srivastava  Measure of Skewness}\label{Sec-Srivastava-skew}

Using principal components ${\mathbf{F}}={\mathbf{\Gamma}} {\mathbf{Y}}$,  Srivastava \cite{Srivastava (1984)} developed  a measure of skewness for the  multivariate vector ${\mathbf{Y}}$, where ${\mathbf{\Gamma}}=({\boldsymbol{\gamma}}_1, \ldots, {\boldsymbol{\gamma}}_p)$ is the matrix of eigenvectors of the covariance matrix ${\mathbf{\Delta}}$, that is, an orthogonal matrix such that ${\mathbf{\Gamma}}^\top{\mathbf{\Delta}}{\mathbf{\Gamma}}=\mathbf{\Lambda}$, and $\lambda_1, \ldots, \lambda_p$ are the corresponding eigenvalues. Srivastava's measure of skewness  for ${\mathbf{Y}}$ may then be presented as
\begin{eqnarray}
\beta_{1p}^2&=&\frac{1}{p}\sum_{i=1}^{p} \left\{\frac{{\rm E}(F_i-\theta_i)^3}{\lambda_i^{3/2}}\right\}^2=\frac{1}{p}\sum_{i=1}^{p} \left\{\frac{{\rm E}[{\boldsymbol{\gamma}}_i^\top({\mathbf{Y}}-{\boldsymbol{\mu}})]^3}{\lambda_i^{3/2}}\right\}^2,\label{Sriv-skew}
\end{eqnarray}
where $F_i={\boldsymbol{\gamma}}_i^\top {\mathbf{Y}}$ and $\theta_i={\boldsymbol{\gamma}}_i^\top {\boldsymbol{\mu}}$.
The measure in (\ref{Sriv-skew})  is  based on central moments of third order ${\rm E}[{\boldsymbol{\gamma}}_i^\top({\mathbf{Y}}-{\boldsymbol{\mu}})]^3$.
For obtaining this measure for the $\mathcal{MMN}$ distribution, we only need to obtain the non-central moments up to third
order. Upon using the relations in (\ref{M3-M3bar})-(\ref{M3-AY}), we can obtain the third central moment by replacing ${\mathbf{A}}$ by ${\boldsymbol{\gamma}}_i$ in (\ref{noncentralzzEE}).

\subsection*{B4. M\'{o}ri-Rohatgi-Sz\'{e}kely    Measure of Skewness}\label{Sec-Mori-skew}

M\'{o}ri {\rm et al.} \cite{Mori et al. (1993)}  suggested a vectorial measure of skewness as a $p$-dimensional  vector.
If ${\mathbf{Z}}={\mathbf{\Delta}}^{-1/2}({\mathbf{Y}}-{\boldsymbol{\mu}})=(Z_1, \ldots, Z_p)^\top$ is the standardized vector, this measure
can be written in terms of coordinates of ${\mathbf{Z}}$ as
\begin{eqnarray}\label{Moriii-skew}
s({\mathbf{Y}})={\rm E}(\|{\mathbf{Z}} \|^2 {\mathbf{Z}})={\rm E}\left(({\mathbf{Z}}^\top{\mathbf{Z}}) {\mathbf{Z}}\right)=\sum_{i=1}^{p}{\rm E}\left(Z_i^2{\mathbf{Z}}\right) =\left(\sum_{i=1}^{p}{\rm E}\left(Z_i^2 Z_1\right), \ldots,\sum_{i=1}^{p} {\rm E}\left(Z_i^2 Z_p\right)    \right)^\top.
\end{eqnarray}

All the quantities involved in (\ref{Moriii-skew}) are specific non-central moments of third order of ${\mathbf{Z}}$. When ${\mathbf{Y}}$ has a multivariate $\mathcal{MMN}$ distribution, ${\mathbf{Z}}$ is still $\mathcal{MMN}$ distribution, and so we can use once again the expressions in Theorem  \ref{ThmLem2}. Now,   ${\mathbf{Z}}={\mathbf{\Delta}}^{-1/2}({\mathbf{Y}}-{\boldsymbol{\mu}})=(Z_1, \ldots, Z_p)^\top$ has the distribution $\mathcal{MMN}_p({\boldsymbol{\xi}}_{\mathbf{Z}},{\mathbf{\Omega}}_{\mathbf{Z}},{\boldsymbol{\delta}}_{\mathbf{Z}}; H)$.

Furthermore, upon replacing ${\mathbf{A}}$ by  ${\mathbf{\Delta}}^{-1/2}$ in
the third central moment in  (\ref{noncentralzzEE}), using  (\ref{M3-M3bar})-(\ref{M3-AY}), and  the moments in  Theorem \ref{ThmLem2},  we can compute  $s({\mathbf{Y}})$ in (\ref{Moriii-skew}).

\subsection*{B5. Kollo  Measure of Skewness }

Kollo \cite{Kollo (2008)} noticed that M\'{o}ri-Rohatgi-Sz\'{e}kely skewness measure $s({\mathbf{Y}})$  does not include all third-order mixed moments. To include all mixed
moments of the third order, he defined a skewness vector of ${\mathbf{Y}}$ as
\begin{eqnarray}\label{Kollo-skew}
b({\mathbf{Y}})&=&{\rm E}\left(\sum_{i,j}^{p}(Z_i Z_j){\mathbf{Z}}\right)=\left(\sum_{i,j}^{p}{\rm E}\left[(Z_i Z_j) Z_1\right], \ldots,\sum_{i,j}^{p}{\rm E}\left[(Z_i Z_j) Z_p\right]\right)^\top.
\end{eqnarray}
The required moments can be obtained from  Theorem \ref{ThmLem2} and the corresponding measure in (\ref{Kollo-skew})
can then be computed.

\subsection*{B6. Balakrishnan-Brito-Quiroz  Measure of Skewness}\label{Sec-Bala-skew}

When reporting the skewness of a univariate distribution, it is customary to indicate
skewness direction by referring to skewness 'to the left' (negative) or 'to the right' (positive).
It seems natural that, in the multivariate setting, one would also like to indicate
a direction for the skewness of a distribution.

Both Mardia and  Malkovich-Afifi measures give an overall view of skewness  measures without any specific reference to the direction of skewness. For
this reason, \cite{Balakrishnan et al. (2007)} modified the Malkovich-Afifi measure to produce an overall vectorial measure of
skewness as
\begin{eqnarray}\label{T-index}
{\mathbf{T}}=\int_{{\phi}_p} {\mathbf{u}}c_1({\mathbf{u}})d \lambda({\mathbf{u}}),
\end{eqnarray}
where $c_1({\mathbf{u}})={\rm E}\left[\left({\mathbf{u}}^\top{\mathbf{Z}} \right)^3 \right] $ is a signed measure of skewness of the standardized variable
${\mathbf{Z}}={\mathbf{\Delta}}^{-1/2}({\mathbf{Y}}-{\boldsymbol{\mu}})$ in the direction
of  ${\mathbf{u}}$, and $\lambda$ denotes the rotationally invariant probability measure on the unit  $p$-dimensional sphere
${\phi}_p=\left\{{\mathbf{u}}\in \mathbb{R}^p; ||{\mathbf{u}}||=1  \right\}$.

From  \cite{Balakrishnan et al. (2007)} and \cite{Balakrishnan and Scarpa (2012)}, it turns out that the computation of ${\mathbf{T}}$ is straightforward and, when the distribution of ${\mathbf{Y}}$ is absolutely continuous with respect to Lebesgue measure and symmetric
(in the broad sense specified below), it has, under some moment assumptions, a
Gaussian asymptotic distribution with a limiting covariance matrix, ${\mathbf{\Sigma}}_T$, that can be
consistently estimated from the $Z_i$ sample.

If $c_1({\mathbf{u}})$ is negative, it indicates skewness in the direction
of $-{\mathbf{u}}$, while ${\mathbf{u}}c_1({\mathbf{u}})$ provides a vectorial index of skewness in the ${\mathbf{u}}$ (or $-{\mathbf{u}}$) direction. Summation of these vectors over ${\mathbf{u}}$ (in the form of an integral) will then yield an overall vectorial measure of skewness presented earlier in (\ref{T-index}).

For obtaining a single measure, \cite{Balakrishnan et al. (2007)} proposed
the quantity ${\mathbf{Q}}={\mathbf{T}}^\top {\mathbf{\Sigma}}_{\mathbf{T}}^{-1} {\mathbf{T}}$, where ${\mathbf{T}}$ is as in (\ref{T-index}) and ${\mathbf{\Sigma}}_{\mathbf{T}}$ is the covariance matrix of ${\mathbf{T}}$.
However, the covariance matrix  ${\mathbf{\Sigma}}_{\mathbf{T}}$  depends on the moments of sixth order. Sixth order moments in this family are not in explicit form, and so as done in \cite{Balakrishnan and Scarpa (2012)}, by replacing ${\mathbf{\Sigma}}_{\mathbf{T}}$ by ${\mathbf{\Sigma}}_{\mathbf{Z}}$, we obtain ${\mathbf{Q}}^*={\mathbf{T}}^\top {\mathbf{\Sigma}}_{\mathbf{Z}}^{-1} {\mathbf{T}}$, to provide a
reasonable measure of overall  skewness.

In the following, evaluation of  ${\mathbf{T}}$  using the integrals of some monomials over the unit sphere ${\phi}_p$ are required. From \cite{Balakrishnan et al. (2007)}, let  $u_j$ be the $j$-th coordinate of a point ${\mathbf{u}}\in{\phi}_p$. Then, the  values of the
integrals
\begin{eqnarray*}
J_4=\int_{{\phi}_p} u_j^4 d\lambda({\mathbf{u}})=\frac{3}{p(p+2)},~~~~~~~J_{2,2}=\int_{{\phi}_p} u_j^2 u_i^2 d\lambda({\mathbf{u}})=\frac{1}{p(p+2)},
\end{eqnarray*}
for $j \neq i, 1 \leq j, i \leq p$, are obtained using Theorem 3.3 of \cite{Fang et al. (1990)}. We see that the  above
integrals do not depend on the particular choices of $j$ and $ i$. Therefore, the $r$-th
coordinate of ${\mathbf{T}}$  is simply
$
{\mathbf{T}}_r=J_4 {\rm E}\left(Z_r^3\right)+ 3 \sum_{i\neq r} J_{2,2} {\rm E}\left(Z_i^2 Z_r\right).
$ 
So, we must obtain the moments $ {\rm E}\left(Z_r^3\right)$ and ${\rm E}\left(Z_i^2 Z_r\right)$. The required moments can be obtained as
$
{\rm E}\left(Z_i^3\right)={\mathbf{M}}_3^{\mathbf{Z}}[(i - 1)p + i, i] \mbox{~ and  ~~}
{\rm E}\left(Z_i^2 Z_j\right)={\mathbf{M}}_3({\mathbf{Z}})[(i - 1)p + i, j],
$ 
where  ${\mathbf{M}}_3({\mathbf{Z}})[., .]$ denotes the elements of  ${\mathbf{M}}_3({\mathbf{Z}})$, third moment of the     $\mathcal{MMN}_p({\boldsymbol{\xi}}_{\mathbf{Z}},{\mathbf{\Omega}}_{\mathbf{Z}},{\boldsymbol{\delta}}_{\mathbf{Z}}; H)$   distribution.
Upon using the above moments, we can obtain the elements of ${\mathbf{T}}$ as follows:
\begin{eqnarray}\label{Trrr}
{\mathbf{T}}_r=\frac{3}{p(p+2)} {\rm E}\left(Z_r^3\right)+ 3 \sum_{i\neq r} \frac{1}{p(p+2)} {\rm E}\left(Z_i^2 Z_r\right).
\end{eqnarray}

\subsection*{B7. Isogai Measure of Skewness}\label{Sec-Isogai-skew}

Isogai \cite{Isogai (1982)} considered an overall extension of Pearson measure of skewness to a
multivariate case in the form
\begin{eqnarray*}
S_I=\left({\boldsymbol{\mu}}-{\mathbf{M}}_0\right)^\top g^{-1}\left({\mathbf{\Delta}}\right)\left({\boldsymbol{\mu}}-{\mathbf{M}}_0\right),
\end{eqnarray*}
where ${\boldsymbol{\mu}}$, ${\mathbf{M}}_0$ and  ${\mathbf{\Delta}}$ are the mean, mode and  the covariance matrix of $\textbf{Y}$, respectively. The function  $g\left({\mathbf{\Delta}}\right)$ is an  ``appropriate''  function of the covariance matrix. To derive this measure of skewness, we need to obtain the mode of the $\mathcal{MMN}$ distribution, but the uniqueness
of the mode for the family of mean mixture of normal distributions is an open problem. For obtaining this measure, we choose $g(.)$ to be the identity function. This measure is location and scale invariant, and  so by using   the canonical form of  the  $\mathcal{MMN}$ distribution, we get
\begin{eqnarray*}
S_I=\frac{\left[{\mathbf{\delta}}_* {\rm E}(U)-m_0^* \right]^2}{1+{\mathbf{\delta}}_*^2[\textrm{var}(U)-1])},
\end{eqnarray*}
where ${\mathbf{\delta}}_*=\left({\boldsymbol{\delta}}^\top\overline{{\mathbf{\Omega}}}^{-1} {\boldsymbol{\delta}}\right)^{1/2}$ and $m_0^*$
is the mode of the single scalar $\mathcal{MMN}$ distribution in the canonical form. This index is essentially the Mahalanobis distance between the null vector and the vector ${\rm E}(\textbf{Y})-{\mathbf{M}}_0$, and it is   indeed  location and scale invariant.
Another vectorial measure has been given by \cite{Balakrishnan and Scarpa (2012)} as $S_C={\boldsymbol{\omega}}^{-1}\left({\boldsymbol{\mu}}-{\mathbf{M}}_0\right)$, which is a natural choice to characterize the direction of the asymmetry of the  multivariate $\mathcal{SN}$ distribution. Using the same reasoning for the $\mathcal{MMN}$ distribution, we can consider
$S_C=\left({\rm E}(U)-\frac{m_0^*}{{\mathbf{\delta}}_*}\right){\boldsymbol{\delta}}$,
and so, the direction of ${\boldsymbol{\delta}}$ can be regarded as a measure of vectorial skewness for the  $\mathcal{MMN}$ distribution.

\section*{References}

\end{document}